\DeclareMathOperator{\E}{\mathbb{E}}
\newcommand{\QualSetSize}{L}
\newcommand{\AUTHUTIL}{U^{(a)}}
\newcommand{\ubar}[1]{\underaccent{\bar}{#1}}
\providecommand{\PROB}{\textrm {Prob}}
\providecommand{\Prob}[2][]{%
\ifthenelse{\equal{#1}{}}{\PROB[#2]}{\PROB_{#1}[#2]}}
\providecommand{\ProbC}[3][]{\Prob[#1]{#2\;|\;#3}}
\providecommand{\half}{\ensuremath{\frac{1}{2}}\xspace}
\providecommand{\SET}[1]{\ensuremath{\{ #1 \}}\xspace}
\providecommand{\Set}[2]{\ensuremath{\SET{#1 \mid #2}}\xspace}
\providecommand{\Expect}[2][]{\ensuremath{%
\ifthenelse{\equal{#1}{}}{\mathbb{E}}{\mathbb{E}_{#1}}%
\left[#2\right]}\xspace}
\providecommand{\ExpectC}[3][]{\Expect[#1]{#2\;|\;#3}}
\DeclareSymbolFont{AMSb}{U}{msb}{m}{n}
\DeclareMathSymbol{\N}{\mathord}{AMSb}{"4E}
\DeclareMathSymbol{\R}{\mathord}{AMSb}{"52}
\DeclareMathSymbol{\Z}{\mathord}{AMSb}{"5A}
\newcommand{\Qual}{\ensuremath{Q}\xspace}
\newcommand{\qual}{\ensuremath{q}\xspace}
\newcommand{\QualSet}[1][]{\ensuremath{%
\ifthenelse{\equal{#1}{}}{\mathcal{Q}}{\mathcal{Q}^{#1}}}\xspace}
\newcommand{\QualDist}{\ensuremath{\BFp}\xspace}
\newcommand{\QualDens}[1]{\ensuremath{p(#1)}\xspace}
\newcommand{\QualProb}[1]{\ensuremath{p_{#1}}\xspace}
\newcommand{\SigSet}{\ensuremath{\Sigma}\xspace}
\newcommand{\AuthSig}{S^{(a)}}
\newcommand{\REVSIG}{s}
\newcommand{\REVSIGP}{s'}
\newcommand{\RevSig}[2][]{\ensuremath{%
\ifthenelse{\equal{#1}{}}{s_{#2}}{s_{#1,#2}}}\xspace}
\newcommand{\RevSigP}[2][]{%
\ifthenelse{\equal{#1}{}}{s'_{#2}}{s'_{#1,#2}}}
\newcommand{\RevSigV}[1][]{\ensuremath{%
\ifthenelse{\equal{#1}{}}{\BFs}{\BFs_{#1}}}\xspace}
\newcommand{\RevSigVP}[1][]{\ensuremath{%
\ifthenelse{\equal{#1}{}}{\BFs'}{\BFs'_{#1}}}\xspace}
\newcommand{\RevSigVH}[1][]{\ensuremath{%
\ifthenelse{\equal{#1}{}}{\hat{\BFs}}{\hat{\BFs}_{#1}}}\xspace}
\newcommand{\AuthSigDist}[1][]{\BFalpha_{#1}}
\newcommand{\AUTHSIGPROB}{\ensuremath{\alpha}\xspace}
\newcommand{\RevSigDist}[1][]{\ensuremath{\BFbeta_{#1}}\xspace}
\newcommand{\RevSigDistP}[1][]{\ensuremath{\BFbeta'_{#1}}\xspace}
\newcommand{\REVSIGPROB}{\ensuremath{\beta}\xspace}
\newcommand{\RevSigProb}[2][]{\ensuremath{\ifthenelse{\equal{#1}{}}{\REVSIGPROB(#2)}{\REVSIGPROB_{#1}(#2)}}\xspace}
\newcommand{\REVSIGPROBP}{\beta'}
\newcommand{\RevSigProbP}[2][]{\ifthenelse{\equal{#1}{}}{\REVSIGPROBP(#2)}{\REVSIGPROBP_{#1}(#2)}}
\newcommand{\CONFUTIL}{U^{(c)}}
\newcommand{\CONFUTILH}{\hat{U}^{(c)}}
\newcommand{\QualDistTilde}{\tilde{\BFp}}
\newcommand{\RevSigDistTilde}{\tilde{\BFbeta}}
\newcommand{\REVNOISEDIST}[1][]{\ifthenelse{\equal{#1}{}}{F^{(r)}}{F^{(r)}_{#1}}}
\newcommand{\RevNoiseDist}[2][]{\REVNOISEDIST[#1]\left(#2\right)}
\newcommand{\NumReviews}[1][]{\ensuremath{%
\ifthenelse{\equal{#1}{}}{m}{m_{#1}}}\xspace}
\newcommand{\SigCount}[2][]{%
\ifthenelse{\equal{#1}{}}{c_{#2}}{c_{#1,#2}}}
\newcommand{\SigCountV}[1][]{%
\ifthenelse{\equal{#1}{}}{\BFc}{\BFc_{#1}}}
\newcommand{\SigCountSpace}[1][]{%
\ifthenelse{\equal{#1}{}}{\mathcal{C}}{\mathcal{C}_{#1}}}
\newcommand{\ACCMAP}[1][]{\ensuremath{%
\ifthenelse{\equal{#1}{}}{\phi}{\ifthenelse{\equal{#1}{'}}{\phi'}{\phi_{#1}}}}\xspace}
\newcommand{\AccMap}[2][]{\ensuremath{\ACCMAP[#1](#2)}\xspace}
\newcommand{\THRESH}[1][]{%
\ifthenelse{\equal{#1}{}}{\theta}{\theta_{#1}}}
\newcommand{\ACCP}{P_{\text{acc}}}
\newcommand{\AccP}[2]{\ensuremath{\ACCP(#1,#2)}\xspace}
\newcommand{\PaperReviews}{\ensuremath{R}\xspace}
\newcommand{\PaperReviewsH}{\ensuremath{\hat{R}}\xspace}
\newcommand{\NumSubmitDist}{\ensuremath{\BFb}\xspace}
\newcommand{\NumSubmitProb}[1]{\ensuremath{b_{#1}}\xspace}
\newcommand{\NumNewPapers}{\ensuremath{K}\xspace}
\newcommand{\TD}{\ensuremath{\eta}\xspace}
\newcommand{\ConfValue}{\ensuremath{V}\xspace}
\newcommand{\AccRate}{\ensuremath{\gamma}\xspace}
\newcommand{\Gaussian}[2]{\mathcal{N}(#1,#2)}
\begin{document}
\maketitle

\section{Introduction}
Conferences play an important role in the publication and {scientific dissemination process in computer science}. They aim to provide attendees with access to high-quality and recent research results (in addition to networking opportunities), and authors view the ability to publish and share their recent results at high-quality venues as conferring scientific credibility and thus status. In order to do so, conferences rely on significant volunteer work from the community, most notably in reviewing large numbers of submitted papers to evaluate their scientific merit.

While the conference publication process seems to have served the community fairly well overall --- in particular enabling a high speed of dissemination of scientific results --- different members of the community often see significant room for improvement: authors often feel that their submitted papers are not evaluated sufficiently competently, while conference attendees sometimes find the program diluted with less interesting work.
Attempts have been made by conferences to mitigate such concerns, one of which is to increase the number of reviews assigned to each submission.
For example, nowadays, it is no longer rare for top computer science conferences to assign five or more reviewers to one paper.
However, this approach increases the review burden; indeed, many members of the scientific community now feel overloaded with conference reviewing requests. The approach has the potential to lead to a vicious cycle: the substantial increase in peer review workload, which is further exacerbated by the growth of the research community, may lead to lower-quality reviews, in turn leading to more resubmissions of the same papers and thus a higher review burden.

Our high-level goal is to understand how the parameters of the system, together with the conference's review policy decisions, affect the tradeoff between the conference quality, the review burden on the community, and the welfare of the authors.
In particular, the tradeoffs between conference quality and each of the other two dimensions are particularly interesting.%
\footnote{We note that the three parties in the peer review system are usually the same researchers serving in different roles. Consequently, the tradeoffs among their utilities can be viewed as tradeoffs among different dimensions of the research field.}
Given a fixed number and quality distribution of submissions, the conference's quality is increased by accepting more good and fewer bad papers.
Distinguishing between good and bad papers more accurately requires more reviews, along with enticing self-selection by authors.

We model the conference review process as a Stackelberg game between a top conference and the (homogeneous and strategic) authors (described in detail in \cref{sec:model}).
The authors' papers have qualities (positive or negative) drawn from a commonly known distribution over a (finite or infinite) set; however, both the conference's reviewers and (in some of our analysis) the authors themselves only obtain imperfect signals about the quality of a paper.
The conference's quality is the sum of qualities of the accepted papers,  normalized by the total number of papers.
The conference commits to a review and acceptance policy, which prescribes how many independent reviewers are assigned to each paper, and what the criteria for acceptance are.
In \cref{sec:memory_policy}, we also study policies with ``institutional memory,'' including policies limiting the number of times a paper can be resubmitted, and policies requiring past reviews to be included with resubmissions.

In response to the conference's policies, the authors face a binary decision in each round: submit to the prestigious conference or opt for a safer outside option. Submission to the top conference offers the potential for high utility upon acceptance (proportional to the average quality of accepted papers), but yields zero utility if the submission is rejected.
In contrast, the outside option --- such as a second-tier conference or arXiv --- provides guaranteed but smaller positive utility. 
The utility of acceptance to either venue is exponentially time-discounted in the number of resubmission rounds, modeling that authors prefer timely acceptance of their result. 
Authors best-respond, i.e., choose a utility-maximizing action, based on their private (potentially noisy) signal about their paper's quality as well as the historical reviews collected when the paper was rejected in previous rounds. 
We primarily focus on the following high-level questions: 

\begin{itemize}
    \item The fact that rejected papers can be resubmitted means that weaker papers may be eventually accepted if authors are patient enough. What is the impact of resubmissions on the gap between a conference's intended acceptance threshold and the de facto distribution of paper qualities at the conference?
    \item What is the range of Pareto optimal review and acceptance policies with regard to the tradeoff between conference quality and review burden?  How is it impacted by authors' patience, the number of reviews per paper, and/or review quality?
    \item How does the conference's acceptance policy relate to its acceptance rate? This is non-obvious, as a strict policy may lead to a lot of self-selection on the part of the authors, and thus to a \emph{higher} acceptance rate.
    \item How does the number of reviews per round affect the overall review burden? What is the tradeoff between review quality and quantity, i.e., how many noisy reviews approximate one high-quality review?
    \item How helpful is institutional memory? Can the conference significantly lower the review burden or increase quality by limiting the number of resubmissions, treating resubmitted papers differently, or requiring past reviews to be included?
\end{itemize}

To ensure robustness of our results, we investigate these questions under different models regarding the quality distributions of papers and distributions of review noise, and the information authors have about the quality of their own work.
In the main body of the paper, we focus on settings with noiseless authors who possess perfect information about their own paper quality. 
We consider both a continuous model where the paper quality follows a continuous distribution and reviews are continuous with additive noise, and a categorical model with several discrete paper qualities and a discrete scale of review scores.
Authors do not learn new information from reviews, so each author's decision is either to submit her paper until it is accepted, or to immediately take the outside option. The simplicity of this model, due to the authors' binary best response, allows us to prove theoretical results and draw intuitive conclusions.

In the appendix, we consider agent-based modeling under a categorical model, where authors only observe noisy signals about their papers' qualities. This model allows for more realistic modeling of real-world conferences, and we estimate the parameters from publicly available review data for the ICLR 2020 conference \citep{iclr2020review}. Our agent-based simulations, built on both this detailed categorical model and a simplified binary variant, serve two purposes. First, they provide robustness checks for the theoretical insights derived from our continuous model. Second, they create a flexible platform for investigating sophisticated acceptance policies that incorporate institutional memory and other complex features difficult to analyze theoretically.

While the models are necessarily a simplification, and one should therefore be cautious of directly basing concrete decisions on the results, we believe that the fundamental insights derived from our analysis (summarized below) remain robust across a reasonable range of variants of our model. We thus envision that our theoretical results (and simulations) can steer the discussion, uncover parameters to focus on, and inform decision makers in practice.

\subsection{Summary of Results}

In the main body, we focus on the setting with noiseless authors, where we can obtain strong theoretical results.  The appendix contains proofs, some further details, and additional results, largely using simulations on models estimated by real data, that both illustrate the robustness of our theoretical results and explore more complex settings.

\vspace{2mm}
\emph{Noiseless Authors: Theoretical Results.}
We first assume that authors know the quality of their manuscripts, but reviewers only obtain noisy signals of the quality. 

We first explain a counter-intuitive phenomenon which we call the \emph{resubmission paradox}: real-world conferences typically have an acceptance rate of around 25\%; yet, many rejected papers are accepted later at other (or even the same) prestigious venues \citep{cortes2021inconsistency}.
Therefore, it is sometimes argued that conferences should accept every paper that is ``above the bar'', by lowering the acceptance threshold. Our model can help explain this counterintuitive phenomenon. Because every submitted paper will be resubmitted until accepted (since the authors learn no new information from the reviews), the conference's \emph{acceptance threshold} induces a \emph{de facto threshold}: a manuscript quality above which every paper will be eventually accepted and below which no manuscript will be submitted. The acceptance threshold and de facto threshold are typically different, sometimes significantly so (in particular when the authors are very patient, the noise of reviews is large, or the conference is very selective, resulting in a large utility upon acceptance) --- we call the difference the \emph{resubmission gap}. 
In most situations, if a conference were to naively set an acceptance threshold identical to their ideal de facto threshold, it would ultimately accept substandard papers, in rounds in which enough reviewers had noisy reviews that ended up too high. We call this phenomenon the \emph{resubmission paradox.} Instead, the conference should select an acceptance threshold higher than the desired threshold of conference acceptances.  However, the resulting ``optimal'' is then counter-intuitive: every paper that is submitted is eventually accepted, yet each round, many of the papers are rejected.

Second, we consider the relationship between the conference quality, the review burden (the total number of reviews per paper throughout its resubmission process), and the authors' welfare. 
Both review burden and author welfare favor inclusive, lenient policies with high acceptance rates, since resubmissions increase review load and reduce authors' utility through discounting. In contrast, maximizing conference quality requires setting an appropriate threshold to ensure that only positive-quality papers are ultimately accepted. This tension gives rise to a \emph{quality-burden tradeoff (QB-tradeoff)}. We show that whether the Pareto frontier of this tradeoff is achieved through lenient or stringent acceptance thresholds depends on the distribution of paper quality and the level of review noise. Stringent policies tend to dominate when low-quality papers are prevalent and reviews are less noisy. 


Third, we explore how the system parameters affect the utilities of the three stakeholders in the review system. We first show that a better review quality --- in the sense of Blackwell-dominance --- can benefit all parties:  the conference can attain higher quality with a lower review burden and increased author welfare. This result holds broadly when the system with better review quality is allowed to adopt any acceptance policy, including non-monotone ones that do not necessarily favor papers with more positive reviews. If we restrict attention to threshold acceptance policies --- a more natural and interpretable class --- the result continues to hold under the additional assumption that review signals satisfy a monotone likelihood ratio. 
We also analyze the impact of author patience. We show that, conditioned on the same conference quality, systems with more patient authors require a higher review burden but do not necessarily yield higher author welfare. This highlights a potential risk: policies that reduce the cost of resubmissions may unintentionally reduce the utility of at least one stakeholder, including the authors themselves.

 
Fourth, we examine factors influencing the conference acceptance rate. When the review noise is small, the relationship between the acceptance threshold and acceptance rate depends on two key quantities: the acceptance probability of borderline papers and a measure resembling the \emph{hazard rate} of the prior distribution over paper qualities. We show that raising the acceptance threshold lowers the acceptance probability of borderline papers, and thereby reduces the acceptance rate. 
\dkcomment{In the following three sentences, we write ``hazard rate''. Do we mean ``hazard rate'' or our quantity?}
If the quality prior has an increasing hazard rate --- typical for thin-tailed distributions such as the Gaussian --- then a higher threshold increases the hazard rate, which further reduces the acceptance rate. However, for priors with non-monotone hazard rate, the relationship may be non-monotonic, and a stricter threshold can sometimes \emph{increase} the acceptance rate, such as when many borderline papers (which might otherwise be rejected multiple times) cease to be submitted in the first place. Intuitively, for quality distributions with non-monotone hazard rates, raising the acceptance threshold may sharply reduce the proportion of borderline papers, which are major contributors to the review burden through many rounds of resubmissions. We confirm the robustness of these patterns beyond the small-noise setting through simulations.

\vspace{2mm}
\emph{Real-Data-Estimated Categorical Model with Noisy Authors}  
We further examine the robustness of our earlier findings using a model estimated from real data, in a setting where authors are uncertain about their paper's quality and can learn about it through reviewer feedback (\cref{sec:noisy_abm}).
Our results broadly align with the theoretical insights but also reveal new dynamics introduced by the noisy author assumption. In particular, because authors’ self-selection is no longer perfect, the maximum achievable conference quality varies across different settings. We find that increasing the number of reviews per paper can improve the maximum conference quality but at the cost of a higher review burden. In contrast, enhancing review quality raises the maximum achievable quality while requiring a smaller review burden. This result further emphasizes the importance of review quality over quantity.

\vspace{2mm}
\emph{Memory in the System.}
Another popular type of proposal is to give the system more memory, either by limiting the number of resubmissions of the same manuscript or by reusing reviews. 
(Another important reason for requiring the inclusion of prior reviews --- not modeled here --- is that it lets the conference ascertain that specific concerns from earlier versions have been addressed.)
Our first result shows that limiting the number of resubmissions reduces the overall review burden but also lowers the maximum achievable conference quality. Empirically, we further find that retaining historical reviews across resubmissions allows the conference to maintain (or slightly improve) quality while reducing the review burden. However, this effect can be marginal. Thus, the primary benefit of reusing reviews may be to track how papers are revised --- an objective outside the scope of this study. It should be noted, though, that our analysis here is preliminary and only carried out for the binary model due to computational constraints. 

Our models necessarily abstract away several other aspects of the conference submission ecosystem, which would also be worth investigating.  
These limitations are discussed in more depth in \cref{sec:limitations}.

\subsection{Related Work}
\label{sec:related-work}

\gscomment{Here is another paper that Misha found, we should probably acknowledge him actually, it also has the insite the adding friction often helps things.  https://journals.plos.org/plosone/article?id=10.1371/journal.pone.0246675}

Not surprisingly, given the importance of peer review in science, several attempts have been made by different research fields to simulate, understand, and improve the process.
When considering the systems level, agent-based models (ABMs) have been one of the techniques of choice.
The review article by \citet{feliciani2019scoping} gives a fairly comprehensive summary of this line of work. It suggests several general themes to focus on in models: editorial strategies, matching submissions with reviewers, decision making, biases and calibration, and comparisons of alternative peer review systems. 

Among the more prominent works using ABMs are those by \citet{kovanis2016complex,kovanis2017evaluating}. 
\citet{kovanis2016complex} propose a model for a holistic study of the scientific publication ecosystem\textemdash this model includes the acquisition of resources (such as status) by authors, which can be leveraged into future papers.  Their subsequent work \citep{kovanis2017evaluating} builds on these models and implementations to evaluate several alternative systems for peer review.
These models and results differ from ours in several key dimensions: the authors are not strategic, they do not focus on fine-grained policies by journals (or in our case, conferences), and due to the holistic nature and complexity of the model, the model is only amenable to simulation, but not analytically tractable.

Two papers by \citet{bianchi2018peer} and \citet{squazzoni2012saint} also use agent-based modeling approaches. They particularly focus on the fact that researchers must decide how to divide their time between writing and reviewing papers, and investigate (experimentally) the impact of various policies on the efficiency of peer review.
Similarly, \citet{thurner2011peer} and \citet{d2017can} use agent-based models to investigate a specific aspect of peer review, namely, selfish behavior on the part of referees, who may not have incentives to see other strong work published.

\citet{allesina2012modeling} also uses agent-based modeling, in this case, to understand the impact of different high-level approaches (editorial desk rejects, bidding on papers, etc.) on the overall reviewing load.
\citet{Roebber2011PeerReview} use agent-based modeling to evaluate strategies for program officers of funding agencies. One of their findings is similar to ours: that requiring unanimous support for accepting a proposal (i.e., setting a high threshold) can discourage authors from submitting many proposals, thus lowering the review burden.

A more analytical approach is taken by \citet{smith2021accept}. Here, the authors are also interested in the impact of self selection on the acceptance rate of a journal (or university). They study a system with multiple journals or universities announcing different thresholds in the presence of noisy reviews. Due to their motivation, their model does not appear to account for resubmissions, thus differing from our work in a key aspect.

A model for the journal peer review process was proposed and analyzed by \citet{azar2015model}. 
In this model, the authors also know their paper’s quality, referees observe noisy signals, and editors set an acceptance threshold in equilibrium. Similar to one of our insights in \cref{sec:dominating-value-discount}, a key takeaway from this work is that higher submission costs, while discouraging authors, can improve journal quality by deterring low-quality submissions. A key distinction in our model is that authors may resubmit the same paper multiple times to the same venue, reflecting the difference between the typical scenarios for conferences and journals. This feature gives rise to a broader range of design dimensions, such as the QB-tradeoff.

Several other works do more basic theoretical analysis of the impact of conference policies. In particular, they focus on the false positives (accepting bad papers) and false negatives (rejecting good papers) arising as a function of the number of reviewers and their individual qualities.
Based on such calculations, \citet{herron2012expert} suggests that obtaining a large number of low-quality reviews may be better than a small number of expert reviews.  
\citet{neff2006peer} focus in particular on the role of desk rejects by an expert editor.

In response to concerns by authors about the evaluation of their work when submitted to conferences or journals, the scientific community in general, and CS community in particular, has engaged in significant self-evaluation efforts. Many of these have focused on the quality and consistency of reviews provided to conferences \citep{shah2018design,cicchetti1991reliability,ebel1951estimation}.
\citet{cole1981chance} and \citet{tran2020open} study the reproducibility and randomness in review scores and acceptance decisions.
Furthermore, the community has experimented with (or at least suggested) different formats, including increasing the number of reviews per paper, multi-level or multi-stage evaluation processes, having reviews from past submissions follow a paper upon resubmission, and many others \citep{jecmen2020mitigating,noothigattu2018choosing,shah2018design,rogers2020can}.
Many of these approaches appear primarily driven by concern for authors and their desire for accurate evaluation of their submission, though some of them are also part of our evaluation.

Other attempts that try to mitigate the overwhelming demand for reviewers consider solutions based on mechanism design. 
\citet{srinivasan2021auctions} combine a bidding system and peer prediction to simultaneously incentivize high-quality reviews and high-quality submissions. \citet{su2021best} designs a mechanism that elicits ranking information truthfully from the authors, which is proven to empirically benefit the conference's quality.

In our idealized model, we assume that all reviews are i.i.d., that is, reviewers are subject to the same noise distribution. Naturally, this is a simplification. In reality, one of the difficulties faced by conferences and journals is how to aggregate the scores from reviewers with possibly very different scales or expectations.
Indeed, such aggregation is a well-known fundamental problem in statistics \citep{ammar2012efficient,cook2007creating,wang2018your}.

Peer review can also be viewed through the lens of a principal-agent problem: the principal decides on the review process and the acceptance rule, and the agents respond.
Besides peer review, related applications include admitting college students \citep{kannan2021best} and recruiting faculty \citep{zhang2021classification}, or endorsing a product \citep{gill2012optimal,lerner2006model}.
In particular, \citet{gill2012optimal,lerner2006model} study models in which the agent (such as an author) can choose among venues one that maximizes the expected utility, as determined by the chance of success and the prestige.

\section{Model}\label{sec:model}
We consider a process of a (large) group of \emph{authors} each submitting a paper to a prestigious \emph{conference}.
\dkdeletecomment{too much detail here? Also mentioned in discussion later, I think.}{This single conference can refer to multiple ``equivalent'' conferences, e.g., ICML/NeurIPS or AAAI/IJCAI or STOC/FOCS.}
We model the submission-reviewing process as a multi-round game \dkedit{between two types of players: the conference and the authors. 
First, the conference commits to an acceptance policy, which maps vectors of reviews to a decision to accept or reject a paper.
Subsequently, }
in each round, each author decides to submit the paper either to the \dkdelete{prestigious} conference or take the outside option. 
Upon submission, the \dkdelete{prestigious} conference \dkreplace{will send the paper out for review and, based on the reviews,}{obtains reviews and applies its acceptance policy to} decide to accept or reject.
If the paper is rejected from the prestigious conference, the author sees the reviews, and faces the same decision problem in the next round. 
\dkedit{More precisely, \fangedit{given parameters $(\ACCMAP, \NumReviews, \RevSigDist, \QualDist, \AuthSigDist, \TD)$, }\fangcomment{do we introduce $\NumNewPapers$?}the game proceeds as follows:
\dkcomment{I am not sure I like Fang-Yi's edits, adding the notation. I think that the roles of the parameters are pretty clear, and the overview is supposed to contextualize the later detailed subsections. I don't think we need to be fully formal here and psecify the distributions etc.}
\begin{enumerate}
    \item The conference commits to a number $\NumReviews$ of reviews that will be solicited, and an \emph{acceptance policy} $\ACCMAP$, which maps vectors of $\NumReviews$ reviews to an outcome from $\SET{\text{accept}, \text{reject}}$.
    Acceptance policies are discussed in detail in \cref{sec:model-acc-policy}.
    \item The author's paper quality $\Qual=\qual$ is realized from a commonly known distribution $\QualDist$; assumptions on the distribution are discussed in \cref{sec:paper-qualities-signals}.
    \item The authors observes a (possibly noisy) signal $\AuthSig$ about $\qual$ \fangedit{from $\AuthSigDist$}, as detailed in \cref{sec:paper-qualities-signals}.
    \item Based on the signal, the conference's policy, and past reviews (if any), the author decides whether to submit the paper to the conference or take the outside option. The decision process is detailed in \cref{sec:author-decision}.
    \label{enum:author-decision}
    \item If the paper was submitted, the conference obtains $\NumReviews$ reviews. These reviews are signals $\RevSig{i}$ drawn i.i.d.~from a commonly known family of distributions $\RevSigDist[q]$ conditioned on the true but unknown quality $q$ of the paper.
    \item The conference applies the acceptance policy $\ACCMAP$ to the vector of reviews to determine an outcome from $\SET{\text{accept}, \text{reject}}$.
    \item If the paper was submitted and rejected, the game returns to step (\ref{enum:author-decision}). Otherwise, the game ends.
 \end{enumerate}

 If the paper is accepted to the conference after $t$ prior rounds of rejection, the author's utility is $\TD^t \cdot \ConfValue$, where $\TD$ is the commonly known author time discount factor, and $\ConfValue$ is the (endogenous) conference prestige, determined by the qualities of accepted papers. If the author takes the outside option after $t$ prior rounds of rejection, her utility is $\TD^t$. The author's utility is discussed in more detail in \cref{sec:author-decision}.

 The conference's utility/quality is 1 plus the average quality of papers, averaged over all rounds, in the limit as the number of rounds grows large. The conference utility is discussed in detail in \cref{sec:model-acc-policy}.

 In addition to the conference and authors, we are also interested in the utility/cost of the reviewers, which is simply the total number of reviews per conference/paper. The concept of review burden, as well as the tradeoffs we study between the different parties' utilities, is discussed in detail in \cref{subsec:number-reviews}. We emphasize that while we are interested in the utility of the reviewers, they have no agency in our game-theoretic formulation: only the conference and authors make decisions.
}

\dkdeletecomment{now subsumed in the overview}{We will discuss the utilities of three parties: the conference committee, reviewers, and authors. However, in the game, only two of the main agents actually make decisions: the author of each paper and the prestigious conference;  the reviewers simply provide reviews.}

\subsection{\dkedit{Paper Qualities and Signals}}
\label{sec:paper-qualities-signals}

Each paper has a quality $\Qual$ drawn i.i.d.~from a commonly known prior paper quality distribution $\QualDist$ over the set of possible qualities. The support of $\QualDist$, i.e., the set of all possible paper qualities, is denoted by $\QualSet \subseteq \mathbb{R}$, and larger qualities correspond to better papers.
Without loss of generality, there exist both negative and positive values in $\QualSet$; otherwise, the conference would simply accept/reject all papers without review. 
When the set of qualities is discrete, we write $\QualProb{q} = \Prob{\Qual = q}$; when it is continuous, we use $\QualDens{q}$ to denote the density of $\QualDist$ at $q$.
Because all papers' qualities are drawn from the same distribution, we will not need to reference a specific paper in our notation.

The conference cannot observe the true quality of the paper, but will solicit some number, $\NumReviews$, of reviews for each submission.  
Each review $S_j$, for $j = 1, \ldots, \NumReviews$, is a random variable drawn i.i.d.~from a distribution $\RevSigDist[q]$ where $q$ is the paper's quality. 
Thus, the reviews are independent conditioned on the paper's quality.
The outcome of the $j$th review is $\RevSig{j} \in \SigSet \subseteq \R$ where $\SigSet$ is the set of possible review scores, and a higher score denotes a more positive review. We assume that $\RevSigDist[q]$ has full support on $\SigSet$ for every $q$.
We write $\RevSigV$ for the vector of the $\NumReviews$ reviews. We let $U(\RevSigV) = \ExpectC[\Qual]{\Qual}{\RevSigV, \QualDist, \RevSigDist}$ denote the expected quality of a paper conditioned on the reviews $\RevSigV$.\footnote{Note that this notation of expected quality only depends on the quality prior and the review signal, but not the author's best response and resubmission. }
Some of our results rely on the following assumption on the review signal.

\begin{definition}[\citet{karlin1956theory}] \label{def:informative}
The reviewer's family of signal distributions $\RevSigDist$ satisfies the monotone likelihood ratio (MLR) property if for any $q'>q$, whenever $\REVSIGP > \REVSIG$, the following holds: $\frac{\RevSigProb[q']{\REVSIGP}}{\RevSigProb[q]{\REVSIGP}} > \frac{\RevSigProb[q']{\REVSIG}}{\RevSigProb[q]{\REVSIG}}$.
\end{definition}

It is well known that a wide class of commonly used distributions --- including the Gaussian, Beta, and Exponential distributions --- have monotone likelihood ratios \citep{karlin:rubin,karlin1956theory}.

For some of our results, we assume that the authors perfectly observe $\Qual$, the paper's quality. This model has the advantage of being analytically tractable, because the authors do not learn new information from the reviews. We call such authors \emph{noiseless}.
For other results, we consider \emph{noisy} authors, who themselves only assess their papers' qualities approximately.
In this case, we assume that authors have noisy signals $\AuthSig$, which, similar to the conference's signals, are drawn according to some distribution $\AuthSigDist[q]$ for a paper of quality $q$Since our theoretical results are primarily derived in the noiseless setting, we do not impose any specific assumptions on the distribution of author noise. Instead, we investigate its effect through agent-based model simulations, using noise rates estimated from real data.
The author's signal is independent of the conference's signals conditioned on $\Qual$.
Noisy authors will update their beliefs about their papers' qualities based on the reviews in a Bayesian way. 

\dkedit{
\begin{remark}
We remark that our results for noiseless authors can be generalized to other settings in which the authors obtain no new information from observing reviews when their submission is rejected.
Perhaps the most natural way to model such a setting is to assume that each review signal $S_j$ is an i.i.d.~\emph{garbling} of the author's signal $\AuthSig$; in that case, authors learn no new information from reviews.
Because neither the author nor the conference can obtain any more information about a paper's quality beyond its conditional expected quality $\ExpectC{Q}{\AuthSig}$, we can simply treat this expectation itself as the paper's quality, and redefine the prior over paper qualities as the corresponding distribution over expected qualities based on signals. That is, if $\ExpectC{Q}{\sigma_q} = q$, then we define $p'_q = \sum_{q'} \QualProb{q'} \cdot \RevSigProb[q']{\sigma_q}$. Note that under the MLR property, the quality conditioned on a signal is strictly increasing (see \cref{lem:monotone_expected_quality}), so no two signals can have the same expected quality. In particular, $\sigma_q$ is in fact unique (if a $\sigma$ exists at all for a particular $q$).
With this reduction, our results carry over directly from the setting of noiseless authors. We stick to the model of noiseless authors here to avoid unnecessarily complex notation.
\end{remark}
}

We study categorical and continuous models in this paper, as defined next. When a result does not specify one of these models explicitly, it holds for both models.

\begin{description}
\item{\textbf{Categorical Model:}}
In a categorical model, both $\QualSet \subset \R$ and $\SigSet \subset \R$ are finite (and ordered by their natural order on $\R$).

\item{\textbf{Continuous Model:}} 
In the continuous model, we assume that both the quality distribution $\QualDist$ and the noise have atomless full support on all of $\R$, and are suitably continuous.
Thus, $\QualSet = \SigSet = \R$, and we make the following specific assumptions:
\begin{itemize}
    \item The cumulative distribution function (cdf) of $\QualDist$ is continuous and strictly increasing. In other words, the probability density function (pdf) is positive and finite: $\QualDens{q} \in (0,\infty)$ for all $q\in \QualSet$.
    \item The reviewers' signals are obtained by adding to the true quality $\Qual$ a noise term drawn from a distribution $\REVNOISEDIST$ which is \emph{independent} of $\Qual$, has a strictly positive and \emph{continuous} pdf, and has zero mean.
Specifically, let $\REVNOISEDIST$ and  $f^{(r)}$ denote the cdf and pdf of this noise distribution, respectively.  The cdf of the distribution of reviewer signals conditioned on a quality $\Qual=q$ is $\RevNoiseDist{x-q}$.  We assume that not only $F^{(r)}$, but also $f^{(r)}$ is continuous, with $f^{(r)}(x) \in (0, \infty)$ for all $x$, and $\int_x x \cdot f^{(r)}(x) \, d x = 0$.
\end{itemize}

Note that the zero-mean assumption is without loss of generality so long as the noise distribution is independent of the quality $\Qual$, as any (known) bias could be subtracted from the reviews by the conference.

\dkcomment{Rewrote this again to add continuity assumptions. I think that those might be needed. Maybe there is some very clever way to get around it and still have continuous expected quality in signals, but I don't see it right now. If someone sees a way to avoid it, please feel free to implement. Also, please definitely check my edits here!!!}
\yzcomment{Is it possible that some of these overlap with the MLR assumption? I briefly check and it doesn't seem to be the case, but one can take another look.}
\dkcomment{I was wondering, too, but did not immediately see a way to make it happen.
Though it may be true. I think that by setting $s'-q' = s-q$, one can prove that $(f(x))^2 > f(x-\delta) f(x+\delta)$ for all $x, \delta > 0$, which looks like a weaker form of log-concavity. Not sure if it implies log-concavity by repeated application, or maybe by using different pairs. Maybe MLR implies log-concavity more generally. If so, there may be known results that log-concavity implies continuity.}
\end{description}

\subsection{Conference Acceptance Policy and Quality}
\label{sec:model-acc-policy}

The conference's main lever of control is its acceptance policy.
We primarily focus on \emph{memoryless} acceptance policies.
Under a memoryless acceptance policy, (1) the author can submit a paper an unlimited number of times; 
(2) the same number of reviews $\NumReviews$ and decision policy are used in every round; and (3) in each round $t$ in which the author (re-)submits the paper, the conference's decision depends only on the reviews obtained in round $t$. In other words, each submitted paper is treated as a fresh paper. For that reason, we typically omit the round $t$ from the notation when discussing memoryless policies. 
We discuss alternatives to memoryless policies in \cref{sec:memory_policy}, in particular, limiting the number of times a paper can be submitted, and having old reviews follow a resubmission. 
Except for these sections, unless stated otherwise, all acceptance policies are memoryless.  
We call an acceptance policy \emph{non-trivial} if it neither accepts nor rejects all submissions with quality in $\QualSet$.

A memoryless acceptance policy is characterized by a function $\ACCMAP: \SigSet^m \to [0,1]$ which determines the probability with which each combination of review signals leads to a paper's acceptance.\fangcomment{Future note: we consider non-adaptive acceptance policy.  Alternatively, if one can change the acceptance policy based on the number of submission, this adaptive policy can have more power.}

We primarily focus on \emph{monotone} acceptance policies. 
A policy is monotone if for any two vectors $\RevSigV, \RevSigVP$ of reviews, a higher expected quality of a paper
$U(\RevSigV) \geq U(\RevSigVP)$ implies that a higher acceptance probability $\AccMap{\RevSigV} \geq \AccMap{\RevSigVP}$.
A particularly natural class of monotone acceptance policies prescribes a conditional expected quality threshold.  
We largely restrict our attention to threshold policies because 1) they comprise a natural set of policies, and other policies seem unlikely to arise in practice; 2) it is both conceptually and computationally easy to search over such policies; and 3) we will see that such policies are sufficiently rich to induce any author strategy that can be induced using any monotone acceptance policy.

\begin{definition} \label{def:threshold-policy}
A \emph{threshold acceptance policy} $\ACCMAP[\tau, r]$ is characterized by a threshold $\tau \in \mathbb{R} \cup \{-\infty, +\infty\}$ and a probability $r\in [0,1]$. It accepts a paper with reviews $\RevSigV$ when 
$U(\RevSigV) > \tau$, rejects the paper when 
$U(\RevSigV) < \tau$, and accepts the paper with probability $r$ when $U(\RevSigV) = \tau$.
\end{definition}

When the distribution $\Prob[\RevSigV]{U(\RevSigV)  = \tau}$ has no point mass for any $\tau$, the third (knife-edge) case is an event of probability 0; we then omit $r$ from the notation, and simply use $\ACCMAP[\tau]$ to denote the threshold acceptance policy with threshold $\tau$; in particular, this is true in the continuous model and some generalizations.

Once an acceptance policy $\ACCMAP$ is fixed, the probability of a submitted paper being accepted in a particular round is only a function of its underlying quality $q$. We denote this probability by $\AccP{\ACCMAP}{q}$.
The following proposition shows that when $\ACCMAP$ is monotone, $\AccP{\ACCMAP}{q}$ is monotone in $q$. We defer the proof to \cref{app:proof-monotone-prob}.

\begin{proposition} \label{prop:monotone-prob}
Assume that the reviewers' signals satisfy the  MLR property.
If $\ACCMAP$ is a monotone acceptance policy, then $\AccP{\ACCMAP}{q}$ is non-decreasing in $q$. Moreover, if $\ACCMAP[\tau,r]$ is a non-trivial threshold policy, then $\AccP{\ACCMAP[\tau,r]}{q}$ is strictly increasing in $q$, for any prior $\QualDist$.
\end{proposition}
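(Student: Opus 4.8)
The plan is to reduce the claim to a correlation (FKG-type) inequality on a product measure. Write $P_{\mathrm{acc}}(\phi,q)=\E_{\bm{s}\sim\bm{\beta}_q^{\otimes m}}[\phi(\bm{s})]$, where $\bm{\beta}_q^{\otimes m}$ is the joint law of the $m$ i.i.d.\ reviews of a quality-$q$ paper. Fix $q'>q$, let $\ell(s)=\beta_{q'}(s)/\beta_q(s)$ be the pointwise likelihood ratio (well defined and positive because every $\bm{\beta}_q$ has full support on $\Sigma$), and set $L(\bm{s})=\prod_{j=1}^{m}\ell(s_j)$. Then $\E_{\bm{\beta}_q^{\otimes m}}[L]=1$, and a change of measure gives $P_{\mathrm{acc}}(\phi,q')=\E_{\bm{\beta}_q^{\otimes m}}[\phi(\bm{s})\,L(\bm{s})]$, so
\[
P_{\mathrm{acc}}(\phi,q')-P_{\mathrm{acc}}(\phi,q)=\operatorname{Cov}_{\bm{\beta}_q^{\otimes m}}\!\big(\phi,\,L\big).
\]
Both assertions thus come down to the sign of this covariance.

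The two inputs are consequences of informativeness (\cref{def:informative}). First, $L$ is non-decreasing in every coordinate, since each factor $\ell(s_j)=\beta_{q'}(s_j)/\beta_q(s_j)$ is increasing in $s_j$ by the MLR property. Second, $U(\bm{s})$ is non-decreasing in every coordinate: the posterior law of the quality given $\bm{s}$ has density proportional to $p(q)\prod_j\beta_q(s_j)$, and raising one coordinate from $s_j$ to $s_j'>s_j$ multiplies this density by $\beta_q(s_j')/\beta_q(s_j)$, a function increasing in $q$ by MLR, so the updated posterior MLR-dominates and hence first-order stochastically dominates the old one, making its mean $U$ weakly larger. Since a monotone acceptance policy is by definition a non-decreasing function of $U(\bm{s})$, it follows that $\phi$ is non-decreasing in every coordinate of $\bm{s}$. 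The first claim is then immediate: the covariance of two coordinatewise non-decreasing functions under a product measure is nonnegative (Harris's inequality, i.e.\ FKG for product measures; equivalently, $\bm{\beta}_{q'}^{\otimes m}$ dominates $\bm{\beta}_q^{\otimes m}$ in the usual stochastic order, since MLR yields marginal first-order dominance and the stochastic order is preserved under products of independent components). Hence $P_{\mathrm{acc}}(\phi,q)$ is non-decreasing in $q$.

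For the strict statement I would prove and apply a \emph{strict} version of Harris's inequality: under a product measure, if $f$ is coordinatewise non-decreasing and not almost surely constant and $g$ is coordinatewise \emph{strictly} increasing, then $\operatorname{Cov}(f,g)>0$ — proved by induction on $m$, peeling off one coordinate at a time via the law of total covariance, with the one-dimensional Chebyshev correlation inequality as the base case. Here $g=L$ is strictly increasing in each coordinate by the strict form of MLR, so it suffices to verify that $f=\phi_{\tau,r}$ is not $\bm{\beta}_q^{\otimes m}$-a.s.\ constant. Because $\mathcal{Q}$ contains at least two qualities and every $\bm{\beta}_q$ has full support, the posterior given any $\bm{s}$ is non-degenerate, so the MLR-shift above is in fact strict and $U(\bm{s})$ is \emph{strictly} increasing in each coordinate; in particular $\phi_{\tau,r}$ cannot be identically $r$. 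Combined with non-triviality (which rules out $\phi_{\tau,r}\equiv 0$ and $\phi_{\tau,r}\equiv 1$), this shows $\phi_{\tau,r}$ attains at least two distinct values; by coordinatewise monotonicity these are attained on a nonempty up-set and a nonempty down-set, each of which has positive $\bm{\beta}_q^{\otimes m}$-measure because the $\bm{\beta}_q$ have full support. Hence $\phi_{\tau,r}$ is not a.s.\ constant, the strict inequality applies, and $P_{\mathrm{acc}}(\phi_{\tau,r},q)$ is strictly increasing in $q$ for every prior $\bm{p}$.

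I expect the strict part to be the main obstacle. Upgrading the correlation inequality from weak to strict needs its own (short) inductive proof, and some care is required in checking that a non-trivial threshold policy is genuinely non-constant under $\bm{\beta}_q^{\otimes m}$ — this is exactly where full support of the review distributions, the strict coordinatewise monotonicity of $U$, the assumption that $\mathcal{Q}$ contains qualities of both signs, and the non-triviality hypothesis are all used (for instance, without $U$ being strictly monotone a ``threshold'' policy could collapse to the constant $r$, which is not strictly increasing). A minor, purely technical point that arises only in the continuous model is to check that $U(\bm{s})$ varies measurably (indeed continuously) with $\bm{s}$, so that the up-set and down-set above are measurable with the stated positive measures; this follows from the model's assumptions on the noise distribution together with dominated convergence, and I would handle it with a brief remark.
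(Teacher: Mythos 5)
Your proof is correct, but it takes a genuinely different route from the paper's. The paper proves the claim by a quantile coupling: using the fact that MLR implies first-order stochastic dominance (its \cref{lem:FOSD}), it couples the review vectors drawn under $q$ and $q'>q$ so that the latter is componentwise at least the former, invokes \cref{lem:monotone_expected_quality} to conclude $U$ (hence the acceptance probability) cannot decrease, and obtains strictness for a non-trivial threshold policy by an extremal-vector argument --- after reducing w.l.o.g.\ to $r\in(0,1)$, it picks the review vector maximizing $U(\bm{s})$ subject to $U(\bm{s})\le\tau$ and shows the coupled draw under $q'$ strictly beats it with positive probability. You instead write $\AccP{\ACCMAP}{q'}-\AccP{\ACCMAP}{q}$ as a covariance against the product likelihood ratio $L$ under $\RevSigDist[q]^{\otimes m}$, observe that MLR makes $L$ coordinatewise strictly increasing and (via MLR-dominance of posteriors, which is the content of \cref{lem:monotone_expected_quality}, re-derived rather than cited) makes $\phi$ coordinatewise non-decreasing, and then apply Harris/FKG for the weak claim and a strict Harris-type inequality plus non-constancy of $\phi_{\tau,r}$ for the strict claim. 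Both arguments rest on the same two pillars (coordinatewise monotonicity of $U$ and the likelihood-ratio structure across qualities), and your non-constancy argument uses the same implicit non-degeneracy assumptions as the paper (prior supported on at least two qualities, at least two signal values). Your framing buys a clean separation between the policy and the quality shift, applies verbatim to any coordinatewise-monotone acceptance rule, avoids the paper's $r\in\{0,1\}$ reduction, and treats the categorical and continuous models uniformly (the paper's strictness argument is written for the categorical case and only asserted to generalize); the cost is that you must supply the short inductive proof of the strict correlation inequality and the measurability remark in the continuous model, both of which you correctly flag and which go through as sketched (the induction needs the case split on whether $f$ is a.s.\ a function of the peeled-off coordinate, with the strict one-dimensional Chebyshev inequality closing each case). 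The paper's coupling, by contrast, is more constructive and pinpoints exactly where the strict improvement occurs.
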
  
 
In round $t$ of (re-)submission, we suppose that the conference faces a (the same) number $\NumNewPapers$ of new papers as well as the previously rejected papers from the past $t-1$ rounds. 
We define the \emph{quality} $\CONFUTIL$ of the conference as the long-run expected total quality of accepted paper.  Formally, $\CONFUTIL$ is the expectation of the total value of accepted papers' quality normalized by $\NumNewPapers$ and the number of rounds $t$ with $t\to \infty$.\fangcomment{$t$ or $T$?} \dkcomment{Not sure. I thought we use $T$ for the conference's lever, but do we do so consistently?} As is typical in Stackelberg games, we assume that the authors best-respond; if there are multiple best responses, for convenience of analysis, we assume that we can prescribe a particular tie breaking.
Note that when $t \to \infty$, the expected number of submissions in round $t$ that have previously been submitted $\ell$ times converges for any $\ell \in \mathbb{N}$.

\subsection{Author Utility and Decisions}
\label{sec:author-decision}
\dkdeletecomment{Now subsumed earlier?}{
In terms of timing, first, the conference announces its review and acceptance policy; subsequently, in each round $t$, the author decides whether to submit the paper to the conference or take the side option.
The game ends when the paper is accepted at the conference or the author takes the side option.}

Two key factors affect the author's utility: their time discount factor $\TD$, capturing how patient they are, and the prestige $\ConfValue$ they ascribe to the conference.
When the paper is accepted at the conference in round $t$, the author's utility is therefore $\yichireplace{\AUTHUTIL}{u^{(a)}} = \TD^{t-1} \cdot \ConfValue$; when the author takes the side option in round $t$, her utility is $\TD^{t-1}$, i.e., the utility of the outside option is normalized to 1.
The $\TD^{t-1}$ term encodes exponential time discounting and models that authors would like their work to be published in a timely manner.
Besides the utility loss due to time discounting, rejection does not cause additional cost for the author.

We assume that the prestige $\ConfValue$ is 1 plus the average quality of the papers accepted by the conference, which is the reward to authors whose papers are accepted by the conference.\fangcomment{Future note: I feel this definition loses the control of conference's exogenous value.  Our result should hold under $V = 1+\mu+\lambda\E[q|accepted]$.} \dkcomment{Indeed, it seems like Fang-Yi's generalization might check out, and subsume both results.}
This assumption aligns the preference for papers between authors and the conference.
The conference aims to accept only positive-quality papers, while authors are attracted to the conference over the outside option only when the average quality of accepted papers is larger than 1.\footnote{While not included here, we also have shown that the results hold when $V > 1$ is a fixed constant, e.g., the prestige of the conference is based on its past reputation, and not directly affected by current decisions or accepted papers.}
We assume that $\Set{q\in \QualSet}{q>0}\neq \emptyset$, \dkreplace{meaning}{i.e., that some papers are of positive quality. This assumption implies that $\ConfValue$ can be made larger than the outside option, when only positive value paper are submitted and accepted; otherwise, no author would ever submit to the conference.}
\yzcomment{removed an assumption.}

The author's decisions depend on all available information, i.e., her own (possibly perfect) private signal $\AuthSig$ as well as all the reviews she has received for previous submissions.
We assume that the author is rational and Bayesian, so her decisions are based on posterior quality distributions taking into account all available information. 
She will submit to the conference in round $t$ if and only if her expected utility from doing so (over all future time steps $t' \geq t$) exceeds her expected utility from the side option (which is exactly $\TD^{t-1}$ at the point she is making the decision).
Notice that unless the author obtains a perfect signal, the reviews she obtains (in addition to her own signal $\AuthSig$) change her posterior conditional probability over the paper's quality, which in turn changes her belief of the probability distribution of future reviews. 

\begin{definition}
  The model in which authors have perfect information about their papers' qualities, and papers may be resubmitted an unlimited number of times, is called the model of \emph{noiseless authors with unlimited resubmissions}.
\end{definition}

Under the model of noiseless authors with unlimited resubmissions, a theoretical analysis becomes more tractable. This is because authors' beliefs of their papers' qualities will not be updated based on the reviews. As a result, the papers that are submitted in the first round will be repeatedly resubmitted until acceptance. In turn, this implies that the quality of the conference depends entirely on the authors' self-selection. 

\subsection{Review Burden and Tradeoffs}  
\label{subsec:number-reviews}
As we discussed in the introduction, we are primarily interested in the utility of the conference committee, the reviewers, and the authors.
The ``utility'' of the conference committee is essentially captured by the conference quality $\CONFUTIL$ as defined in \cref{sec:model-acc-policy}.
The utility of the authors, captured by the author welfare, $\AUTHUTIL$, is the sum of all authors' utilities conditioned on acceptance to the prestigious conference, and is denoted by $\AUTHUTIL$.\fangcomment{Future note.  Why don't we define the author welfare as the sum of all author's utility, but only those getting accepted to the prestigious one?}\yzcomment{I don't feel strongly about the difference. Maybe a note for future.}
We use the \emph{review burden} $\PaperReviews$ as a measure of the reviewers' utility, which is the expected number of reviews assigned to a random paper.
To be precise, the conference's policy, along with an author's best response, determines a probability distribution $\NumSubmitDist$,
where $\NumSubmitProb{t} = \Prob{\text{The paper is submitted at least $t$ times}}$.
Then, $\PaperReviews = \NumReviews \cdot \sum_{t = 1}^{\infty}  \NumSubmitProb{t}$ is the expected number of reviews incurred by a submission.
In the limit as the number of rounds
becomes large, the reviewing load is spread out evenly over rounds. Therefore, with $\NumNewPapers$ new submissions in each round, the review burden on the community is $\NumNewPapers \PaperReviews$.

We focus on the tradeoffs between pairs of stakeholders, and in particular, we find the tradeoffs between conference quality and review burden, as well as between conference quality and author welfare, to be the most interesting. Intuitively, the tradeoff between review burden and author welfare is trivially dominated by the simple policy of accepting every paper without review.
Starting with the tradeoff between conference quality and review burden (QB-tradeoff), we say that a policy $\ACCMAP$ with $\CONFUTIL$ and $\PaperReviews$ \emph{weakly QB dominates} 
another policy $\hat{\ACCMAP}$ with $\CONFUTILH$ and $\PaperReviewsH$ if (1) it has a higher (or equal) expected conference utility, $\CONFUTIL \geq \CONFUTILH$, and (2) the number of reviews is smaller (or equal), $\PaperReviews \leq \PaperReviewsH$.  
We say that a policy $\ACCMAP$ \emph{QB dominates} another policy $\hat{\ACCMAP}$ if it weakly QB dominates, and at least one of the inequalities is strict.
Given a set of policies, a policy is \emph{QB Pareto optimal} if it is not strictly QB dominated by any other policy in the set.
If policy $\ACCMAP$ and policy $\hat{\ACCMAP}$ do not induce unique conference qualities and review burdens (because they allow different best responses by the author), we say that $\ACCMAP$ (weakly) QB dominates $\hat{\ACCMAP}$ if every pair $(\CONFUTILH,\PaperReviewsH)$ induced by $\hat{\ACCMAP}$ is (weakly) dominated by a pair $(\CONFUTIL,\PaperReviews)$ induced by $\ACCMAP$.

We say that the QB-tradeoff in one setting \emph{weakly dominates} another if for any point (corresponding to a policy $\hat{\ACCMAP}$) of the second QB-tradeoff, there exists a point (corresponding to a policy $\ACCMAP$) in the first setting that weakly dominates it.

Typically, instead of optimizing over all memoryless acceptance policies, we will restrict our attention to memoryless threshold acceptance policies.  In such a case, we can look at the \emph{QB-tradeoff curve}, which maps out each point of the QB-tradeoff as the acceptance policy threshold $\tau$ increases from $-\infty$ to $+\infty$, and for $\tau$ where $\Prob[\RevSigV]{U(\RevSigV)  = \tau} > 0$, $r$ increases from 0 to 1.  

We say that one QB-tradeoff curve $\mathcal{C}$ \emph{(weakly) dominates} another QB-tradeoff curve $\hat{\mathcal{C}}$ if for any point on $\hat{\mathcal{C}}$ that does not correspond to accepting all papers or rejecting all papers, there exists a point on $\mathcal{C}$ which (weakly) dominates it.
Note that this implies that no point on the Pareto frontier of the QB-tradeoff curve $\mathcal{C}$ is dominated by any of the points on the QB-tradeoff curve $\hat{\mathcal{C}}$.  

The definition of the tradeoff between conference quality and author welfare (QA-tradeoff) is analogous --- replacing review burden with author welfare in all relevant concepts --- while noting that higher author welfare is preferred, just as lower review burden is preferred. We then use \emph{QA dominance} to refer to the relationship between two policies when considering conference quality and author welfare. When the context is clear, we omit the QA or QB qualifiers.

\section{Noiseless Authors: Thresholds and Resubmission Gaps} \label{sec:thresholds-gaps}

We first focus on the case of noiseless authors. Recall that in this setting, because the author will not update her belief about the paper's quality, she will either submit to the conference until the paper is accepted (or, in the case of limited submissions, until the paper may not be resubmitted anymore), or immediately take the side option. 
In turn, this allows us to analytically characterize the relationship between the author's submission equilibrium and the conference's acceptance policy.


\subsection{Threshold Policies and The Resubmission Paradox}
\label{subsec:paradox}


Our analysis begins from the following often puzzling real-world phenomenon: 
At many prestigious conferences, a nontrivial fraction of rejected papers are subsequently resubmitted (often multiple times) to the same or closely related venues, only to be eventually accepted --- sometimes with only minor changes.

We refer to this counterintuitive phenomenon as the \emph{resubmission paradox}.
Its prevalence has led to suggestions of raising acceptance rates so that more of these papers are accepted in their initial round of submission, thereby reducing repeated reviews and lowering the review burden on the community.
\yzcomment{It'd be nice to have a citation.}

In this subsection, we explain the resubmission paradox by uncovering the relationship between a conference’s acceptance policy and authors’ equilibrium behavior. The paradox turns out not to be an accident, but a byproduct of two forces: the conference’s desire to maintain its prestige and authors’ self-selection in deciding where to submit. Simply lowering the acceptance threshold sounds like it should help, but it risks drawing in more lower-quality submissions and thus fails to reduce --- and may even increase --- the overall review burden.

\paragraph{Threshold equilibrium.}
In order to understand the effects of a conference's acceptance policy, it is necessary to characterize the authors' equilibrium under the policy. Fortunately, under mild conditions on the acceptance policy, authors follow a threshold strategy at equilibrium, submitting papers if and only if the quality exceeds some threshold. This observation motivates a central concept in our analysis, which offers a simple and intuitive explanation for the resubmission paradox.

\begin{definition}[De Facto Threshold]
Consider a conference with a memoryless acceptance policy~$\ACCMAP$ and noiseless authors.  
We say that a value~$\theta$ is the \emph{de facto threshold under $\ACCMAP$} if, in any non-atomic symmetric equilibrium\footnote{``Non-atomic'' here means that any unilateral deviation by a single author does not affect the conference value $\ConfValue$, and ``symmetric'' means that all authors follow the same strategy.}, authors submit --- and continue resubmitting --- any paper with quality~$\Qual = q$ whenever $q > \theta$, and take the outside option if $q < \theta$.
\end{definition}

We call $\theta$ the de facto threshold under $\ACCMAP$ because every paper whose quality is above $\theta$ will be submitted until accepted, so de facto, the conference will accept all papers above $\theta$. 
Based on the definition, the de facto threshold exists for an acceptance policy \ACCMAP if and only if authors submit papers according to the above threshold strategy in every non-atomic symmetric equilibrium. In particular, if in a strategy profile, there are qualities $q \neq q'$ such that an author is indifferent between submitting or not submitting papers of qualities $q$ and $q'$, then a de facto threshold does not exist under \ACCMAP.

We defer a detailed discussion of the existence of the de facto threshold to \cref{sec:submission-threshold}, and provide only a high-level summary (of \cref{prop:de_facto}) here:  
\begin{itemize}[leftmargin=2em]
    \item In the continuous model, under a threshold acceptance policy~$\ACCMAP[\tau]$, every equilibrium corresponds to a unique de facto threshold~$\theta$, and $\theta$ and $\tau$ are in one-to-one correspondence.
    \item In the categorical model, under a threshold acceptance policy~$\ACCMAP[\tau,r]$, every equilibrium corresponds to some de facto threshold~$\theta$, but the mapping from $\tau$ to $\theta$ may not be unique.
    \item In both the categorical and continuous models, a de facto threshold~$\theta$ exists under any monotone \ACCMAP. However, other non-threshold equilibria may also exist.
\end{itemize}

Threshold acceptance policies comprise a very natural class of policies for a conference to apply. We further justify this by showing the following in \cref{prop:threshold-policy}: If $\theta$ is such that including all papers of quality strictly exceeding $\theta$, and possibly some papers of quality exactly $\theta$, would result in a conference of value $\ConfValue > 1$, then there exists a threshold acceptance policy inducing $\theta$ as its de facto threshold.

The concept of de facto thresholds, along with the characterization from the aforementioned \cref{prop:threshold-policy}, allows for a clean characterization of the policy maximizing conference quality:

\begin{proposition}\label{prop:max-general}
The conference quality-maximizing policy induces a de facto threshold of $\theta = 0$, and there is a threshold acceptance policy inducing this de facto threshold.
 \end{proposition}

\proof{Proof of \cref{prop:max-general}}
Given a de facto threshold $\theta$, the conference's quality $\CONFUTIL$ is $\int_{\theta}^\infty \qual \,d\QualDens{\qual}$. (For discrete quality distributions, the corresponding conference quality is obtained by replacing the integral with the sum and the density function with the probability function.)

This integral/sum is maximized at $\theta=0$.
Furthermore, because we assumed that papers of positive quality occur with positive probability, the integral is strictly positive for $\theta = 0$, so that $\ConfValue > 1$.
By the discussion preceding this proposition (and proved in \cref{prop:threshold-policy}), $\theta=0$ is induced as a de facto threshold by some threshold acceptance policy.
\Halmos
\endproof

\paragraph{Explaining the resubmission paradox.}
The acceptance threshold and the de facto threshold tend to be different.
The threshold $\tau$ can be interpreted as the ``declared'' quality goal of the conference, attained in isolation. $\theta$ determines the actual quality of the conference, taking into consideration resubmissions and reviewing noise. The difference between $\tau$ and $\theta$ is a key concept we study, which we term the ``resubmission gap''.

\begin{definition}[Resubmission Gap]
\label{def:resubmission-gap}
Consider a memoryless conference with a non-trivial threshold acceptance policy $\ACCMAP[\tau,r]$, and a noiseless author.
Let $\theta$ be the smallest de facto threshold under $\ACCMAP[\tau,r]$. 
The difference between the acceptance threshold $\tau$ and $\theta$ is called the \emph{resubmission gap}.
\end{definition}
In general, the de facto threshold is not unique in the categorical model; we define the resubmission gap based on the \emph{smallest} de facto threshold. This issue disappears in the continuous model, where the resubmission gap is unique for every non-trivial threshold acceptance policy. 

The name ``resubmission gap'' reflects the discrepancy between the ``stated'' and ``de facto'' quality of accepted papers, caused by the fact that authors are free to resubmit papers and that the reviews are noisy.
The fact that the actual threshold of papers at the conference differs from the declared acceptance threshold for each year in isolation (namely, by the resubmission gap) may appear somewhat unexpected at first, and is frequently missing from conversations about policy changes. 

The resubmission gap is precisely what gives rise to the resubmission paradox highlighted at the start of this subsection. As mentioned there, a common proposal is to lower the acceptance threshold in order to reduce the review burden by cutting down on resubmissions. However, this approach is unlikely to help. While a smaller threshold does allow more papers to be accepted earlier, it simultaneously lowers the de facto threshold and encourages additional submissions. Moreover, many of these new submissions are sub-standard and will still be rejected multiple times until eventual acceptance. Thus, this apparent solution actually reproduces the problem, just with a lower de facto threshold.

\paragraph{The importance of the resubmission gap.}
We highlight the central role of the resubmission gap in many of our key results.

Fixing a de facto threshold fixes the conference quality $\CONFUTIL$ and the conference value $\ConfValue$. (In the categorical model, we also need to fix the submission probability in the edge case $\Qual = \theta$.) 
As we have seen, a particular de facto threshold $\theta = 0$ maximizes the conference quality (\cref{prop:max-general}).
Now, for a fixed de facto threshold, a larger resubmission gap implies a higher acceptance threshold. A higher threshold, in turn, means that each paper is expected to be submitted more times before eventual acceptance. Consequently, a larger resubmission gap naturally increases the review burden --- since each paper generates more reviews\textemdash and decreases author utility, as authors prefer earlier acceptance given a fixed discount factor.

We will investigate the factors that affect the resubmission gap in \cref{sec:additive-noise}, and examine how the resubmission gap relates to another key concept in our paper\textemdash the QB- or QA-tradeoff\textemdash in \cref{sec:continuous-tradeoffs}.

\subsection{An Illustrative Example in the Continuous Model With a Single Review}
\label{sec:additive-noise}
In order to illustrate the somewhat abstract definitions more concretely, we now characterize the relationship between the acceptance threshold $\tau$ and the de facto threshold $\theta$ specifically for the continuous model with $\NumReviews = 1$ review per paper. 
Importantly, recall that the review noise is additive and the noise distribution has zero mean and is independent of the true underlying quality.
That is, for any quality $\Qual=q$, the reviewer observes a signal of $q+X$ where the distribution of the review noise, $\RevNoiseDist{X}$, is independent of $q$. 

We first note that the conference value depends on the author's submission strategy. As we primarily consider symmetric threshold equilibria, we write the conference value $\ConfValue(\theta)$ as a function of a threshold quality $\theta$ such that authors with papers of quality $\Qual \ge \theta$ submit and resubmit until acceptance, and authors with $\Qual < \theta$ opt for the outside option. It follows that the conference value is strictly increasing in~$q$:
\[\ConfValue(\theta) = 1 + \frac{\int_\theta^\infty q \QualDens{q} dq}{\int_\theta^\infty \QualDens{q} dq}.\]

Under the single-review model, it is convenient to define a \emph{signal-based threshold acceptance policy}, which accepts a paper if and only if its review signal exceeds a threshold~$\tau_s$. Since review noise is additive to the true paper quality, and the noise distribution has an invertible cdf, there is a monotone one-to-one mapping between a signal threshold~$\tau_s$ and a threshold~$\tau$ applied to the paper’s expected quality (given a fixed prior on quality). 
The following proposition formalizes this correspondence.

\begin{proposition} \label{prop:gap-invariant}
  Given $\theta$ such that $\ConfValue(\theta)>1$, consider the continuous model with a single review and additive noise drawn from $\REVNOISEDIST$. 
  The acceptance threshold~$\tau_s$ of a signal-based threshold policy that induces $\theta$ as the de facto threshold is
  \[\tau_s = 
  \theta + \left(\REVNOISEDIST\right)^{-1}{\left(\frac{\ConfValue(\theta) - 1}{\ConfValue(\theta) - \TD}\right)},\]
  which is increasing in $\theta$.
\end{proposition}
\proof{Proof of \cref{prop:gap-invariant}}

We first show that the threshold acceptance policy can induce $\theta$ as a de facto threshold as long as $\ConfValue(\theta)>1$.
An author submits the paper if the expected utility of submitting is greater than 1 (the utility of the outside option), does not submit the paper if the expected utility is less than 1, and is indifferent between submitting and not submitting if the expected utility is equal to 1. 
Because a noiseless author does not learn any new information from rejection in previous rounds, she will make the same decision in future rounds, implying that she will submit until acceptance.
Let $\AccP{\ACCMAP[\tau_s]}{q}$ be the acceptance probability of a paper with quality $Q=q$ under a signal-based threshold acceptance policy. By Proposition~\ref{prop:monotone-prob}, this probability is strictly increasing in $q$.
The expected utility can be obtained as the time-discounted sum of the utility from acceptance:
\begin{align}
u^{(a)}(\ACCMAP[\tau_s],q)
& = 
\sum_{t\ge 1} \ConfValue \cdot \TD^{t-1} \AccP{\ACCMAP[\tau_s]}{q} \cdot (1-\AccP{\ACCMAP[\tau_s]}{q})^{t-1} 
\; = \; \frac{\ConfValue \cdot \AccP{\ACCMAP[\tau_s]}{q}}{1-\TD \cdot (1-\AccP{\ACCMAP[\tau_s]}{q})}.
\end{align}
Solving the inequalities $u^{(a)}(\ACCMAP[\tau_s],q) \ge 1$ and $u^{(a)}(\ACCMAP[\tau_s],q) < 1$, and $u^{(a)}(\ACCMAP[\tau_s],q) = 1$ for $q$, the author submits the paper if $\AccP{\ACCMAP[\tau_s]}{q} > (1-\TD)/(\ConfValue(\theta)-\TD)$, does not submit if $\AccP{\ACCMAP[\tau_s]}{q} < (1-\TD)/(\ConfValue(\theta)-\TD)$, and is indifferent between submitting and not submitting if $\AccP{\ACCMAP[\tau_s]}{q} = (1-\TD)/(\ConfValue(\theta)-\TD)$, respectively.



Under the signal-based threshold acceptance policy, a paper with quality $q$ is accepted if and only if the review $s$ satisfies $s=q+x>\tau_s$.
This happens with probability $\AccP{\ACCMAP[\tau_s]}{q} = 1-\RevNoiseDist{\tau_s-q}$.
Thus, $\tau_s$ solves $(1-\TD)/(\ConfValue(\theta)-\TD) = 1-\RevNoiseDist{\tau_s-\theta}$.

A solution for $\tau$ exists because the left-hand side of the above equation lies between 0 and 1 (because $\TD \in (0, 1)$ and $\ConfValue(\theta) > 1$) and $\RevNoiseDist{\cdot}$ has a continuous cdf. 
The uniqueness of $\theta$ follows because we assumed $\RevNoiseDist{\cdot}$ to be \emph{strictly} increasing.
Rearranging the equation leads to the relationship as shown in the statement of the proposition.
\Halmos
\endproof


Because there is a monotone one-to-one mapping between $\tau_s$ and $\tau$, the resubmission gap $\tau-\theta$ is monotone increasing in the term $\left(\REVNOISEDIST\right)^{-1}{\left(\frac{\ConfValue(\theta) - 1}{\ConfValue(\theta) - \TD}\right)}$.
\cref{prop:gap-invariant} thus implies that there are four factors that affect the resubmission gap: the author's discount factor $\TD$, the paper quality distribution $\QualDist$, the review noise distribution $\REVNOISEDIST$, and the de facto threshold $\theta$ itself. A large discount factor $\TD$ (very patient authors), a high-quality-skewed quality distribution $\QualDist$ (which leads to a larger $\ConfValue$ given the same $\theta$), a large review noise, and a higher de facto threshold (which leads to a larger $\ConfValue$) all contribute to a large resubmission gap.
This means that the conference has to set a significantly higher bar to sufficiently discourage such resubmissions, and will reject many good papers repeatedly before they are finally accepted. 

The above intuition carries over to the more general setting where a larger resubmission gap has an important practical implication: a setting with a smaller resubmission gap can achieve the same conference quality at a lower review burden.  In \cref{subsec:gap-tradeoffs}, we will see this formally and then study what factors reliably increase this gap.

\subsection{Justifying Threshold Equilibria and Policies in Broader Settings}
\label{subsec:threshold_broader}

We now establish that de facto thresholds exist broadly across the models considered in this paper, and that any candidate threshold can be induced by a threshold acceptance policy as a de facto threshold. These results demonstrate the robustness of our central concepts --- such as the resubmission gap --- and thereby strengthen the practical insights derived from our model. 

\subsubsection{Characterizing De Facto Thresholds}
\label{sec:submission-threshold}

When the conference's acceptance policy $\ACCMAP$ is fixed, from the author's perspective, it induces an acceptance probability $\AccP{\ACCMAP}{q}$ for each paper quality $q$.
By Proposition~\ref{prop:monotone-prob}, this probability is weakly increasing in $q$ if the acceptance policy is monotone, and strictly increasing under a non-trivial threshold policy. 
We will show that as a result, if the acceptance policy is monotone, the author's equilibrium strategy can be characterized by a submission threshold $\theta$.
Note, however, that there may be other best responses that do not fit this pattern.

To state this result, we first define suitable nomenclature and notation.

\begin{definition}
In a \emph{$\theta$-threshold strategy} for the author, an author submits (and resubmits until they are accepted)  all papers of quality $\Qual = q > \theta$, and no paper of quality $q < \theta$. The author may handle papers of quality $q = \theta$ arbitrarily.  
An author strategy is called a \emph{threshold strategy} if it is a $\theta$-threshold strategy for some $\theta$.
\end{definition}

When $q = \theta$, the author is typically indifferent between submitting and not submitting. In this case, we assume that we can prescribe the author’s tie-breaking behavior for ease of analysis. 




We capture the appeal of submitting to a conference by an \emph{attractiveness factor} $\rho$ (where larger values correspond to conferences more attractive to submit to), based on its value $\ConfValue$ and the discount factor $\TD$:
\begin{align}
\rho & := \frac{\ConfValue - \TD}{1-\TD}.
\label{eqn:rho-definition}
\end{align}

\begin{lemma}\label{lem:author_response}
    Consider a memoryless conference with a monotone acceptance policy $\ACCMAP$. 
    Suppose that the conference value $\ConfValue$ is fixed, and accordingly $\rho$ is fixed.
    Then, the author's best response is to submit the paper (in each round) if $\AccP{\ACCMAP}{\Qual} > 1/\rho$, take the side option when $\AccP{\ACCMAP}{\Qual} < 1/\rho$, and the author is indifferent between submitting or not submitting if $\AccP{\ACCMAP}{\Qual} = 1/\rho$.
\end{lemma}
We defer the proof to \cref{app:proof-author_resp}; it closely parallels the proof of \cref{prop:gap-invariant}. Note that we use the term ``best response'' rather than ``symmetric equilibrium'' in the statement of this lemma, since when $\ConfValue$ is fixed, each author’s utility is independent of the strategies of others. 

We now examine how the dependence of the conference value~$\ConfValue$ and the attractiveness factor~$\rho$ on the author’s strategy influences the equilibrium.
Let $\ConfValue(\theta, r)$ denote the conference value when all authors with paper quality $Q > \theta$ choose to submit and resubmit until acceptance, those with $Q < \theta$ opt for the outside option, and authors with $Q = \theta$ submit with probability $r$. 
We then have the following formula for the conference value in the categorical model: 
\begin{equation*}
    \ConfValue(\theta, r) = 1 + \frac{r\cdot \theta \cdot \QualProb{\theta} + \sum_{q \in \QualSet, q > \theta} q \cdot \QualProb{q}}{r\cdot \QualProb{\theta} + \sum_{q \in \QualSet, q > q} \QualProb{q}}.
\end{equation*}
Because $\QualDist$ has full support on $\QualSet$, $\ConfValue(\theta,r)$ is strictly increasing in $\theta$ and strictly decreasing in $r$. This implies that the corresponding attractiveness factor $\rho(\theta,r)=\frac{\ConfValue(\theta,r) - \TD}{1-\TD}\in (0,1)$ is strictly increasing in $\theta$ and strictly decreasing in $r$.  



For the existence of a meaningful conference value, the acceptance policy should at least encourage \emph{some} papers to submit.
The following definition captures this desideratum.
\begin{definition}
    We say that an acceptance policy $\ACCMAP$ is \emph{responsive} if it is non-trivial and there exists a quality $\bar{q}\in \QualSet$ such that $\AccP{\ACCMAP}{\bar{q}} > 1/\rho(\bar{q},1)$.
\end{definition}
\yichicomment{better name than responsive?}
\dkcomment{inviting? non-discouraging?}
\yzcomment{I don't feel strongly that these are better than the current one.}

\yzcomment{There is a suggestion of reversing the order of the following 3 points. I feel the wording and the proof are already well-adjusted for the ordering, and it doesn't seem like an essential change for this submission. I'll mark the idea and leave it for now.}
\begin{proposition} \label{prop:de_facto}
Assume that authors are noiseless, and the conference uses a monotone acceptance policy $\ACCMAP$ that is memoryless and responsive.
\begin{enumerate}
   \item There exists a $\theta$ such that it is a non-atomic symmetric equilibrium for every author to play a $\theta$-threshold strategy.%
   \footnote{There may also exist non-threshold equilibria, and thus $\theta$ may not be a de facto threshold.}
  \item If the conference policy $\ACCMAP$ is additionally a threshold one, then there exists a de facto threshold, implying that there exists only one non-atomic symmetric equilibrium; in this equilibrium, every author plays a $\theta$-threshold strategy.
  \item If the conference applies a threshold acceptance policy with threshold $\tau$ and the model is continuous, then the de facto threshold $\theta$ is unique: no best response is a $\theta'$-threshold strategy for $\theta' \neq \theta$.  
  Moreover, $\AccP{\ACCMAP[\tau]}{\theta} = 1/\rho(\theta)$. 
\end{enumerate}
\end{proposition}


\Cref{prop:de_facto} follows from \cref{lem:author_response}, together with the monotonicity of $\AccP{\ACCMAP}{q}$ and $\rho(q)$.
This result implies that any monotone acceptance policy induces a form of self-selection among authors, whereby clearly substandard papers are not submitted in equilibrium.
In the following sections, we explore what factors affect the de facto threshold and how the conference can leverage this self-selection behavior to balance the utilities of different stakeholders in the system.


\subsubsection{Sufficiency of Threshold Policies}
\label{sec:threshold-resubmission}  

Having shown that threshold best responses are ``typically'' best for \emph{authors}, we next show that threshold acceptance policies are sufficient for the \emph{conference}, in that they let it induce every candidate threshold as a de facto threshold.

\begin{definition}\label{def:candidate_threshold}
    We say that a value $\theta \in \R$ is a \emph{candidate threshold} if the resulting conference value is larger than the utility of the outside option. Formally, $\theta$ is \a candidate threshold if:
\[
\begin{cases}
\ConfValue(\theta, r) > 1 \text{ for some } r \in [0,1]   & \text{if } \theta \in \QualSet
, \\
\ConfValue(\inf \Set{q \in \QualSet}{q > \theta}, 1) > 1 & \text{if } \theta \notin \QualSet
.
\end{cases}
\]
Here, we use the standard convention that $\inf \emptyset = \infty$, so $\theta$ can only be a candidate threshold if there exists at least one $q \in \QualSet$ with $q \geq \theta$.

Let $\mathcal{C}$ denote the set of all candidate thresholds.
\end{definition}
Because $\ConfValue(q,r)$ is increasing in $q$, if $\theta$ is a candidate threshold and $\theta' > \theta$, then $\theta'$ is also a candidate threshold, so long as at least one $q \in \QualSet$ weakly exceeds $\theta'$. 


Let $\theta_{\inf} = \inf \mathcal{C}$ be the infimum of all candidate thresholds.
By the preceding paragraph, we obtain that either $\mathcal{C} = [\theta_{\inf}, \max \QualSet]$ or $\mathcal{C} = (\theta_{\inf}, \max \QualSet]$.
Both cases can occur, depending on whether the model is continuous or categorical.
In the continuous model, $\ConfValue(\theta_{\inf}) = 1$, implying that $\theta_{\inf}$ is not a candidate threshold. 
In the categorical model, there exists an $r_{\inf} \in (0,1]$ such that $\ConfValue(\theta_{\inf}, r_{\inf}) = 1$. 
Therefore, $\ConfValue(\theta_{\inf}, r_{\inf}/2) > 1$, and $\theta_{\inf}$ is indeed a candidate threshold.


\begin{proposition} 
\label{prop:threshold-policy}
Let $\theta \in \mathcal{C}$ be a candidate threshold. 
Then, there exists a threshold acceptance policy with threshold $\hat{\tau}$ and probability $\hat{r}$ under which the following is a symmetric equilibrium for noiseless authors: submit to the prestigious conference which uses $\ACCMAP[\hat{\tau},\hat{r}]$ if $\Qual > \theta$ and take the side option if $\Qual < \theta$ (and the author is indifferent between submitting or not submitting if $\Qual=\theta\in \QualSet$). \dkcomment{addressed?}
Moreover, if the model is continuous and the authors are neither submitting all papers nor submitting no papers, the threshold $\hat{\tau}$ is unique.
\end{proposition}

We defer the proof to \cref{app:proof-threshold-policy}, while providing a high-level sketch here. 
For any $\theta \in \mathcal{C}$, there exists a borderline quality $\hat{q} = \inf\Set{q \in \QualSet}{q \ge \theta}$ (which may differ from $\theta$ in the categorical model), such that if $\AccP{\ACCMAP}{\hat{q}} = 1/\rho(\hat{q}, r)$ for some $r$, then $\theta$ is a de facto threshold.
The key insight is that, under threshold acceptance policies, the acceptance probability decreases monotonically with the threshold $\tau$, ranging from 1 as $\tau = -\infty$ to 0 as $\tau = \infty$.
This implies that for any candidate threshold $\theta$ where the corresponding $\hat{q}$ satisfies that $1/\rho(\hat{q}, r)\in (0,1)$, there must exist an acceptance threshold such that the acceptance probability of a paper of quality $\hat{q}$ equals $1/\rho(\hat{q}, r)$. 

Proposition~\ref{prop:threshold-policy} in part justifies our focus on threshold policies.  For any monotone policy, Proposition~\ref{prop:de_facto} implies the existence of a symmetric equilibrium where authors play a threshold strategy, and thus, Proposition~\ref{prop:threshold-policy} implies the existence of a threshold policy for which the author best-responds in the same way.
\fangcomment{We need to say the threshold in \cref{prop:de_facto} are in $\mathcal{C}$ \cref{def:candidate_threshold}.} \yzcomment{Do we have a proof relying on this?}\fangcomment{This paragraph suggests we may use threshold acceptance policy to replace any monotone policy.  However, under \cref{prop:de_facto} there may be some monotone policy whose threshold equilibrium is not in $\mathcal{C}$.}
Thus, assuming that authors break ties in favor of using threshold strategies, any conference quality that can be achieved with a monotone policy can be achieved with a threshold acceptance policy.  
Interestingly, it does not follow that the QB-tradeoff for threshold policies weakly dominates that of all monotone acceptance policies, as we will show in \cref{sec:testing_tradeoff}.

While Proposition~\ref{prop:threshold-policy} implies the existence of an acceptance threshold $\tau$ inducing the desired submission threshold $\theta$, these two thresholds will typically be different. 
{Indeed, the difference is exactly the resubmission gap we defined in \cref{def:resubmission-gap}, and which is one of our central quantities of interest.}



\section{Noiseless Authors: Tradeoffs and Acceptance Rate} \label{sec:continuous-tradeoffs}




In this section, we build on the fundamental concepts of threshold policies, de facto thresholds, and resubmission gap, to undertake a more in-depth investigation of the tradeoffs a conference may face. 
In particular, we focus on the utility of the three stakeholders in our model, as reflected by conference quality, review burden, and authors' welfare.
We also examine another central feature of a conference --- its acceptance rate --- and analyze how it depends on the acceptance threshold.

\subsection{Tradeoffs in the Continuous Model}
\label{subsec:QB-tradeoff}
We begin by characterizing the utilities of the three stakeholders in the continuous model, where there is a clean one-to-one mapping between the acceptance threshold and the de facto threshold, and visualize the resulting tradeoff curves using numerical examples. As noted in \cref{sec:model}, we are primarily interested in two tradeoffs: conference quality versus review burden (the QB-tradeoff) and conference quality versus author welfare (the QA-tradeoff). 

We have discussed the conference quality in the previous section, and observed that the conference will eventually accept every paper whose quality is above the de facto threshold $\theta$. Thus, the conference quality is $\CONFUTIL = \int_\theta^\infty q \QualDens{q}dq$. This expression shows that, once the quality prior is fixed, the de facto threshold~$\theta$ alone determines the conference quality.

As for the review burden, we consider the average number of reviews per \emph{paper} (including papers that were never submitted, and thus incurred $0$ reviews) as the relevant measure; note that because the total number of papers is constant, this is equivalent to considering the total number of reviews provided by the community.
This accounts for all reviews a paper receives over multiple rounds of submission --- first submission, second submission, and so on --- until it is eventually accepted.
Let $\theta$ be a candidate threshold, and let the corresponding acceptance policy be $\ACCMAP$. The total review burden under the continuous model is given by:
\begin{align}\label{eq:review_burden}
    R(\theta) &= m \int_{\theta}^\infty \QualDens{q} \cdot \left(\sum_{t=0}^{\infty} (1-\AccP{\ACCMAP}{q})^t \right) \; dq =  m\int_{\theta}^\infty \QualDens{q}/\AccP{\ACCMAP}{q} \; dq.
\end{align}

We next consider author welfare. In the continuous model, a de facto threshold $\theta$ determines a conference value $\ConfValue$ and induces an acceptance policy $\ACCMAP$, as shown in \cref{prop:de_facto}.
The expected utility of an author with a paper of quality $q\ge \theta$ who decides to submit and keep resubmitting until acceptance is
\begin{equation*}
    u^{(a)}(q, \ACCMAP, \ConfValue) 
    = \sum_{t=0}^{\infty} \ConfValue \cdot \AccP{\ACCMAP}{q} \cdot (\TD \cdot (1-\AccP{\ACCMAP}{q}))^t
    = \frac{\AccP{\ACCMAP}{q} \cdot V}{1-\eta \cdot (1-\AccP{\ACCMAP}{q})}.
\end{equation*}
The author welfare of all submitted papers is then $U^{(a)}(\theta) = \int_{\theta}^\infty \QualDens{q} \cdot u^{(a)}(q, \ACCMAP, \ConfValue) \, dq$.

Given a quality prior, varying the acceptance threshold (and thus the de facto threshold) varies the utilities of the three stakeholders in the above discussed ways. This allows us to visualize the QB tradeoffs and the QA tradeoffs.
In this section (and the following sections that contain discussions on the continuous model), we frequently use the following special cases of our general continuous model as examples for our analysis and plots.

The \emph{$(\sigma, \QualDist, \NumReviews, \TD)$-Gaussian model} is a continuous model with noiseless authors; the noise for each review is drawn from a Gaussian distribution $\REVNOISEDIST = \Gaussian{0}{\sigma}$. The parameters $\QualDist, \NumReviews$, and $\TD$ are, as before, the prior, the number of solicited reviews, and the discount factor, respectively. 
The \emph{$(\sigma, \mu_{\QualDist}, \sigma_{\QualDist}, \NumReviews, \TD)$-Double Gaussian model} is a Gaussian model with the prior $\QualDist = \Gaussian{\mu_{\QualDist}}{\sigma_{\QualDist}}$.

\begin{figure}[htb]
     \FIGURE
     {\begin{subfigure}[b]{0.48\textwidth}
         \centering
         \includegraphics[width=\textwidth]{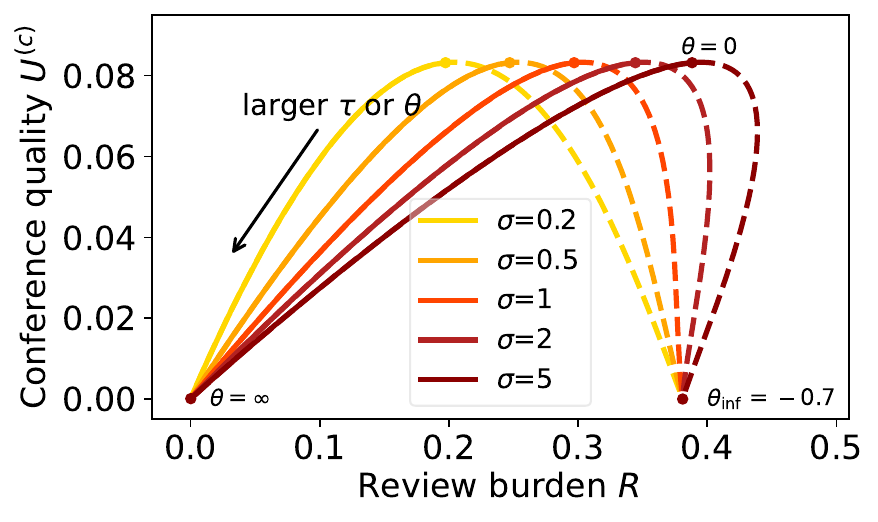}
         \captionsetup{size=}
         \caption{}
     \end{subfigure}
     \hfill
     \begin{subfigure}[b]{0.47\textwidth}
         \centering
         \includegraphics[width=\textwidth]{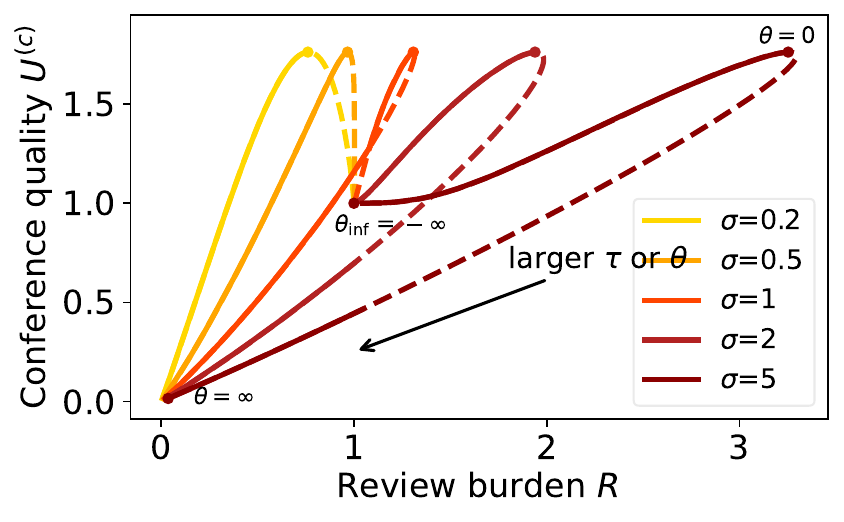}
         \captionsetup{size=}
         \caption{}
     \end{subfigure}
     \hfill
     }
     {QB-tradeoff Curves in the Continuous Model. \label{fig:Pareto_frontier_continuous}}
     {(a) shows QB-tradeoff curves of a \emph{$(\sigma, \mu_{\QualDist} = -1, \sigma_{\QualDist} = 1, \NumReviews = 1, \TD = .7)$-Double Gaussian model}.  (b) shows QB-tradeoff curves of a \emph{$(\sigma, \mu_{\QualDist} = .5, \sigma_{\QualDist} = 2, \NumReviews = 1, \TD = .7)$-Double Gaussian model}.  In each case, the review quality $\sigma$ is varied over five discrete options.  For each $\sigma$, the curve shows the possible QB-tradeoffs as the acceptance threshold is varied continuously. The Pareto frontier is shown with solid lines, while dominated points are shown with dashed lines. 
     }
     \end{figure}

\Cref{fig:Pareto_frontier_continuous} maps the QB-tradeoff for various settings. 
In each setting, there is a point at (0, 0) that corresponds to rejecting all submissions. As the threshold is decreased, high-quality papers start being submitted, increasing both the conference quality and review burden.  When the de facto threshold is 0, conference quality is maximized. Subsequently, a further decrease in the threshold leads to more low-quality papers being accepted, lowering the conference quality.  We see here that the effect on the review burden is mixed: sometimes it increases (e.g., the curves in \cref{fig:Pareto_frontier_continuous} (b) with $\sigma < 1$) while other times it decreases (e.g., the curves in \cref{fig:Pareto_frontier_continuous} (b) with $\sigma \ge 1$).  
As the threshold decreases further, 
two types of behaviors may emerge. If the expected paper quality under \QualDist is non-negative, it is strictly positive conditioned on being above any de facto threshold $\theta > -\infty$, so as the de facto threshold goes to $-\infty$, all papers are submitted and accepted in a single round of review. This behavior can be observed in \cref{fig:Pareto_frontier_continuous} (b), where the curves converge to a point with $R=1$. On the other hand, if the expected paper quality is negative, a sufficiently low de facto threshold $\theta$ leads to a conference value $\ConfValue < 1$, so no authors submit any more, and the QB-tradeoff becomes $(0,0)$. Thus, the lowest meaningful de facto threshold is $\theta = \theta_{\inf}$; this can be observed in \cref{fig:Pareto_frontier_continuous} (a).


\begin{figure}[htb]
     \FIGURE
     {\includegraphics[width=0.6\textwidth]{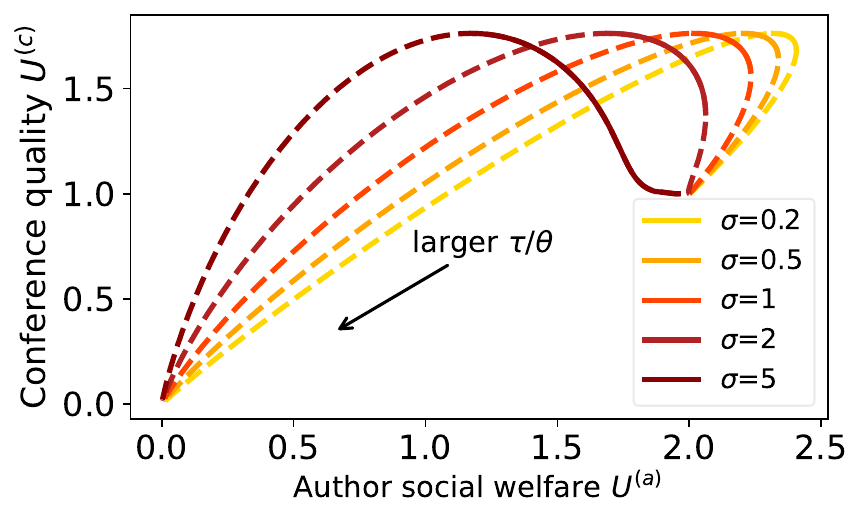}}
     {QA-tradeoff Curves in the Continuous Model. \label{fig:QA_Pareto_frontier_continuous}}
     {The figure shows QA-tradeoff curves of a \emph{$(\sigma, \mu_{\QualDist} = 1, \sigma_{\QualDist} = 3, \NumReviews = 1, \TD = .7)$-Double Gaussian model}. Similar to \cref{fig:Pareto_frontier_continuous}, the Pareto frontier is shown with solid lines, while dominated points are shown with dashed lines.}
\end{figure}

\Cref{fig:QA_Pareto_frontier_continuous} presents an example of QA-tradeoff curves under various review noise levels, where the paper quality prior has a mean of $1$ and a standard deviation of $3$.
The curve starts at the point where all papers are accepted after a single review round: since $\mu_{\QualDist} = 1$, the conference quality is 1, and author welfare equals the conference value, which is 2.
In this example, as the acceptance threshold increases, conference quality improves because authors with negative-quality papers opt not to submit. However, the effect on author welfare depends on the review noise. When the reviews are noisy (high $\sigma$), increasing the threshold primarily reduces the acceptance probability, leading to a monotonic decline in author welfare. When the reviews are accurate (low $\sigma$), the increased acceptance threshold may significantly increase the conference value, because it is much harder for low-quality papers to get accepted. The increased utility upon acceptance can outweigh the drop in acceptance rate and raise author welfare.
Yet, as the acceptance threshold continues to increase (in the left-hand side of \cref{fig:QA_Pareto_frontier_continuous}) so that the corresponding de facto threshold $\theta > 0$, both conference quality and author welfare eventually decline to 0.

\subsubsection{Dominating Acceptance Thresholds}

Using the preceding numerical examples, we examine which acceptance thresholds or de facto thresholds are QB- or QA-dominating, and how the model parameters influence the dominating thresholds.


\paragraph{QB-tradeoff curves.}
Notice that deviations from the quality-maximizing de facto threshold ($\theta = 0$) in \emph{either} direction could be QB-dominating. First, the conference can decrease the threshold to accept some negative-quality papers, in order to accept the positive-quality papers in fewer rounds; alternatively, the conference can increase the threshold to give up on some borderline papers with positive quality which might otherwise take a large number of rounds until acceptance.

Clearly, which intervals of strategies are Pareto optimal depends on the distribution of paper quality and the review noise. For example, if there is a substantially larger number of borderline papers with negative quality than positive quality, marginally lowering the threshold will both degrade conference quality and increase review burden, but marginally increasing the threshold will decrease the review burden, though still degrade conference quality.  The latter will be Pareto optimal while the former will not. This can be seen from panel (a) of \cref{fig:Pareto_frontier_continuous}: increasing $\tau$ or $\theta$ from the quality-maximizing point ($\theta = 0$) yields acceptance policies that are QB-dominating.

Furthermore, as observed from panel (b) of \cref{fig:Pareto_frontier_continuous}, Pareto-optimal tradeoffs are typically achieved by decreasing $\tau$ when $\theta = 0$ under high review noise, and by increasing $\tau$ when review noise is low. This stems from the fact that with high noise, increasing the de facto threshold $\theta$ by a fixed amount (or equivalently, changing the conference quality by a fixed amount), requires a larger increase in the acceptance threshold $\tau$ (\cref{prop:gap-invariant}). This leads to a significant drop in the acceptance probability for all papers, resulting in more rounds of resubmission and thus a larger review burden. In contrast, when review noise is low, a small change in $\tau$ suffices to induce the same increase in $\theta$ (and conference quality). This means that the marginal benefit of increasing $\tau$ outweighs the marginal cost: the gain from reducing review burden by forgoing borderline papers is greater than the loss from slightly lowering the acceptance probability for all papers.

The preceding paragraph's insight can be expressed intuitively as follows: if the reviewing quality is low, then striving for high quality will not dissuade authors of low quality from submitting sufficiently, and the primary effect of imposing ``high standards'' will be to impose multiple rounds of review on most papers (good or bad), before ultimately they are all accepted anyway. Thus, the conference might as well admit that it cannot distinguish paper qualities well, and be lenient in accepting. If the review quality is high, however, things change: bad papers are sufficiently likely to be rejected that the deterrence effect may save the reviewers significant work.
\gscomment{I like this observation.  Not sure this is possible, but could we move it to the front and then explain why.   Currently we are explaining why and then summarizing what is happening.}

\paragraph{QA-tradeoff curves.}
Different from the QB-tradeoff curves, where a sufficiently high acceptance threshold is never dominated, in QA-tradeoff curves, 
the dominating tradeoffs are usually achieved when $\theta\in [\ubar{\theta},0]$ for some negative value $\ubar{\theta}$ (see \cref{fig:QA_Pareto_frontier_continuous}). This means that at the point $\theta = 0$, lowering the acceptance threshold tends to be QA-dominating, primarily because authors prefer lenient acceptance policies.

Recall the suggestion discussed in \cref{subsec:paradox}: the conference can lower the acceptance threshold to accept more papers quickly. 
We have seen, both in the previous section and in \cref{fig:Pareto_frontier_continuous}, that this strategy does not necessarily reduce the review burden due to the resubmission gap. Here, we can see that sometimes it can also harm author welfare.
For example, in \cref{fig:QA_Pareto_frontier_continuous}, we can observe that the policies near the point (2,1) are typically dominated for curves with small $\sigma$. This is because the conference value is endogenous: the policy that trivially accepts all papers tends to have a low conference value, which harms the authors' utilities.

\subsection{Resubmission Gap and Dominating QB-tradeoffs}
\label{subsec:gap-tradeoffs}

Here, we present an intuitive connection between the key concept in \cref{sec:thresholds-gaps}, the resubmission gap, and the key concept in this section, the QB-tradeoff, using the continuous model.
Recall that fixing the number of reviews per paper, the resubmission gap depends on the quality prior, the review noise, the author's discount factor, and the de facto threshold. 
Let $\tau(\theta\mid \QualDist, \RevSigDist, \eta, m)$ be the acceptance threshold that induces a candidate threshold $\theta$ as a de facto threshold, which is a function of $\theta$ conditioned on other model parameters. We present the following result.

\begin{proposition}\label{prop:gap-QB-dominance}
    Consider the continuous model with a fixed number of reviews per paper $m$ and a fixed prior distribution of paper quality $\QualDist$.
    We compare two settings: one with review signal distribution and author-discount factor $(\RevSigDist, \eta)$ and another with $(\RevSigDist', \eta')$. If for every candidate threshold $\theta$, the corresponding resubmission gap is larger in the first setting than in the second, i.e., $\tau(\theta\mid \QualDist, \RevSigDist, \eta, m) >  \tau'(\theta\mid \QualDist, \RevSigDist', \eta', m)$, then the QB-tradeoff of the second setting weakly dominates that of the first.
\end{proposition}

We defer the proof to \cref{app:proof-gap-QB-dominance}.
Intuitively, a setting with a larger resubmission gap requires a larger acceptance threshold to induce the same de facto threshold, resulting in a lower acceptance probability for every submitted paper.
This thus leads to more rounds of resubmissions and increases the review burden.

In the next two subsections, we examine how a larger review noise and a larger author discount factor each enlarge the resubmission gap, thereby leading to a dominated QB-tradeoff. We also analyze how these parameters affect the QA-tradeoff. 

\subsection{Dominating Tradeoffs: Review Noise}
\label{sec:QB-trade-noise}

In comparing the different QB-tradeoff curves of \cref{fig:Pareto_frontier_continuous} and the QA-tradeoff curves of \cref{fig:QA_Pareto_frontier_continuous}, we observe that any curves corresponding to higher-quality (i.e., lower-variance) reviews dominate similar curves corresponding to lower-quality reviews.
We show that this is not a coincidence and holds not just for Gaussian noise in the reviews, but for Blackwell dominating review quality (defined in \cref{def:blackwell}).
Note that the former is a special case of the latter.
In other words, we show that a better review quality can simultaneously benefit all three stakeholders of the review system.
In this subsection, unless otherwise specified, tradeoffs without qualification refer to both QA- and QB-tradeoffs.

The full story is a bit more subtle. Whether Blackwell-dominating reviews imply better tradeoffs depends on what space of acceptance policies the conference can optimize over. We show that if the conference has all memoryless acceptance policies available, then better reviews can always be used to simulate worse reviews, and the conference can thus obtain at least the same tradeoff. Therefore, better reviews weakly dominate worse reviews even when the signals do not satisfy the monotone likelihood ratio (MLR) property. 
However, if the conference is restricted to threshold policies and the reviews do not necessarily have MLR (\cref{def:informative}), carefully chosen ``worse'' reviews may actually permit the use of a better threshold policy, achieving a better tradeoff. 
However, such behavior is indeed the result of signals violating the MLR property: if the review signals have MLR, Blackwell dominance again implies a weakly better QB-tradeoff under threshold policies.

\begin{definition}[Blackwell Dominance \citep{bohnenblust1949reconnaissance,blackwell1953equivalent}]
\label{def:blackwell}
Let $\RevSigDist: \QualSet \times \SigSet \to [0,1]$ and $\RevSigDistP: \QualSet \times \SigSet' \to [0,1]$ be two review signal distributions.
$\RevSigDist$ \emph{Blackwell dominates} $\RevSigDistP$ if there exists a garbling $\gamma: \SigSet \times \SigSet' \to [0,1]$ from $\SigSet$ to $\SigSet'$, where for all $\REVSIG \in \SigSet$,
$(\gamma(\REVSIG, \REVSIGP))_{\REVSIGP \in \SigSet'}$ is a distribution on $\SigSet'$, such that for all $\REVSIGP \in \SigSet'$ and all $q \in \QualSet$:
\begin{align*}
\RevSigProbP[q]{\REVSIGP} 
& = \sum_{\REVSIG \in \SigSet} \RevSigProb[q]{\REVSIG} \cdot \gamma(\REVSIG, \REVSIGP).
\end{align*}
\end{definition}

\subsubsection{General memoryless acceptance policies.}

We state the following proposition in the categorical model with $\NumReviews = 1$ review. We discuss the (straightforward) extension to the continuous model and multiple reviews below.

\begin{proposition}
\label{prop:blackwell}
Consider two settings with $\NumReviews = 1$ in the categorical model that are identical except for the review signal distributions (which need not have MLR): 
the distribution $\RevSigDist$ of the first setting Blackwell-dominates the distribution $\RevSigDistP$ of the second setting.
Then, over memoryless acceptance policies, the tradeoff in the first setting weakly dominates the tradeoff in the second. 
\end{proposition}

The proof is given in \cref{app:proof-blackwell}. At a high level, the proposition holds because if $\RevSigDist$ Blackwell-dominates $\RevSigDistP$, then in the setting with $\RevSigDist$, a policy $\ACCMAP$ can perform the garbling $\gamma$ from \cref{def:blackwell} itself and then apply $\ACCMAP[']$; if $\ACCMAP[']$ is monotone, one can show that so is the resulting $\ACCMAP$.
Thus, the tradeoff in the first setting must weakly dominate the second setting.

When the two settings have $\NumReviews > 1$ reviews drawn independently from $\RevSigDist$ and $\RevSigDistP$, respectively, where $\RevSigDist$ Blackwell-dominates $\RevSigDistP$, we can use the fact that applying the same garbling independently in each dimension gives a garbling on the $\NumReviews$-dimensional signal vectors. Therefore, viewing the entire vector as just one signal, the distribution in the first setting Blackwell-dominates that in the second setting, and Proposition~\ref{prop:blackwell} applies directly. 
Performing this reduction relies on the fact that Proposition~\ref{prop:blackwell} did not require signals to have MLR. After all, the MLR property is defined only for scalar-valued signals.
Similarly to the case of \emph{better} reviews, when \emph{more} reviews are obtained in the first setting, and the reviews in both settings are drawn from the same distribution, the signal of the combined reviews in the first setting Blackwell-dominates the signal in the second setting: this is because discarding the additional signals is easily seen to be a garbling.

\subsubsection{Threshold acceptance policies.}
\label{subsec:threshold-BW-domi}

Proposition~\ref{prop:blackwell} shows that a better review quality (in the Blackwell sense) implies a better tradeoff if we allow the conference to apply any memoryless acceptance policy. This result even holds for review signals that do not necessarily have MLR. In the following proposition, we further show that even if the conference is restricted to applying threshold policies, the same result holds if the review signals  \emph{do} satisfy the monotone likelihood ratio property.

\begin{proposition} \label{prop:blackwell-threshold}
  Consider two settings with $\NumReviews = 1$ \fangcomment{continuous?}\yzcomment{this should hold for both model, even though the proof is written for the categorical model.} review that both satisfy the MLR property and are identical except for the review signal distributions: 
  the distribution $\RevSigDist$ of the first setting Blackwell-dominates the distribution $\RevSigDistP$ of the second setting.
  Then, over threshold acceptance policies, the QA- and QB-tradeoff curves in the first setting weakly dominate those in the second.
\end{proposition}

The following lemma is central to the proof of \cref{prop:blackwell-threshold}.

\begin{lemma}\label{claim:blackwell-RB-better}
    Consider two threshold acceptance policies $\ACCMAP$ and $\ACCMAP[']$ which accept papers of quality $\bar{q}$ with equal probability in the first and the second setting in \cref{prop:blackwell-threshold}, respectively.  
    Then, 
    the acceptance probability of a paper of quality $q$ in the first setting is no less than that in the second setting, for any $q>\bar{q}$.
\end{lemma}

The intuition behind \cref{claim:blackwell-RB-better} is that a larger review noise enlarges the resubmission gap, requiring a larger acceptance threshold to induce the same de facto threshold. Consequently, in the setting with better review quality, every submitted paper has a higher acceptance probability.

The proof of \cref{prop:blackwell-threshold}, which is deferred to \cref{app:proof-blackwell-threshold}, then follows by creating an appropriate $\ACCMAP$ from $\ACCMAP[']$ such that there exists a paper quality with the same acceptance probability in both settings and \cref{claim:blackwell-RB-better} can be applied.
The lemma then suggests that the first setting can accept every submitted paper with fewer rounds of resubmissions than the second setting, leading to a smaller review burden and a larger author welfare.

While the result again generalizes from the categorical model to continuous signals, it does not generalize to $\NumReviews > 1$ signals. The reason is that it relies on signals having MLR, a property that is not preserved when combining signals (shown in \cref{ex:threshold-counterexample} in \cref{sec:testing_tradeoff}). 
In particular, we have numerically found counterexamples which suggest that even though the review signal distribution of one setting Blackwell-dominates the distribution of another setting (and both review signals have MLR), it is possible that after combining two independent signals in each setting, the tradeoff achieved by threshold policies in the first setting does not (weakly) dominate the tradeoff in the second setting. Our counterexamples use three types of paper qualities, and the signal set contains three signals. Unfortunately, these counterexamples are both complicated and unintuitive, so we omit them from this paper.
 
These examples also suggest that if the reviews do not have MLR, Blackwell dominance does not imply better tradeoffs under threshold acceptance policies.

\subsection{Dominating Tradeoffs: Discount Factor}
\label{sec:dominating-value-discount}

We now explore the effect of the author discount factor on QA- and QB-tradeoffs.
We show that more patient authors unequivocally lead to a worse QB-tradeoff. 
However, the effect on QA-tradeoff is mixed --- whether author welfare increases with $\eta$ depends on the quality prior, the review noise distribution, and the de facto threshold.

\paragraph{QB-tradeoffs.} As we have seen in \cref{sec:additive-noise}, a larger $\eta$ \dkreplace{enlarges}{increases} the resubmission gap, \dkreplace{meaning}{implying} that authors are more persistent in resubmitting their rejected papers under the same condition. Therefore, the conference has to raise its acceptance threshold so as to maintain its quality. This then results in more resubmissions and thus leads to a dominated QB-tradeoff. We formalize this insight in the following proposition.
\begin{proposition} \label{lemma:query burden value}
Consider two settings that are identical except that they have different author discount factors, $\eta > \eta'$. Then, the QB-tradeoff curve in the setting with discount factor $\eta'$ dominates the QB-tradeoff curve in the setting with $\eta$.
\end{proposition}


We defer the proof to \cref{app:proof-QB-tradeoff}. 
\cref{lemma:query burden value} suggests that having more patient authors will harm the QB-tradeoff, in the sense that it will be dominated by the original setting. 
It also cautions against certain peer review experiments that may increase the author discount factor --- for example, shortening the review cycle or eliminating the rebuttal phase.

\paragraph{QA-tradeoffs.}
We now investigate how author welfare changes with the discount factor $\eta$. 
Intuitively, while increasing $\eta$ raises utility upon acceptance for a fixed number of resubmissions, it also widens the resubmission gap, forcing the conference to raise its threshold. This, in turn, increases the expected number of resubmissions, potentially reducing authors’ utility.

We formalize the above intuition in the continuous model.
Suppose that the de facto threshold is fixed at $\theta$, which implies fixed conference value $\ConfValue$ and conference quality $\CONFUTIL$.
The utility of an author with a paper of quality $q$ is then a function of the acceptance threshold $\tau$ (which depends on $\eta$), i.e., $u^{(a)}(q, \tau(\eta)) = \frac{\AccP{\tau(\eta)}{q} \cdot V}{1-\eta \cdot (1-\AccP{\tau(\eta)}{q})}$.
The acceptance threshold should make the authors with borderline papers indifferent between submitting and taking the outside option, i.e., $\tau(\eta)$ is the solution in $\tau$ to $\AccP{\tau}{\theta} = \frac{1-\eta}{V-\eta}$ with $V$ being a constant.
Writing the author's utility as a function of $q$ and $\eta$, we are primarily interested in the conditions under which $u^{(a)}(q, \eta) = \frac{\AccP{\tau(\eta)}{q} \cdot V}{1-\eta \cdot (1-\AccP{\tau(\eta)}{q})}$ is increasing in $\eta$.

\begin{proposition}\label{lem:QA_eta}
Consider the continuous model with a fixed candidate threshold $\theta$ and a discount factor $\eta$. Let $\tau$ be the acceptance threshold that induces $\theta$ as the de facto threshold, as defined in \cref{prop:gap-invariant}. Then, for any paper of quality $q \ge \theta$, the author's marginal utility with respect to $\eta$ is positive
if $h(q) < h(\theta)$, negative if $h(q) > h(\theta)$, and zero if $h(q) = h(\theta)$, where
\[
h(q) = \frac{f^{(r)}(\tau - q)}{F^{(r)}(\tau - q) \cdot \left(1 - F^{(r)}(\tau - q)\right)},
\]
and $f^{(r)}$ and $F^{(r)}$ are the pdf and cdf of the review noise distribution, respectively.
\end{proposition}

We defer the proof of \cref{lem:QA_eta} to \cref{app:proof-QA-eta}.
The proposition highlights that whether an author’s utility increases with the discount factor $\eta$ depends on the behavior of the function $h(q)$ for $q > \theta$, which in turn depends on the shape of the review noise distribution. 
Notably, even for standard noise models such as the normal distribution, there is no general guarantee that $h(q) > h(\theta)$ or $h(q) < h(\theta)$, meaning that an author's utility is in general non-monotonic in $\eta$.

For example, when $f^{(r)}$ follows a zero-mean normal distribution, $h(q)$ is symmetric at $\tau$ and convex, with $h(q) \to \infty$ as $q \to \pm\infty$, and minimized at $q = \tau$. 
This implies that when $\tau > \theta$\textemdash as is typically the case\textemdash there exists a threshold $\bar{q} > \theta$ such that author utility increases with $\eta$ for $\theta < q < \bar{q}$, but decreases for $q > \bar{q}$. 
Intuitively, borderline papers near the de facto threshold benefit more from a higher discount factor as they usually experience more rounds of resubmissions.
In contrast, authors with high-quality papers usually experience fewer rounds of rejections and thus can benefit less from an increase in $\eta$. 
Their utilities decrease in $\eta$ because the conference has to raise $\tau$ so as to preserve the conference quality in response to the increase in $\eta$.

However, for heavy-tailed review noise distributions such as the Cauchy distribution, the function $h(q)$ behaves differently: it increases from $0$ to its maximum value as $q$ increases toward $\tau$, and then decreases back to $0$ as $q \to \infty$. As a result, the pattern is reversed: authors’ marginal utility is positive in $\eta$ for high-quality papers, but is negative in $\eta$ for papers with relatively lower quality.
Intuitively, under heavy-tailed review noise, even high-quality papers have a non-trivial probability of being rejected in a single round of submission.

Because author welfare is the integral of individual author utilities across the quality distribution, it can both increase or decrease with $\eta$, depending on the review noise distribution and the paper quality prior. Therefore, attempts that aim to reduce the cost of resubmissions not only exacerbate the review burden but can also, in some cases, reduce overall author welfare.

\subsection{Acceptance Rate}\label{sec:acc_rate}

One may suspect that the higher the threshold, the more selective the conference, so the lower the acceptance rate.
But this is not always true. The reason is self-selection by authors of weaker papers, who may not submit in the first place.  As a result, those papers will not be rejected. 

We first develop some mathematical tools to help us reason about the interaction between the selectivity of the conference (the de facto threshold) and the acceptance rate.

Consider the continuous model.
Let $\tau$ be a non-trivial acceptance threshold, and $\theta$ the corresponding de facto threshold. As before, let $\AccP{\tau}{q}$ be the probability that a paper of quality $\Qual=q$ is accepted at the conference.  
In round $t$, the total resubmission ``density''
of papers with quality equal to $q\ge \theta$ is equal to 
$\QualDens{q}\cdot \sum_{j=0}^{t} (1-\AccP{\tau}{q})^{t-j}.$
As $t$ gets larger, 
this converges to $\QualDens{q}/\AccP{\tau}{q}$. 
Of these papers, a $\AccP{\tau}{q}$ fraction will be accepted in each round.
Hence, the acceptance rate converges to
\begin{equation}\label{eq:accept-rate}
    \AccRate = \frac{\text{Number of papers accepted this round}}{\text{Number of papers submitted this round}}  = \frac{\int_{\theta}^{\infty}\QualDens{q} dq}{%
        \int_{\theta}^{\infty} \QualDens{q}/\AccP{\tau}{q} \; dq}. 
\end{equation}

We now use \cref{eq:accept-rate} to intuitively reason about how the prior distribution $\QualDist$ affects the acceptance rate $\AccRate$.  Notice that it is the papers with low acceptance probabilities that disproportionally decrease the acceptance rate.  This is because they add only their mass to the numerator, but add their mass scaled by $1/\AccP{\tau}{q}$ to the denominator.  Thus, intuitively, for papers with quality at least $\theta$, if a $z$ fraction are borderline with quality  ``near'' $\theta$, and the other $1-z$ fraction has a very high acceptance probability,  the acceptance rate can be approximated by $\frac{1}{z/\AccP{\tau}{\theta} + (1 - z)}$.  Notice that this quantity is decreasing with $z$: the larger $z$, the smaller the acceptance rate.   

The measurement of $z$ intuitively resembles the \emph{hazard rate} of a distribution\fangcomment{Can I say the hazard rate of $\QualDist$?}, which is defined as $\frac{f(x)}{1-F(x)}$, where $f$ is the probability density function, and $F$ is the cumulative distribution function.  Similar to $z$, the hazard rate measures the probability of a paper on a boundary at $x$, $f(x)$, relative to the mass of papers larger than $x$, $1 - F(x)$.   The hazard rate is known to be monotone for thin-tailed distributions, like the Gaussian distribution. 

The acceptance rate also depends on the acceptance probability of the borderline papers, $\AccP{\tau}{\theta}$. Based on \cref{lem:author_response}, this probability must be equal to $1/\rho(\theta)$, where the attractiveness factor $\rho$ is a function of $\theta$. As we argued before, $\rho$ is increasing in $\theta$, which means that $\AccP{\tau}{\theta}$ is decreasing in $\theta$. Therefore, the effect of $\AccP{\tau}{\theta}$ on the acceptance rate is monotone: fixing $z$, a larger acceptance probability of borderline papers always leads to a larger acceptance rate of the conference.

Using the above intuition, we might expect the acceptance rates to decrease as $\theta$ increases for the Gaussian prior.  This is because Gaussian distributions have a monotone (increasing) hazard rate; thus, both $z$ and $\AccP{\tau}{\theta}$ change in the direction to drive down the acceptance rate as $\theta$ increases.  
However, we might expect the acceptance rates to increase for quality distributions with non-monotone hazard rate. This is because for a quality prior with a non-monotone hazard rate, $z$ may decrease at some point and drive up the acceptance rate, since raising the acceptance threshold can substantially reduce the proportion of borderline papers among all submissions.

\cref{fig:Acc_rate_continuous} gives evidence to support this intuition: we observe that the acceptance rate is non-monotone in $\theta$ when the quality prior is a two-peaked Gaussian distribution (panel (c)). 
However, a non-monotone hazard rate in the quality prior does not necessarily imply a non-monotone acceptance rate. As illustrated in panel (b), the Cauchy prior, despite having a non-monotone hazard rate, yields a monotonically decreasing acceptance rate.

We also observe that compared with settings with smaller $\sigma$, settings with noisier review quality tend to have a monotone decreasing acceptance rate. This is because by \cref{prop:gap-invariant}, a larger review noise leads to a larger resubmission gap. Therefore, with the same increase in $\theta$, a larger rise in $\tau$ is required, and thus borderline papers are rejected with a larger probability.  
\gscomment{I don't think we have the energy/time to do this now, but I think this is a very nice insight if it formallizes.  In particular, this seems to be driving the results of section 4.1.   Overall, connecting observations to a few technical drivers helps to organize the results.} 
In other words, the derivative of $\AccP{\tau}{\theta}$ in $\theta$ is larger when the review is noisier. This enhances the original intuition that a more selective conference has a lower acceptance rate.

\begin{figure}[htb]
     \FIGURE
     {\begin{subfigure}[b]{0.32\textwidth}
         \centering
         \includegraphics[width=\textwidth]{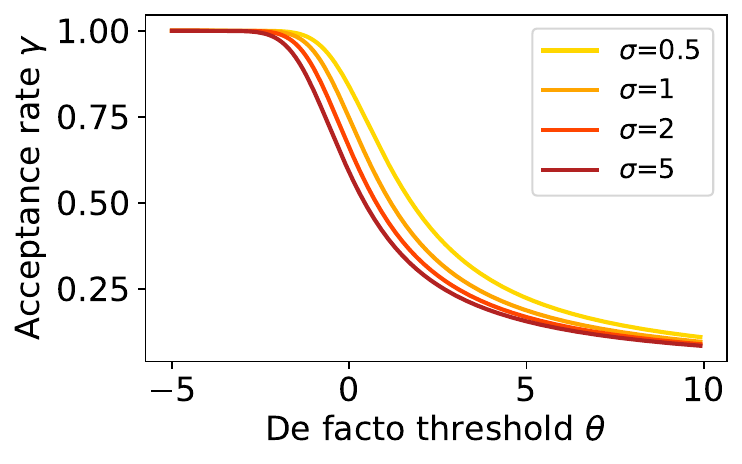}
         \captionsetup{size=}
         \caption{Gaussian prior.}
     \end{subfigure}
     \hfill
     \begin{subfigure}[b]{0.32\textwidth}
         \centering
         \includegraphics[width=\textwidth]{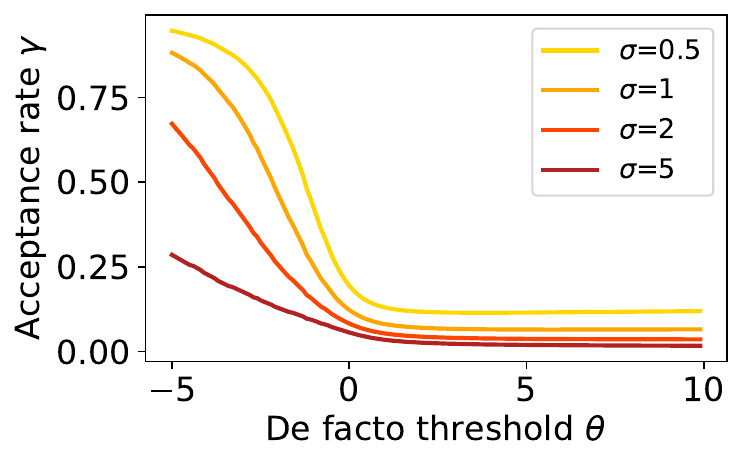}
         \captionsetup{size=}
         \caption{Cauchy prior.}
     \end{subfigure}
     \hfill
     \begin{subfigure}[b]{0.32\textwidth}
         \centering
         \includegraphics[width=\textwidth]{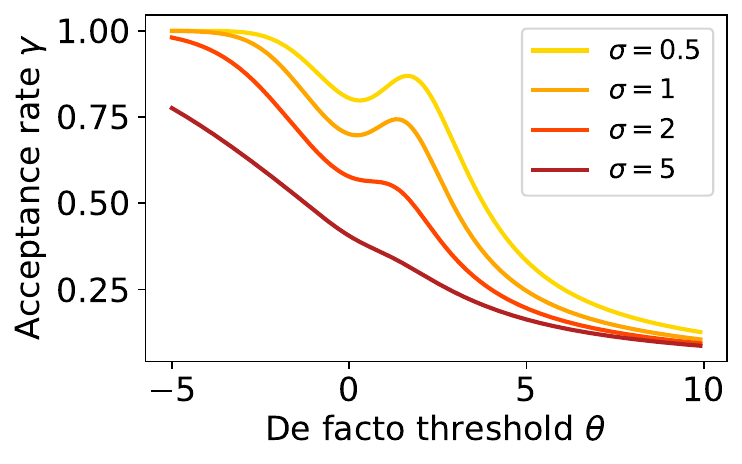}
         \captionsetup{size=}
         \caption{Mixture Gaussian prior.}
     \end{subfigure}
     \hfill
     }
     {The Acceptance Rate vs.~De Facto Threshold under Different Prior Distributions. \label{fig:Acc_rate_continuous}}
     {The noise distribution $\REVNOISEDIST$ is fixed as a normal distribution with zero mean and standard deviation of $1$. Three types of prior distributions of the paper quality are considered: (a) the Normal distribution with $\mu_q = 0$ and $\sigma_q = 1$; 
     (b) the Cauchy distribution with $\mu_q = 0$ and $\sigma_q = 1$;
     (c) the mixture of two Normal distributions with $\lambda = 0.5$, $\mu_{q} = 0$, $\mu_q' = 4$, and $\sigma_q = \sigma_q' = 1$.}
\end{figure}

Another interesting observation is that the quality distribution of submitted papers is not a good reflection of the prior quality distribution of papers, even conditional on being above the de facto threshold.  The reason is that papers nearer the de facto threshold need to be submitted more times (on average) before being accepted than higher-quality papers.  Therefore, they are over-represented among the submitted papers. This aligns well with many reviewers' observation in the real world that many of their assigned papers seem to be borderline. However, by \cref{eq:accept-rate}, the quality distribution of accepted papers is an accurate reflection of the prior quality distribution of papers conditional on being above the de facto threshold.

\section{Discussion and Conclusion}

\subsection{Implications}\label{sec:implications}
We suggest some possible interpretations and lessons that might be learned from our analysis.

\emph{Resubmission Gap.}
Perhaps the cleanest result concerns the resubmission gap. If a conference operates like our model with noiseless authors, then we would observe the following:  \emph{Every paper ever submitted to the prestigious conference is eventually accepted; however, at any given conference, many papers will be rejected, and thus need to be submitted multiple times.}  With many parameters, the vast majority of papers are rejected. 
On the surface, this sounds like a dystopian bureaucracy. 

The frequently proposed and obvious reaction is to accept all acceptable papers the first time, without making them resubmit multiple times.  This would allegedly decrease the reviewing load and increase author welfare (because each paper would only be reviewed once), without affecting the quality of the conference because all of the papers were going to be accepted anyway. Superficially, this seems very reasonable.

However, our model warns that this is an unlikely outcome.  Instead, by lowering its acceptance threshold, the conference would also lower its de facto threshold.  While the papers currently being submitted could be overwhelmingly accepted in one round, it would invite more, lower-quality submissions.
These lower-quality submissions would themselves be repeatedly submitted, and so the review burden would not necessarily decrease, and could even increase.
Furthermore, as these lower-quality papers dilute the conference value, author welfare may decline, as illustrated in \cref{fig:QA_Pareto_frontier_continuous}.

Of course, this is not a perfect reflection of reality. In particular, because authors are not all aware of their papers' qualities, some submissions are of low quality and are very unlikely to ever be accepted.  However, the experience of this article's authors is that for some prestigious conferences, the situation is similar to a large extent: the pool of submitted papers has been self-selected to those that will eventually appear in a good venue. 

Additional relevant modeling ``blind spots'' are that the review noise for some papers may be different from others, agents may have different levels of patience and different utilities for their outside options, and the distribution of paper qualities may react to the acceptance policy.  However, it is not clear that any of these model limitations fundamentally challenge the insight that there is a gap between the quality of papers implied by the threshold acceptance policy and the types of papers submitted and eventually accepted.

\emph{Threshold Policy and tradeoffs.}
We investigated how the conference's acceptance policy influences the tradeoffs faced by stakeholders in the review system, with particular emphasis on how conference quality trades off against review burden and author welfare.
In the preceding discussion of the resubmission gap, it was not clear what happens to the review burden as the acceptance policy becomes more strict or lenient.  This is because it depends on the prior distribution of paper qualities and the review quality.  This effect was studied in \cref{subsec:QB-tradeoff}, where we observe the following. First, extremely stringent policies are always Pareto optimal, since the review burden is minimized when the acceptance threshold is set to infinity. Second, lenient policies tend to be Pareto optimal when reviews are highly noisy and the prior expected quality is high. In such cases, the conference benefits from accepting more papers sooner, which corresponds to a low acceptance threshold.
All of these policies can be easily identified in practice.


\emph{Discount Factor.}
As shown in \cref{sec:dominating-value-discount}, in the noiseless author setting, increasing the discount factor creates a strictly worse QB-tradeoff curve. This implies that an intervention that decreases the discount factor and burdens the authors may improve the QB-tradeoff.  Examples include long review times, rebuttal periods, or onerous formatting requirements. A smaller discount factor for resubmission will intuitively allow the conference to decrease its acceptance threshold while keeping the same de facto threshold, and thus may decrease the review load without impacting conference quality. Essentially, such interventions artificially internalize the negative externality of imposing reviews upon others.
In fact, such impositions on the authors may even increase the authors' utilities, though this outcome is far from universal.

Conversely, our model predicts that well-meaning efforts to reduce the resubmission burdens may very well worsen the QB-tradeoff, and may eventually harm the authors' utility in some cases.  Proposed and executed reforms include: decreasing the required time to review papers for a given conference, ``fast-track'' resubmissions of papers recently rejected with sufficiently high scores, and a ``desk reject'' phase, where papers that appear subpar are quickly returned to authors without review.  These reforms artificially increase the time discount $\TD$ by decreasing the time between submissions.  This forces the conference to increase its acceptance policy threshold if it would like to maintain the same de facto threshold (and thus maintain conference quality). Consequently, the overall review burden is increased and, in some instances, author welfare is harmed as well.


\emph{Acceptance Rate.}
In \cref{sec:acc_rate}, we showed that whether the acceptance rate increases, decreases, or remains steady as the acceptance threshold increases depends on the paper quality prior distribution, in particular, a quantity resembling the hazard rate of the quality prior.
This warns against using the acceptance rate as a signal of quality. 
At issue is that for certain priors of paper quality, as the selectivity increases, the fraction of papers near the boundary may increase, leading to a larger acceptance rate.  

Additional factors may complicate this picture. For example, higher-quality papers may have different distributions of review noise than lower-quality papers.

\emph{Quality vs.~Quantity of Reviews.}
Our empirical results here discourage the strategy of soliciting a large number of reviews per paper. As shown in \cref{sec:noisy_abm}, any number of solicited reviews larger than $3$ greatly burdens the review system but is unlikely to bring enough benefits to the conference quality. 
Instead, our model predicts that a small number of solicited reviews, even with one review per paper, can be optimal if the conference is able to find the optimal acceptance threshold. The intuition is that when authors know the quality of their papers well enough, any number of solicited reviews, as long as it is combined with the optimal acceptance threshold, can take advantage of authors' self-selection such that only the desired papers are submitted, and eventually accepted. 

\emph{Institutional Memory.}
Our results in \cref{sec:memory_policy} indicate that having historical reviews follow submissions, or allowing the conference to limit the number of times a paper can be submitted, can help improve the QB-tradeoff (\cref{fig:memory_policies}). However, given that the improvement in the maximum conference quality is rather marginal, its main effect is to reduce the review burden. The intuition is that with such a policy, the conference can be strict in the first few rounds and relax the acceptance threshold for repeated resubmissions so that only the good papers will be submitted and accepted more quickly.

In summary, our results provide the following insight: the design of the acceptance threshold should consider the de facto threshold it may induce. Being aware of the authors' best response, the conference can optimally set its acceptance threshold to achieve a near-optimal quality with a small review burden, even with a small number of solicited reviews per paper and a simple memoryless acceptance policy.

\subsection{Limitations and Future Directions} \label{sec:limitations}

\emph{A Single Prestigious Conference.}
One limitation of this work is that we assume a single prestigious conference.  As mentioned in \cref{sec:model}, this can model several prestigious conferences that are more or less cooperating to uphold community standards.

Of course, in reality, there is an ecosystem of conferences, and not all of them are either top-tier or a side option. In such a setting, our analysis could model the decision to submit to a top-tier or second-tier conference. The utility for submitting to the second tier could be normalized to 1. The issue with this, however, is that the second-tier conference still needs to review its submissions. Furthermore, having multiple outside options can increase heterogeneity in authors’ utilities of outside options --- authors with higher-quality papers may have more attractive alternatives. This heterogeneity could lead to more complex, non-threshold author best responses.
There are other analyses (see related work in \cref{sec:related-work}) that have considered venues of different values. 

Furthermore, our model omits competition between conferences.  For example, conferences may compete to attract more papers by attempting to increase their quality, making their acceptance policy more predictable, creating a faster turnaround time, etc. 
Future work could extend our model to these settings. 

\emph{Heterogeneity.}
We also did not model various sources of heterogeneity in the process. For example, the qualities of reviews are not uniform, and different authors have different levels of patience (e.g., a Postdoc who will be on the job market vs.~a first-year student or a tenured faculty member). We are also not modeling the effects of biases that may impact different researchers disproportionately.  The uneven impact on different author populations would be made more complex by co-authorship.  As with the previously mentioned endogenous review quality, a main difficulty would be that this model would have more parameters, and as such require learning/setting them, which could lead to arbitrary choices.

\emph{Additional Feedback Loops.}
There are several feedback loops that we disregard.  We model paper quality as exogenous; however, in reality, it is largely determined by authors who decide how much effort or time to expend improving their papers.  Authors often write a paper to target particular venues. As the venues change their policies, the underlying distribution of paper qualities is likely to change.  Moreover, authors may improve their papers in response to reviewer feedback. Additionally, in our model, the review burden does not impact the quality of the reviews or the amount of time authors (who are typically also the reviewers) spend writing papers. 

As mentioned in the discussion of related work, several past papers do try to model these complexities and provide insights with agent-based simulations \citep{kovanis2016complex, bianchi2018peer, squazzoni2012saint}.

Here, we focus on a simpler and cleaner model than can be afforded when including these complexities.  Apart from the analysis being more difficult, it is often difficult to know precisely how these feedback loops function, which may lead to even greater uncertainty regarding the accuracy of a model, and any insights derived from it.  

\emph{Modeling Paper Quality.}
Finally, it should be noted that papers do not have an ``objective'' quality that can be projected to a single dimension or even multiple dimensions.  One approach, which adds minimal complexity along these lines, is to distinguish between different types of poor-quality papers.  Some papers may be deemed low quality because they are methodologically flawed;  others because their contributions may be incremental.  The first of these might be easier for reviews to detect and agree on than the second.

\subsection{Broader Impact and Conclusion} 

Despite the fact that there may be no agreed-upon objective metric, still, the rigor of peer review is important to a healthy future of academic research.
We avoid concrete recommendations, as many quantities and observed trends depend on model parameters and on unmodeled real-life effects. We hope that our theory (and simulations) can steer the discussion, uncover parameters to focus on, and inform decision makers. We believe that the focus on the resubmission gap, and the importance of reviewing quality over quantity, are important points to start a discussion in the community which may not have been as easily identified without studying a model like ours. 
We thus envisage our findings not merely as an academic contribution, but as catalysts for changing the conversation concerning conferences' acceptance and review policies.
Even more broadly, the impact of our work can reach beyond the context of conference peer review, finding relevance across diverse applications, including grant reviewing,  school admissions, and company recruitment.

\bibliographystyle{abbrvnat}
\bibliography{names,conferences,reference}

\newpage

\ECSwitch

\section{Additional Proofs}
\label{app:proofs}
\subsection{Proof of Proposition~\ref{prop:monotone-prob}}\label{app:proof-monotone-prob}

In order to prove the proposition, we first note the well-known fact [\cite{whitt1979note,milgrom1981good,shaked2007stochastic}] that monotone likelihood ratio (MLR) implies first-order stochastic dominance (FOSD) of the distribution of the signal conditioned on a higher parameter (as well as for the posterior distribution of the parameter conditioned on a higher signal).

\begin{lemma} \label{lem:FOSD}
Assume that the family of review signal distributions has MLR.
Then, whenever $q' > q$, the signal distribution for $q'$ first-order stochastically dominates the distribution for $q$; that is, the distributions satisfy that $\Prob[\RevSig{} \sim {\RevSigDist[q']}]{\REVSIG \geq x} > \Prob[\REVSIG \sim {\RevSigDist[q]}]{\REVSIG \geq x}$ for all $x \in (\inf \SigSet, \sup \SigSet)$. 
\end{lemma}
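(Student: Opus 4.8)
The plan is the classical single-crossing argument for monotone likelihood ratios. For a quality $q$, write $\RevSigProb[q]{s}$ for the ``probability'' of signal $s$ --- a probability mass function in the categorical model, a Lebesgue density in the continuous model --- and abbreviate the survival function by $\bar F_q(x) := \Prob[s \sim {\RevSigDist[q]}]{s \geq x}$. By the full-support assumption, $\RevSigProb[q]{s} > 0$ for every $s \in \SigSet$, so the likelihood ratio $\ell(s) := \RevSigProb[q']{s}/\RevSigProb[q]{s}$ is well-defined and positive everywhere, and \cref{def:informative} says precisely that $\ell$ is strictly increasing on $\SigSet$. (If $\SigSet$ has only one point, then $(\inf \SigSet, \sup \SigSet)$ is empty and there is nothing to prove, so assume $|\SigSet| \geq 2$; then $\ell$ is non-constant.) I would then consider the signed density $g(s) := \RevSigProb[q']{s} - \RevSigProb[q]{s} = \RevSigProb[q]{s}\bigl(\ell(s) - 1\bigr)$, whose sign at each $s$ is the sign of $\ell(s) - 1$. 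Since $\ell - 1$ is strictly increasing, it changes sign at most once, so there is a threshold $s_{0} \in [\inf \SigSet, \sup \SigSet]$ such that $g(s) < 0$ for every $s \in \SigSet$ below $s_{0}$ and $g(s) > 0$ for every $s \in \SigSet$ above $s_{0}$ (if $\ell$ happens to equal $1$ at a point of $\SigSet$, that point may serve as $s_{0}$). Moreover $g$ sums/integrates to $0$ over $\SigSet$ because $\RevSigDist[q]$ and $\RevSigDist[q']$ are both probability distributions; hence $g$ is neither everywhere nonnegative nor everywhere nonpositive --- either would force $g \equiv 0$, i.e.\ $\ell \equiv 1$, contradicting strict monotonicity --- and therefore $s_{0} \in (\inf \SigSet, \sup \SigSet)$.

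Next I would fix $x \in (\inf \SigSet, \sup \SigSet)$ and observe that $\bar F_{q'}(x) - \bar F_q(x)$ equals the total $g$-mass on $\{ s \geq x \}$ (a sum in the categorical model, an integral in the continuous model), and also equals minus the total $g$-mass on $\{ s < x \}$ since the two add to $0$. If $x \leq s_{0}$, I use the latter expression: the event $\{ s < x \}$ carries positive $\RevSigDist[q]$-mass because $x > \inf \SigSet$ and $\RevSigDist[q]$ has full support, and every such $s$ satisfies $s < x \leq s_{0}$, hence $g(s) < 0$; so the $g$-mass on $\{ s < x \}$ is strictly negative and $\bar F_{q'}(x) - \bar F_q(x) > 0$. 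If instead $x > s_{0}$, I use the former expression: $\{ s \geq x \}$ carries positive $\RevSigDist[q]$-mass because $x < \sup \SigSet$, and every such $s$ satisfies $s \geq x > s_{0}$, hence $g(s) > 0$; so the $g$-mass on $\{ s \geq x \}$ is strictly positive and again $\bar F_{q'}(x) - \bar F_q(x) > 0$. This is exactly the asserted strict first-order stochastic dominance.

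The only genuinely delicate point is getting the \emph{strict} inequality rather than merely weak dominance: that is precisely where the full-support hypothesis on $\RevSigDist[q]$ and the restriction of $x$ to the \emph{open} interval $(\inf \SigSet, \sup \SigSet)$ are used. The one bookkeeping nuisance in treating the categorical model alongside the continuous one is that $s_{0}$ may be an atom of $\RevSigDist[q]$ at which $\ell = 1$ exactly; this is harmless, since the case analysis invokes strictness of $g$ only on $\{ s < x \} \subseteq \{ s < s_{0} \}$ or on $\{ s \geq x \} \subseteq \{ s > s_{0} \}$, never at $s_{0}$ itself. An essentially equivalent route that avoids locating $s_{0}$ is the change-of-measure identity expressing $\bar F_{q'}(x)$ as the $\RevSigDist[q]$-average of $\ell$ over $\{ s \geq x \}$: monotonicity of $\ell$ and full support then give $\bar F_{q'}(x) \geq c\, \bar F_q(x)$ and $1 - \bar F_{q'}(x) \leq c\,\bigl(1 - \bar F_q(x)\bigr)$ for a constant $c$ separating the values of $\ell$ below $x$ from those at or above $x$, with at least one inequality strict, and a two-line case split on whether $c \geq 1$ or $c \leq 1$ closes it.
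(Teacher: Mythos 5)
Your argument is correct. The single-crossing reasoning is sound: with $\ell(s)=\RevSigProb[q']{s}/\RevSigProb[q]{s}$ strictly increasing (this is exactly \cref{def:informative}) and $g=\RevSigProb[q']{\cdot}-\RevSigProb[q]{\cdot}$ integrating to zero, $g$ is strictly negative below and strictly positive above a single crossing point, and your two-case computation of $\bar F_{q'}(x)-\bar F_q(x)$, using full support to guarantee that $\{s<x\}$ and $\{s\ge x\}$ each carry positive $\RevSigDist[q]$-mass for $x$ in the \emph{open} interval $(\inf\SigSet,\sup\SigSet)$, delivers precisely the strict dominance claimed; the care you take at the crossing point (where $\ell$ may equal $1$) and in ruling out $g\equiv 0$ is exactly what is needed for strictness.

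The comparison with the paper is simple: the paper does not prove \cref{lem:FOSD} at all. It invokes it as a well-known consequence of the MLR property, citing \cite{whitt1979note,milgrom1981good,shaked2007stochastic}, and reserves its appendix proof for \cref{lem:monotone_expected_quality} (posterior-mean monotonicity in the signal vector), which is a different statement. So your write-up supplies a self-contained proof of a fact the paper only cites; it is the classical argument underlying those references, handles the categorical and continuous models uniformly, and makes explicit where full support and the restriction to the open interval are used to upgrade weak to strict dominance --- a point the paper's citation-level treatment leaves implicit. Your closing change-of-measure variant is a fine alternative, though either route alone would suffice.
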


We mentioned above that a higher signal also implies FOSD of the posterior quality distributions. The following lemma captures the stronger property that under the MLR property, this holds even for vectors of signals.

\begin{lemma} \label{lem:monotone_expected_quality}
Suppose $\RevSigV'$ and $\RevSigV$ are two vectors of signals that have MLR and satisfy $\RevSigV'\ge \RevSigV$ component-wise, and the inequality is strict for at least one of the components. 
Then, $U(\RevSigV')> U(\RevSigV)$ holds for any prior $\QualDist$.
\end{lemma}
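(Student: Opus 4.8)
The plan is to prove something slightly stronger than the stated inequality of posterior means: that the posterior law of $\Qual$ given $\RevSigVP$ strictly dominates the posterior law of $\Qual$ given $\RevSigV$ in the monotone likelihood ratio order, and then to deduce a strict inequality of means from this via a single-crossing argument (the weak version of which is the familiar implication ``MLR $\Rightarrow$ FOSD'' already used before Lemma~\ref{lem:FOSD}). A convenient feature is that the prior $\QualDist$ enters only through a factor that cancels, so the statement really does hold for every prior — note that $\QualSet$, the support of $\QualDist$, contains at least two points since by assumption it has both negative and positive elements.

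First I would record the posterior. Since the $\NumReviews$ reviews are i.i.d.\ conditioned on $\Qual$, the posterior density (pmf, in the categorical model) of $\Qual$ at a point $q$ given a signal vector $\RevSigV$ is proportional to $\QualDens{q}\prod_{j=1}^{\NumReviews}\RevSigProb[q]{\RevSig{j}}$. Fix any two support points $q'>q$. Forming the ratio of the posterior given $\RevSigVP$ at $q'$ and at $q$, then dividing by the analogous ratio for $\RevSigV$, both normalizing constants and the common prior factor $\QualDens{q'}/\QualDens{q}$ cancel, leaving
\[
\prod_{j=1}^{\NumReviews}\frac{\RevSigProb[q']{\RevSigP{j}}\,\RevSigProb[q]{\RevSig{j}}}{\RevSigProb[q]{\RevSigP{j}}\,\RevSigProb[q']{\RevSig{j}}}
=\prod_{j=1}^{\NumReviews}\frac{\RevSigProb[q']{\RevSigP{j}}/\RevSigProb[q]{\RevSigP{j}}}{\RevSigProb[q']{\RevSig{j}}/\RevSigProb[q]{\RevSig{j}}}.
\]
By Definition~\ref{def:informative} the likelihood ratio $\RevSigProb[q']{\cdot}/\RevSigProb[q]{\cdot}$ is strictly increasing (and finite and positive, using the full-support assumption on $\RevSigDist$); since $\RevSigP{j}\ge\RevSig{j}$ for every $j$, each factor is $\ge 1$, and for the coordinate in which $\RevSigP{j}>\RevSig{j}$ the factor is strictly $>1$. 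Hence the product exceeds $1$: the posterior given $\RevSigVP$ strictly MLR-dominates the posterior given $\RevSigV$.

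To pass to means, let $r(q)$ be the ratio of the posterior density/pmf given $\RevSigVP$ to that given $\RevSigV$; the display shows $r$ is strictly increasing on $\QualSet$. Since both posteriors are probability distributions, $r$ can be neither $\le 1$ everywhere nor $\ge 1$ everywhere, so there is a cutoff $c$ with $(q-c)\big(r(q)-1\big)\ge 0$ for all $q\in\QualSet$ and strict for at least one such $q$. Writing the difference of the posterior means as $\int (q-c)\big(\pi'(q)-\pi(q)\big)\,dq$ — legitimate because each posterior integrates to $1$; in the categorical model this is a finite sum — the integrand $(q-c)\big(r(q)-1\big)\pi(q)$ is pointwise $\ge 0$ and strictly positive on a set of positive measure (here $\pi,\pi'$ denote the two posterior densities/pmfs, both positive on $\QualSet$). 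Therefore $U(\RevSigVP)-U(\RevSigV)>0$, which is the claim; integrability in the continuous model is implicit in $U$ being well defined.

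The only real subtlety is the bookkeeping of strictness: one must check that a single strictly larger coordinate forces the MLR product to be strictly above $1$ (immediate from the strict inequality in Definition~\ref{def:informative}) and that strict MLR dominance yields a strict rather than weak gap between the posterior means — the single-crossing computation above handles the latter uniformly across the categorical and continuous models. Everything else is routine.
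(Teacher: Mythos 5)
Your proof is correct, but it takes a genuinely different route from the paper's. You first show that the posterior of $\Qual$ given $\RevSigVP$ strictly dominates the posterior given $\RevSigV$ in the likelihood-ratio order (the prior and normalizing constants cancel, and the per-coordinate MLR factors are each $\ge 1$ with at least one strictly $>1$), and then convert strict MLR dominance of the posteriors into a strict gap between their means via a cutoff/single-crossing computation, using that the support of $\QualDist$ has at least two points so the ratio $r$ cannot be $\pi$-a.e.\ equal to $1$. The paper instead argues by induction on $|\QualSet|$ in the categorical model: its base case is essentially your two-point likelihood-ratio computation written out explicitly, and its induction step replaces your single-crossing argument with an algebraic expansion showing that passing from support size $n-1$ to $n$ adds only terms of the form $\QualProb{q_i}\QualProb{q_n}\bigl(\RevSigProb[q_n]{\RevSigVP}\RevSigProb[q_i]{\RevSigV}-\RevSigProb[q_n]{\RevSigV}\RevSigProb[q_i]{\RevSigVP}\bigr)(q_n-q_i)>0$; the continuous case is then delegated to Torres (2005). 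Your argument buys uniformity — one proof covers the categorical and continuous models simultaneously, and it isolates the reusable intermediate fact (posterior MLR dominance) rather than only the mean comparison — at the cost of a little measure-theoretic bookkeeping about where $r$ crosses $1$ and why the favorable region has positive posterior mass, which you do address via the strict monotonicity of $r$ and the full-support assumption. The paper's induction buys a completely elementary, finite-sum argument for the categorical case with no need to discuss cutoffs or positive-measure sets, but it does not by itself handle the continuous model.
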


For continuous distributions, this lemma is proved by \citet{torres2005multivariate}.
We give a self-contained proof for the categorical case, which is largely analogous, in \cref{app:FOSD-proof}.
We are now ready to prove Proposition~\ref{prop:monotone-prob}.

\proof{Proof of Proposition~\ref{prop:monotone-prob}.}
We give the proof in the categorical model; it can be straightforwardly generalized to the continuous model.

Let $q' > q$. 
For each reviewer $i$, couple the draws of $\RevSig{i}$ from $\RevSigDist[q]$ and $\RevSigP{i}$ from $\RevSigDist[q']$ by drawing a (common) uniformly random quantile $x$ in $[0,1]$ and letting $\RevSig{i}, \RevSigP{i}$ be the respective signals at quantile $x$ of the corresponding CDFs. 
Because, by \cref{lem:FOSD}, $\RevSigDist[q']$ (strictly) first-order stochastically dominates $\RevSigDist[q]$, this coupling ensures that $\RevSigP{i} \ge \RevSig{i}$;
furthermore, the inequality is strict with positive probability unless $\RevSig{i}=\max_s\in\SigSet$.
By applying this coupling to each individual review (which, recall, is drawn independently of other reviews), we obtain a coupling of vectors of reviews such that $\RevSigVP \geq \RevSigV$ always holds component-wise, and the inequality is strict for at least one of the components with positive probability. By Lemma~\ref{lem:monotone_expected_quality}, this coupling has the property that $U(\RevSigVP) \geq U(\RevSigV)$ always holds, and, conditional on \RevSigV, the inequality is strict with positive probability unless every component of $\RevSigV$ is the maximum signal (if the maximum exists).
Therefore, if \ACCMAP is a monotone acceptance policy, $\AccP{\ACCMAP}{q}$ can never decrease in $q$.

It remains to show that $\AccP{\ACCMAP}{q}$ is \emph{strictly} increasing in $q$ for non-trivial threshold policies $\ACCMAP[\tau,r]$.
First, we may assume w.l.o.g.~that $0 < r < 1$.
For if $r=0$, the policy is equivalent to the policy $\ACCMAP[\tau',\half]$ with any $\tau' \in (\max \Set{U(\RevSigV)}{U(\RevSigV) < \tau}, \tau)$, 
and if $r=1$, it is equivalent to the policy $\ACCMAP[\tau',\half]$ with any $\tau' \in (\tau, \min \Set{U(\RevSigV)}{U(\RevSigV) > \tau})$. Here, the minimum and maximum will be finite because the policy is assumed to be non-trivial.

By Lemma~\ref{lem:monotone_expected_quality}, there must exist \RevSigV, \RevSigVH with $U(\RevSigVH) \geq \tau \geq U(\RevSigV)$ such that at least one of the two inequalities is strict. Let \RevSigV be a vector of reviews maximizing $U(\RevSigV)$ subject to $U(\RevSigV) \leq \tau$.
Because $U(\RevSigVH) > U(\RevSigV)$, the vector \RevSigV cannot be maximal in all components.
Therefore, by the preceding coupling argument, when $\RevSigV$ is drawn with quality $q$, the corresponding vector $\RevSigVP$ drawn with quality $q'$ satisfies $U(\RevSigVP) > U(\RevSigV)$. 
By definition of \RevSigV, the review vector \RevSigVP gives rise to strictly higher acceptance probability than \RevSigV. For if $U(\RevSigV) < \tau$, then \RevSigV always leads to rejection, whereas (by maximality of $U(\RevSigV)$) \RevSigVP leads to acceptance with probability at least $r > 0$.
And if $U(\RevSigV) = \tau$, then \RevSigV leads to acceptance with probability $r < 1$, whereas \RevSigVP leads to acceptance with probability 1.


Therefore, for non-trivial threshold policies, the coupling ensures that a paper of quality $q'$ is accepted at least whenever a paper of quality $q$ is accepted, and with strictly positive probability, only the paper with quality $q'$ is accepted.
This completes the proof. \Halmos
\endproof

\subsection{Proof of Lemma~\ref{lem:monotone_expected_quality}}\label{app:FOSD-proof}

Here, we prove Lemma~\ref{lem:monotone_expected_quality}. 

\proof{Proof of \Cref{lem:monotone_expected_quality}.}

Here, we provide a proof for the categorical model. It can be easily modified for the continuous model, and the result also is shown as Theorem~1 by \citet{torres2005multivariate}.
We show the result by induction on the size of the quality set $\QualSet$.

\emph{Base case:} We show that the inequality holds for any binary quality set $\QualSet=\SET{q_1, q_2}$ with $q_2>q_1$.  For $\RevSigV \in \SigSet^{\NumReviews}$, let $\gamma(\RevSigV)=\frac{\RevSigProb[q_2]{\RevSigV}}{\RevSigProb[q_1]{\RevSigV}}$ be the likelihood ratio. We first rewrite the ex-post expected quality:
\begin{align*}
    U(\RevSigV) & = \ExpectC{Q}{\RevSigV}\\
    &= \sum_q q\cdot \ProbC{Q=q}{\RevSigV}\\
    &= \sum_q q \cdot \frac{\Prob{q}\cdot \ProbC{\RevSigV}{Q=q}}{\sum_{q'}\Prob{q'} \cdot \ProbC{\RevSigV}{Q=q'}}\\
    &= q_1 \cdot \frac{\QualProb{q_1}\RevSigProb[q_1]{\RevSigV}}{\QualProb{q_1}\RevSigProb[q_1]{\RevSigV}+\QualProb{q_2}\RevSigProb[q_2]{\RevSigV}} + q_2\cdot \frac{\QualProb{q_2}\RevSigProb[q_2]{\RevSigV}}{\QualProb{q_1}\RevSigProb[q_1]{\RevSigV}+\QualProb{q_2}\RevSigProb[q_2]{\RevSigV}}\\
    &=  \frac{q_1 \cdot \QualProb{q_1}\RevSigProb[q_1]{\RevSigV} + q_2\cdot\QualProb{q_2}\RevSigProb[q_2]{\RevSigV}}{\QualProb{q_1}\RevSigProb[q_1]{\RevSigV}+\QualProb{q_2}\RevSigProb[q_2]{\RevSigV}}\\
    &= \frac{q_1\cdot\QualProb{q_1} + q_2\cdot\QualProb{q_2}\gamma(\RevSigV)}{\QualProb{q_1}+\QualProb{q_2}\gamma(\RevSigV)}.
\end{align*}
    
Now, we consider the difference $U(\RevSigVP) - U(\RevSigV)$, write it using a common denominator, and cancel common terms. Then, 
by \cref{def:informative}, we obtain that $\gamma(\RevSigVP)>\gamma(\RevSigV)$, which we substitute:
\begin{align*}
    U(\RevSigVP) - U(\RevSigV)
    &=  \frac{\QualProb{q_1}\QualProb{q_2} \cdot \left(q_1\gamma(\RevSigV) + q_2\gamma(\RevSigVP) - q_1\gamma(\RevSigVP) - q_2\gamma(\RevSigV)\right)}{\left(\QualProb{q_1}+\QualProb{q_2}\gamma(\RevSigVP)\right) \cdot \left(\QualProb{q_1}+\QualProb{q_2}\gamma(\RevSigV)\right)}\\
    &= \frac{\QualProb{q_1}\QualProb{q_2} \cdot (\gamma(\RevSigVP) - \gamma(\RevSigV)) \cdot (q_2-q_1)}{\left(\QualProb{q_1}+\QualProb{q_2}\gamma(\RevSigVP)\right) \cdot \left(\QualProb{q_1}+\QualProb{q_2}\gamma(\RevSigV)\right)}\\
    &>0.
\end{align*}

\emph{Induction step: } We show that if the inequality holds for all categorical models with support size $|\QualSet|=n-1$, it also holds for categorical models with support size $|\QualSet|=n$. 
Let $\QualSet=\SET{q_1, q_2, \ldots, q_n}$ with $q_n>\cdots>q_1$. Using the same derivation as in the base case, 
\begin{align*}
    U(\RevSigVP) - U(\RevSigV)
    &= \frac{\left(\sum_{i=1}^n q_i\QualProb{q_i}\RevSigProb[q_i]{\RevSigVP}\right) \cdot \left(\sum_{i=1}^n \QualProb{q_i}\RevSigProb[q_i]{\RevSigV}\right) - \left(\sum_{i=1}^n q_i\QualProb{q_i}\RevSigProb[q_i]{\RevSigV}\right) \cdot \left(\sum_{i=1}^n \QualProb{q_i}\RevSigProb[q_i]{\RevSigVP}\right)}{\left(\sum_{i=1}^n \QualProb{q_i}\RevSigProb[q_i]{\RevSigVP}\right) \cdot \left(\sum_{i=1}^n \QualProb{q_i}\RevSigProb[q_i]{\RevSigV}\right)}.
\end{align*}

Because the denominator is strictly positive, we now focus on showing that the numerator is as well.
We show that the difference between the actual numerator, and the version where the upper bound of each summation is $n-1$, is positive. Then, because the latter is positive by induction hypothesis, we will have shown the claim.
\begin{align*}
& \left( \left(\sum_{i=1}^{n} q_i\QualProb{q_i}\RevSigProb[q_i]{\RevSigVP}\right) \cdot \left(\sum_{i=1}^{n} \QualProb{q_i}\RevSigProb[q_i]{\RevSigV}\right) - \left(\sum_{i=1}^{n} q_i\QualProb{q_i}\RevSigProb[q_i]{\RevSigV}\right)\left(\sum_{i=1}^{n} \QualProb{q_i}\RevSigProb[q_i]{\RevSigVP}\right) \right)
\\ - & 
 \left( \left(\sum_{i=1}^{n-1} q_i\QualProb{q_i}\RevSigProb[q_i]{\RevSigVP}\right) \cdot \left(\sum_{i=1}^{n-1} \QualProb{q_i}\RevSigProb[q_i]{\RevSigV}\right) - \left(\sum_{i=1}^{n-1} q_i\QualProb{q_i}\RevSigProb[q_i]{\RevSigV}\right)\left(\sum_{i=1}^{n-1} \QualProb{q_i}\RevSigProb[q_i]{\RevSigVP}\right) \right)
\\ = & 
 q_n \QualProb{q_n} \RevSigProb[q_{n}]{\RevSigVP} \cdot \left( \sum_{i=1}^{n-1} \QualProb{q_i}\RevSigProb[q_i]{\RevSigV} \right) 
 + \QualProb{q_n} \RevSigProb[q_n]{\RevSigV} \cdot \left( \sum_{i=1}^{n-1} q_i \QualProb{q_i}\RevSigProb[q_i]{\RevSigVP} \right)
\\ - & q_n \QualProb{q_n} \RevSigProb[q_n]{\RevSigV} \cdot \left( \sum_{i=1}^{n-1} \QualProb{q_i} \RevSigProb[q_i]{\RevSigVP} \right) 
- \QualProb{q_n} \RevSigProb[q_n]{\RevSigVP} \cdot \left(\sum_{i=1}^{n-1} q_i \QualProb{q_i}\RevSigProb[q_i]{\RevSigV} \right)
\\ = & 
\sum_{i=1}^{n-1} \QualProb{q_i} \QualProb{q_n} \cdot \left( \RevSigProb[q_n]{\RevSigVP}\RevSigProb[q_i]{\RevSigV} - \RevSigProb[q_n]{\RevSigV} \RevSigProb[q_i]{\RevSigVP} \right) \cdot \left( q_n - q_i \right).
\end{align*}

By \cref{def:informative}, 


$\RevSigProb[q_n]{\RevSigVP} \RevSigProb[q_i]{\RevSigV} > \RevSigProb[q_n]{\RevSigV}\RevSigProb[q_i]{\RevSigVP}$, proving that the difference is strictly positive. Thus, by induction hypothesis, the utility difference is strictly positive, completing the proof. \Halmos
\endproof

\subsection{Proof of \cref{lem:author_response}}\label{app:proof-author_resp}

Let $\ACCMAP$ be the acceptance policy. 
The probability of acceptance for a paper of quality $\Qual=q$ is $\AccP{\ACCMAP}{q}$. By Proposition~\ref{prop:monotone-prob}, this probability is non-decreasing in $q$ for monotone acceptance policies, and strictly increasing for non-trivial threshold policies.
The author submits the paper if the expected utility of submitting is greater than 1 (the utility of the outside option), does not submit the paper if the expected utility is less than 1, and is indifferent between submitting or not if the expected utility is equal to 1. 
Because a noiseless author does not learn any new information from rejection in previous rounds, she will make the same decision in future rounds, implying that she will submit until acceptance.
Let $\ConfValue$ be fixed. The expected utility can be obtained as the time-discounted sum of the utility from acceptance:
\begin{align}
\AUTHUTIL(\ACCMAP,q)
& = 
\sum_{t\ge 1} \ConfValue \cdot \TD^{t-1} \AccP{\ACCMAP}{q} \cdot (1-\AccP{\ACCMAP}{q})^{t-1} 
\; = \; \frac{\ConfValue \cdot \AccP{\ACCMAP}{q}}{1-\TD \cdot (1-\AccP{\ACCMAP}{q})}.
\label{eqn:submission-inequality}
\end{align}
Solving the inequalities $\AUTHUTIL(\ACCMAP,q) > 1$, $\AUTHUTIL(\ACCMAP,q) < 1$, and $\AUTHUTIL(\ACCMAP,q) = 1$ for $q$, the author submits the paper if $\AccP{\ACCMAP}{q} > 1/\rho$, does not submit if $\AccP{\ACCMAP}{q} < 1/\rho$, and is indifferent between submitting or not if $\AccP{\ACCMAP}{q} = 1/\rho$, respectively. 
This completes the proof of the lemma.

\subsection{Proof of \cref{prop:de_facto}}\label{app:proof-de_facto}
By \cref{lem:author_response}, an author with a paper of quality $Q=q$ decides whether to submit based on whether the acceptance probability $\AccP{\ACCMAP}{q}$ is greater than the inverse of the attractiveness factor, $1/\rho$.
Recall that $\AccP{\ACCMAP}{q}$ is weakly increasing in $q$ by \cref{prop:monotone-prob}, while $1/\rho(q,r)$ is strictly decreasing in $q$.
Since the conference's policy is responsive, there must exist some $\bar{q}$ such that $\AccP{\ACCMAP}{\bar{q}} > 1/\rho(\bar{q},1)$.
If $\AccP{\ACCMAP}{q} \ge 1/\rho(q,1)$ for all $q\in \QualSet$, then every author prefers to submit and $\theta = -\infty$ is a de facto threshold.
Therefore, we only have to consider the case where there exists some $\ubar{q}$ such that $\AccP{\ACCMAP}{\ubar{q}} < 1/\rho(\ubar{q},1)$.

Combining these observations, we conclude that either there exists some $\theta'\in \QualSet$ such that $\AccP{\ACCMAP}{\theta'} = 1/\rho(\theta',1)$, or there exist two qualities $\ubar{\theta}, \bar{\theta}\in \QualSet$, with no intermediate qualities between them, such that $\AccP{\ACCMAP}{\ubar{\theta}} < 1/\rho(\ubar{\theta},1)$ and $\AccP{\ACCMAP}{\bar{\theta}} > 1/\rho(\bar{\theta},1)$.
We show that a threshold best response exists for either case.

First, if there exists a $\theta'\in \QualSet$ such that $\AccP{\ACCMAP}{\theta'} = 1/\rho(\theta',1)$, then it is a threshold best response:  every author submits and keeps resubmitting if her paper has quality $Q = q\ge \theta'$ and takes the outside option if $q < \theta'$.
To see this, note that authors with a paper of quality $q\ge \theta'$ all have an acceptance probability $\AccP{\ACCMAP}{q} \ge 1/\rho(\theta', 1)$, meaning that they are weakly happier to submit; on the other hand, the acceptance probability $\AccP{\ACCMAP}{q} \le 1/\rho(\theta',1)$ for authors with $q< \theta'$, meaning that they are weakly happier to take the outside option.

If there exist two adjacent qualities $\ubar{\theta}, \bar{\theta}\in \QualSet$ such that $\AccP{\ACCMAP}{\ubar{\theta}} < 1/\rho(\ubar{\theta},1)$ and $\AccP{\ACCMAP}{\bar{\theta}} > 1/\rho(\bar{\theta},1)$, we further distinguish two cases.
In the first case,  $\AccP{\ACCMAP}{\ubar{\theta}}\le 1/\rho(\bar{\theta},1)$. This means that $\AccP{\ACCMAP}{\ubar{\theta}}\le 1/\rho(\ubar{\theta},r)$ for any $r\in [0,1]$, because $\ubar{\theta}<\bar{\theta}$.\fangcomment{Minor point: we use $r$ for three different meaning 1) superscript for reviewer $F^{(r)}$, 2) threshold acceptance policy $\phi_{\tau, r}$, and 3) author's threshold strategy.} \dkcomment{Good point! At least the review noise should perhaps be $R$? Unfortunately, we haven't been consistent in using macros, so this will now be hard to fix without missing one. But we should probably do it after we submit, before the next round of reviews/acceptance.}
In other words, if authors with $Q \ge \bar{\theta}>\ubar{\theta}$ all decide to submit and authors with $Q < \ubar{\theta}$ all decide to take the outside option, then authors with $Q = \ubar{\theta}$ all (weakly) prefer to take the outside option no matter with what probability the other authors with $Q= \ubar{\theta}$ submit.
Therefore, the following is an equilibrium for authors: every author with $Q \ge \bar{\theta}$ submits, and every author with $Q < \bar{\theta}$ takes the outside option.

In the second case, $\AccP{\ACCMAP}{\ubar{\theta}}> 1/\rho(\bar{\theta},1)$. 
This corresponds to the following case: if authors with paper quality $Q \ge \bar{\theta}$ submit and those with $Q < \ubar{\theta}$ opt for the outside option, then authors with paper quality $Q = \ubar{\theta}$ prefer submitting if no one at that quality submits but prefer the outside option when everyone at that quality submits.

At equilibrium, the probability $r$ with which such authors submit must make authors with paper quality $Q = \ubar{\theta}$ indifferent between submitting or not submitting. Therefore, $r$ must be is the solution to $\AccP{\ACCMAP}{\ubar{\theta}} = 1/\rho(\ubar{\theta},r)$, i.e.,
\begin{equation*} 
        \frac{r\cdot \ubar{\theta} \cdot \QualProb{\ubar{\theta}} + \sum_{q \in \QualSet, q \geq \bar{\theta}} q \cdot \QualProb{q}}{%
        r\cdot\QualProb{\ubar{\theta}} + \sum_{q \in \QualSet, q \geq \bar{\theta}}\QualProb{q}} 
        = \frac{1-\TD \cdot (1-\AccP{\ACCMAP}{\ubar{\theta}})}{\AccP{\ACCMAP}{\ubar{\theta}}}.
\end{equation*}
Note that the solution must exist because by the relationship between $P_{\text{acc}}$ and $1/\rho$ in this case, when $r = 0$, the left-hand side of the above equation is larger than the right-hand side, while if $r = 1$, the left-hand side is smaller than the right-hand side.
By the continuity of the left-hand-side as a function of $r$, a solution must exist.
This completes the proof of the first part.

As noted above, by \cref{prop:monotone-prob}, under a monotone acceptance policy, the acceptance probability is weakly increasing in $q$.
As a result, multiple adjacent quality levels may share the same acceptance probability, which can lead to non-threshold best responses.
For example, suppose that there are three adjacent quality levels $q_1<q_2<q_3$ such that $\AccP{\ACCMAP}{q_1} = \AccP{\ACCMAP}{q_2} = \AccP{\ACCMAP}{q_3} = 1/\rho(q_2,1)$.
Based on the earlier argument, there exists a threshold best response at $\theta = q_2$, where the authors submit if and only if $Q\ge q_2$ and the corresponding conference value is $\ConfValue(q_2,1)$.
However, it is also possible for a non-threshold best response to exist: for example, there may be a submission strategy where authors with qualities $q_1, q_2$, and $q_3$ submit with probabilities $r_1, r_2$, and $r_3$, respectively, and $r_1, r_3 \in (0,1)$. As long as the induced conference value remains at $\ConfValue(q_2)$, such a mixed strategy can also constitute a non-threshold best response.

However, when the conference applies a responsive threshold acceptance policy, $\AccP{\ACCMAP}{q}$ is \emph{strictly} increasing in $q$.
This implies that no matter what the conference value is, there exists at most one quality level at which authors are indifferent between submitting and not submitting. 
This immediately guarantees that every best response must be a threshold strategy.
If such a quality level $\theta'$ where authors are indifferent between submitting or not submitting exists, then $\theta = \theta'$ is a de facto threshold.
If not, there again exist two adjacent qualities $\ubar{\theta}, \bar{\theta}\in \QualSet$ such that $\AccP{\ACCMAP}{\ubar{\theta}} < 1/\rho(\ubar{\theta},1)$ and $\AccP{\ACCMAP}{\bar{\theta}} > 1/\rho(\bar{\theta},1)$.
Then, based on the previous arguments, there are again two cases: if $\AccP{\ACCMAP}{\ubar{\theta}}\le 1/\rho(\bar{\theta},1)$, any $\theta \in [\ubar{\theta}, \bar{\theta}]$ is a de facto threshold; if $\AccP{\ACCMAP}{\ubar{\theta}}> 1/\rho(\bar{\theta},1)$, then $\theta = \ubar{\theta}$ is a de facto threshold.
This completes the second part.


Finally, we prove the third part. 
In the continuous model, the distribution of the review noise is assumed to be a bijection, which implies that $\AccP{\ACCMAP}{q}$ is continuously increasing in $q$. 
Moreover, as the density of $\QualDist$ has full support, $1/\rho(q)$ is continuously decreasing in $q$.
Because the acceptance policy is responsive, there must exist a unique intersection such that $\AccP{\ACCMAP}{\theta} = 1/\rho(\theta)$, meaning that $\theta$ is a unique de facto threshold. \Halmos

\subsection{Proof of \cref{prop:threshold-policy}}
\label{app:proof-threshold-policy}


To prove the proposition, we want to relate the acceptance threshold to how ``strict'' the policy is. We begin by defining a comparison between the strictness of two policies:

\begin{definition} \label{def:stricter}
An acceptance policy \ACCMAP['] is \emph{(weakly) stricter} than another policy $\ACCMAP$ if it accepts every paper with a (weakly) smaller probability, i.e., $\AccP{\ACCMAP[']}{q} < \AccP{\ACCMAP}{q}$ for all $q$ (resp., $\AccP{\ACCMAP[']}{q} \leq \AccP{\ACCMAP}{q}$ for all $q$ for the weak version).
\end{definition}

Being stricter appears to be a very demanding requirement, in that it requires an inequality for all paper qualities. 
We next show that for threshold policies, it in fact follows from a strictly smaller acceptance probability for just one paper.

\begin{lemma}\label{lem:stricter_policy}
Let $\ACCMAP$ and $\ACCMAP[']$ be two threshold acceptance policies. If there exists a $q\in \QualSet$ such that $\AccP{\ACCMAP[']}{q} < \AccP{\ACCMAP}{q}$, then $\ACCMAP[']$ is stricter than $\ACCMAP$.
\end{lemma}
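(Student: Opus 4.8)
The plan is to exploit the fact that threshold policies form a totally ordered family with respect to how often they accept. First I would record that for a threshold policy $\ACCMAP[\tau,r]$ the decision on a review vector depends on $\RevSigV$ only through $U(\RevSigV)$: namely $\AccMap{\RevSigV}=g_{\tau,r}\bigl(U(\RevSigV)\bigr)$ with the step function $g_{\tau,r}(u):=\mathbbm{1}[u>\tau]+r\,\mathbbm{1}[u=\tau]$ (the degenerate thresholds $\tau=\pm\infty$ just make $g$ constant). A short case analysis — compare $\tau$ with $\tau'$, and if they are equal compare $r$ with $r'$ — shows that any two such step functions are comparable as functions on $\R$: either $g_{\tau,r}\ge g_{\tau',r'}$ everywhere, or $g_{\tau,r}\le g_{\tau',r'}$ everywhere. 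Hence for our two policies one decision rule dominates the other pointwise in $\RevSigV$; integrating against the quality-$q$ review distribution gives $\AccP{\ACCMAP}{q}=\Expect[q]{\AccMap{\RevSigV}}$ and likewise for $\ACCMAP[']$, so the hypothesis $\AccP{\ACCMAP[']}{q}<\AccP{\ACCMAP}{q}$ rules out the case in which $\ACCMAP[']$ dominates. Therefore $\AccMap{\RevSigV}\ge\AccMap[']{\RevSigV}$ for all $\RevSigV$, and equivalently $h:=g_{\tau,r}-g_{\tau',r'}\ge 0$ on $\R$.

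Next I would write, for every quality $q'\in\QualSet$, $\AccP{\ACCMAP}{q'}-\AccP{\ACCMAP[']}{q'}=\Expect[q']{h(U(\RevSigV))}$, and set $A:=\Set{\RevSigV}{h(U(\RevSigV))>0}$. Since $h\ge 0$ and $h>0$ on $A$, this difference is strictly positive as soon as $\Prob[q']{\RevSigV\in A}>0$, and the hypothesis is exactly that $\Prob[q]{\RevSigV\in A}>0$ for the one quality $q$. So the lemma reduces to the claim that positive probability of $A$ under one quality implies positive probability of $A$ under every quality; crucially, membership of a review vector in $A$ is a property of $U(\RevSigV)$ alone and so does not depend on which quality generated $\RevSigV$.

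The remaining, and main, step is this transfer across qualities, and it is where the full-support assumption on $\RevSigDist[q]$ does the work. In the categorical model, $\Prob[q]{\RevSigV\in A}>0$ forces the existence of a single review vector $\RevSigV[0]\in A$ with $\prod_{j=1}^{\NumReviews}\RevSigProb[q]{\RevSig[0]{j}}>0$; because $\RevSigDist[q']$ has full support on $\SigSet$ for every $q'$, each factor $\RevSigProb[q']{\RevSig[0]{j}}$ is positive, so $\Prob[q']{\RevSigV\in A}\ge\Prob[q']{\RevSigV=\RevSigV[0]}>0$. In the continuous model, $\tau=\tau'$ is impossible (that would place $A$ inside the probability-zero event $\{U(\RevSigV)=\tau\}$, contradicting $\Prob[q]{A}>0$), and since $h\ge 0$ we must have $\tau<\tau'$; then $A$ contains $U^{-1}\bigl((\tau,\tau')\bigr)$, which is a nonempty open set because $U$ is continuous (as noted in the proof of \cref{prop:de_facto}), and full support makes the $\NumReviews$-fold review density positive throughout its domain under every $q'$, so every nonempty open set receives positive probability. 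Either way, $\Prob[q']{\RevSigV\in A}>0$ for all $q'\in\QualSet$, which gives $\AccP{\ACCMAP[']}{q'}<\AccP{\ACCMAP}{q'}$ for all $q'$, i.e., $\ACCMAP[']$ is stricter than $\ACCMAP$ in the sense of \cref{def:stricter}. I expect only this last transfer step to be genuinely subtle; the rest is bookkeeping about step functions.
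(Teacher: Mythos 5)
Your proof is correct and follows essentially the same route as the paper's: threshold policies are pointwise comparable as functions of $U(\RevSigV)$, the hypothesis at the single quality $q$ fixes the direction of dominance, and full support of the review distribution under every quality transfers the positive-probability strict-improvement event from $q$ to all other qualities. The paper states this more tersely (exhibiting one review vector accepted with strictly higher probability under $\ACCMAP$ and invoking full support), whereas you make the total ordering of step functions explicit and treat the continuous case via the interval $(\tau,\tau')$ — if anything a more careful write-up of the same argument.
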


\proof{Proof of \Cref{lem:stricter_policy}.}

  We want to show that if a threshold policy accepts one type of paper with strictly smaller probability, it accepts \emph{every} paper with strictly smaller probability. 

  First, recall that we assumed that the conditional signal distribution has full support on the signal space. Because multiple reviews are i.i.d., conditioned on any paper quality $q$, the review signal distribution has full support over all vectors of review signals.

  Because \ACCMAP, \ACCMAP['] are \emph{threshold} policies, any review vector that leads to acceptance under \ACCMAP['] must lead to acceptance under \ACCMAP with at least the same probability. And because a paper of quality $q$ is accepted with strictly higher probability by \ACCMAP, there must exist a set $S$ of review vectors \RevSigV which are all accepted with strictly higher probability under \ACCMAP than under \ACCMAP['], such that $S$ has strictly positive probability mass under the combination of the distributions $\QualDist$ and $\RevSigDist[q]$.
  
  Because $S$ occurs with positive probability for every paper quality $q'$ (by the full-support assumption on the $\RevSigDist[q]$), every paper is accepted with strictly higher probability by \ACCMAP than by \ACCMAP['], completing the proof. \Halmos
\endproof

The following lemma relates the strictness of a threshold acceptance policy with its acceptance threshold.

\begin{lemma} \label{prop:monotone-prob-threshold}
In the categorical model, let $\ACCMAP[\tau,r]$ and $\ACCMAP[\tau',r']$ be two threshold acceptance policies with $r, r'\in (0,1]$.
Then, if either $\tau'>\tau$ or $\tau'=\tau$ and $r'< r$, $\ACCMAP[\tau',r']$ is weakly stricter than $\ACCMAP[\tau,r]$. 

 In the continuous model, let $\tau$ and $\tau'$ be the thresholds for two non-trivial threshold policies. Then, $\ACCMAP[\tau']$ is stricter than $\ACCMAP[\tau]$ if and only if $\tau' > \tau$.
\end{lemma}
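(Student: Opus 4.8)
The plan is to argue both directions by reducing everything to the monotonicity of $\AccP{\ACCMAP}{q}$ established in \cref{prop:monotone-prob} together with \cref{lem:stricter_policy}, which lets us certify ``stricter'' by exhibiting a single quality where the acceptance probability drops. First I would handle the categorical case. Fix a quality $q \in \QualSet$ and compute $\AccP{\ACCMAP[\tau,r]}{q} = \Prob[\RevSigV \sim \RevSigDist[q]^{\NumReviews}]{U(\RevSigV) > \tau} + r \cdot \Prob[\RevSigV \sim \RevSigDist[q]^{\NumReviews}]{U(\RevSigV) = \tau}$. If $\tau' > \tau$, then the event $\{U(\RevSigV) > \tau'\}$ is contained in $\{U(\RevSigV) > \tau\}$, and the knife-edge mass at $\tau'$ is disjoint from $\{U(\RevSigV) > \tau\}$ only if $\tau'$ itself is a value taken by $U$; in all cases one gets $\Prob{U > \tau'} + r'\Prob{U = \tau'} \le \Prob{U > \tau} + r\Prob{U=\tau}$ because $\{U > \tau'\} \cup \{U = \tau'\} \subseteq \{U \ge \tau'\} \subseteq \{U > \tau\} \cup \{U = \tau\}$ when $\tau' > \tau$ (using that there may be values of $U$ strictly between $\tau$ and $\tau'$, which only helps the inequality). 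The case $\tau' = \tau$, $r' < r$ is immediate since only the coefficient on the same point mass changes. This gives weak strictness directly from the definition, so no appeal to \cref{lem:stricter_policy} is even needed here.

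For the continuous model I would prove the biconditional. The ``if'' direction ($\tau' > \tau \Rightarrow \ACCMAP[\tau']$ strictly stricter): since both policies are non-trivial, there is some quality $q$ with $0 < \AccP{\ACCMAP[\tau]}{q} < 1$, hence some review vector $\RevSigV$ with $U(\RevSigV) \in (\tau, \tau')$ occurring with positive probability under $q$ — wait, more carefully: I would instead argue that for non-trivial $\ACCMAP[\tau]$ there must exist review vectors with $U$-value strictly between $\tau$ and $\tau'$ with positive probability for \emph{some} quality, because $U(\RevSigV)$ ranges over an interval in the continuous model (the review-noise cdf is a bijection, so $U$ is continuous in the signals and takes a continuum of values) and the policy's non-triviality forces both $\tau$ and $\tau'$ to lie in the interior of this range or at least have mass on both sides. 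Such a $\RevSigV$ is accepted by $\ACCMAP[\tau]$ but rejected by $\ACCMAP[\tau']$, and since it occurs with positive probability under every quality (full support of $\RevSigDist[q]$ over signal vectors), we get $\AccP{\ACCMAP[\tau']}{q} < \AccP{\ACCMAP[\tau]}{q}$ for that $q$, then invoke \cref{lem:stricter_policy} to upgrade to all $q$. The ``only if'' direction is the contrapositive: if $\tau' \le \tau$ then by the same set-inclusion reasoning as in the categorical case (with point masses being probability zero), $\AccP{\ACCMAP[\tau']}{q} \ge \AccP{\ACCMAP[\tau]}{q}$ for every $q$, so $\ACCMAP[\tau']$ is not stricter.

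The main obstacle I anticipate is making the continuous-case ``if'' direction fully rigorous: I need to guarantee the existence of a review vector whose posterior mean $U(\RevSigV)$ falls strictly in $(\tau,\tau')$ and that occurs with positive density. This requires knowing that the range of $U(\cdot)$ is an interval (or at least that $(\tau,\tau')$ intersects the interior of its support), which follows because $\Qual$ and the signals are supported on convex sets, the noise cdf is a continuous bijection, and $U(\RevSigV) = \ExpectC[\Qual]{\Qual}{\RevSigV}$ varies continuously and non-trivially with $\RevSigV$ by \cref{lem:monotone_expected_quality} — so $U$ is a continuous non-constant function on a connected domain, hence its image is an interval. Combined with non-triviality of both threshold policies (which forces $\tau$ and $\tau'$ into the relevant range), the interval $(\tau, \tau')$ must contain points of this image. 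Once that topological point is nailed down, the rest is the routine set-inclusion bookkeeping plus one application each of \cref{lem:stricter_policy} and \cref{prop:monotone-prob}.
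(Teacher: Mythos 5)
Your proposal is correct and follows essentially the same route as the paper: the categorical case is the same event-inclusion comparison of acceptance probabilities, and the continuous case rests on exhibiting review vectors with posterior mean in $(\tau,\tau')$ (which occur with positive probability under every quality by full support) for the ``if'' direction and the reverse inclusion for the ``only if'' direction. Your extra care in justifying that the image of $U(\cdot)$ is an interval meeting $(\tau,\tau')$ only makes explicit a step the paper's proof asserts without elaboration, and your appeal to \cref{lem:stricter_policy} is harmless but not needed once the positive-probability argument applies to every quality.
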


\proof{Proof of \Cref{prop:monotone-prob-threshold}.}

We begin by proving the first part of the lemma, regarding the categorical model. By the definition of $\ACCMAP[\tau,r]$ and $\ACCMAP[\tau',r']$, whenever a paper with some review vector $\RevSigV$ is accepted by $\ACCMAP[\tau',r']$ with positive probability, it will be accepted by $\ACCMAP[\tau,r]$ with at least the same probability. This means that $\ACCMAP[\tau,r]$ accepts every paper with at least the same probability as  $\ACCMAP[\tau',r']$.

Next, we prove the second statement, regarding the continuous model. For the ``only if'' direction, we know that a stricter threshold policy accepts every paper with strictly smaller probability. In the continuous model, $\ACCMAP[\tau']$ must reject some review vectors that are accepted under $\ACCMAP[\tau]$. This implies $\tau'>\tau$. 

For the ``if'' direction, we begin with some basic observations. First, by \cref{lem:monotone_expected_quality}, $U(\RevSigV)$ is monotone increasing in \RevSigV.
Second, by standard Bayesian updating formulas and using that the review noise is additive and independent of $q$, the expected quality conditioned on the signal \RevSigV can be written as 
\begin{align*}
    U(\RevSigV)
    & = \int q \cdot \frac{\QualDens{q} \cdot \prod_{i=1}^{\NumReviews} \RevSigProb[q]{\RevSig{i}}}{\int \QualDens{q'} \cdot \prod_{i=1}^{\NumReviews} \RevSigProb[q']{\RevSig{i}}\, d q'} \, d q
    = \int q \cdot \frac{\QualDens{q} \cdot \prod_{i=1}^{\NumReviews} f^{(r)}(\RevSig{i}-q)}{\int \QualDens{q'} \cdot \prod_{i=1}^{\NumReviews} f^{(r)}(\RevSig{i}-q') \, d q'} \, d q.
\end{align*}

In this form, because we assumed $f^{(r)}$ to be a continuous function, and $\QualDens{p}$ to be strictly positive, it is easy to see that $U(\RevSigV)$ is also a continuous function of \RevSigV.

Because neither threshold acceptance policy is trivial by assumption, there exist review vectors $\RevSigV, \RevSigVP$ with $U(\RevSigV) < \tau < \tau' < U(\RevSigVP)$, and in particular, because $U$ is monotone, writing $\mathbf{1}$ for the all-1 vector, $\lim_{y \to - \infty} U(y \cdot \mathbf{1}) < \tau$, and $\lim_{y \to \infty} U(y \cdot \mathbf{1}) > \tau'$.
By continuity and monotonicity of $U$, the function $U(y \cdot \mathbf{1})$ is continuous and monotone as a function of $y$, so there exists a $y$ with $U(y \cdot \mathbf{1}) = \frac{\tau+\tau'}{2}$.

Again by continuity of $U$, there is a sufficiently small $\epsilon > 0$ such that for every $\RevSigV$ with $|| \RevSigV -  y \cdot \mathbf{1} ||_2 \leq \epsilon$, we have that 
$U(\RevSigV) \in (\tau, \tau')$.
Let $S = \Set{\RevSigV}{|| \RevSigV - y \cdot \mathbf{1} ||_2 \leq \epsilon}$.
$S$ has positive volume, and because we assumed that $f^{(r)}(x) > 0$ for all $x$, for every quality $q$, the event of obtaining a review vector $\RevSigV \in S$ has positive probability.
Furthermore, whenever the review vector $\RevSigV \in S$, the paper is accepted by $\ACCMAP[\tau]$, but rejected by $\ACCMAP[\tau']$.
Thus, every paper quality $q$ is strictly more likely to be accepted under $\ACCMAP[\tau']$ than under $\ACCMAP[\tau]$.
\Halmos
\endproof

\proof{Proof of \cref{prop:threshold-policy}.}

We prove the proposition based on two cases, depending on whether $\theta\in \QualSet$.
If $\theta \in \QualSet$, let $r\in [0,1]$ be any value such that $\ConfValue(\theta,r)> 1$; the existence of such an $r$ is guaranteed because $\theta$ is a candidate threshold.
In this case, let $\hat{\theta} = \theta$ and $\rho = \frac{\ConfValue(\theta, r)-\TD}{1-\TD}$.
If $\theta \notin \QualSet$, let $\hat{\theta} = \inf\Set{q \in \QualSet}{q > \theta}$ and $\rho = \frac{\ConfValue(\hat{\theta}, 1)-\TD}{1-\TD}$.
Note that because $\theta$ is a candidate threshold, $V(\hat{\theta},1) > 1$. 
This, in both cases (discrete and continuous), we obtain that $1/\rho  < 1$. 

To prove the proposition, it is sufficient to show that, in either case, there exists a threshold acceptance policy such that $\AccP{\ACCMAP}{\hat{\theta}} = 1/\rho$.
To see this, when $\theta \in \QualSet$, it is a best response for authors with $Q > \theta$ to submit, while authors with $Q < \theta$ take the outside option, and authors with $Q = \theta$ submit with probability $r$.
When $\theta \notin \QualSet$, it is a best response for authors with $Q > \theta$ to submit, while authors with $Q < \theta$ take the outside option.
This means that as long as $\AccP{\ACCMAP}{\hat{\theta}} = 1/\rho$, $\theta$ is a de facto threshold.

We now show that there always exists a threshold acceptance policy such that $\AccP{\ACCMAP}{\hat{\theta}} = 1/\rho$.
Let $f(\tau) := \AccP{\ACCMAP[\tau,0]}{\hat{\theta}}$ be the probability that a paper of quality $\hat{\theta}$ is accepted under the policy $\ACCMAP[\tau, 0]$, which accepts the paper if and only if its expected posterior quality is greater than $\tau$.
Then, $\lim_{\tau \to -\infty} f(\tau) = 1 > 1/\rho > 0 = \lim_{\tau \to +\infty} f(\tau)$. Furthermore, by Lemma \ref{prop:monotone-prob-threshold}, $f(\tau)$ is a non-increasing function of $\tau$. 
Therefore, there must exist a $\hat{\tau}$ such that $\lim_{\tau \to \hat{\tau} \uparrow} f(\tau) \geq 1/\rho \geq \lim_{\tau \to \hat{\tau} \downarrow} f(\tau)$.
  
If $f$ is continuous at $\hat{\tau}$, then the threshold policy $\ACCMAP[\hat{\tau},0]$ has $\AccP{\ACCMAP[\hat{\tau},0]}{\hat{\theta}} = 1/\rho$ by definition.
Otherwise, let $z = \lim_{\tau \to \hat{\tau} \uparrow} f(\tau) - \lim_{\tau \to \hat{\tau} \downarrow} f(\tau) > 0$. We can infer that there must be a discrete probability of $z$ for the event that \fangcomment{do we use $U(\RevSigV)$?}$\ExpectC{\Qual}{\RevSigV} = \hat{\tau}$, i.e., that $\Prob[{\RevSigV \sim \RevSigDist[{\hat{\theta}}]}]{\ExpectC{\Qual}{\RevSigV} = \hat{\tau}} = z$.
We then consider the threshold policy $\ACCMAP[\hat{\tau},\hat{r}]$ with threshold $\hat{\tau}$ which conditioned on $\ExpectC{\Qual}{\RevSigV} = \hat{\tau}$ accepts a paper with probability $\hat{r} := \frac{1/\rho-\lim_{\tau \to \hat{\tau} \downarrow} f(\tau)}{z}$. The overall acceptance probability of $\ACCMAP[\hat{\tau},\hat{r}]$ for a paper with quality $\hat{\theta}$ is therefore
\begin{align*} 
 & \Prob[{\RevSigV \sim \RevSigDist[{\hat{\theta}}]}]{\ExpectC{\Qual}{\RevSigV} > \hat{\tau}} + \Prob[{\RevSigV \sim \RevSigDist[{\hat{\theta}}]}]{\ExpectC{\Qual}{\RevSigV} = \hat{\tau}} \cdot \frac{1/\rho-\lim_{\tau \to \hat{\tau} \downarrow} f(\tau)}{z}
\\ & = \lim_{\tau \to \hat{\tau} \downarrow} f(\tau)
+ z \cdot \frac{1/\rho-\lim_{\tau \to \hat{\tau} \downarrow} f(\tau)}{z}
\; = \; 1/\rho.
\end{align*}

Thus, under the threshold acceptance policy $\ACCMAP[\hat{\tau},\hat{r}]$, authors of papers with quality $\hat{\theta}$ are indifferent between submitting and not submitting.\fangcomment{do we need a proof for the moreover part?}

Finally, uniqueness of the conference's threshold policy in the continuous model when the author's submission decision is non-trivial follows directly because by Lemma~\ref{prop:monotone-prob-threshold}, the solution for $\hat{\tau}$ of $\AccP{\ACCMAP[\hat{\tau},0]}{\hat{\theta}} = 1/\rho$ is unique in the continuous model when agents are neither submitting all papers nor submitting no papers.
\Halmos
\endproof

\subsection{Proof of \cref{prop:gap-QB-dominance}}\label{app:proof-gap-QB-dominance}

In the continuous model, fixing the quality prior, the conference quality only depends on the de facto threshold.
Therefore, to prove the proposition, it is sufficient to show that for every candidate threshold $\theta$, the review burden of the first setting is larger than that of the second setting.

Recall that the review burden is given by $R(\theta) = m\int_\theta^\infty \QualDens{q}/\AccP{\ACCMAP[\tau(\theta)]}{q} dq$, suppress parameter dependence and denote the corresponding acceptance thresholds by $\tau(\theta)$ and $\tau'(\theta)$ for the two settings, respectively.
It is thus sufficient to show that $\AccP{\ACCMAP[\tau(\theta)]}{q} < \AccP{\ACCMAP[\tau'(\theta)]}{q}$ for every candidate threshold $\theta$.
This inequality follows directly from the fact that the resubmission gap in the first setting is larger, i.e., $\tau(\theta) > \tau'(\theta)$ for any $\theta$. 
By \cref{prop:monotone-prob-threshold}, the claim follows, completing the proof.

\subsection{Proof of \cref{prop:blackwell}}\label{app:proof-blackwell}

We will show that any memoryless acceptance policy $\ACCMAP[']: \SigSet' \to [0,1]$ in the second setting can be simulated in the first setting with the same expected conference quality and review burden. It follows that both the QB-tradeoff and the QA-tradeoff in the first setting weakly dominate those in the second setting.

Because $\RevSigDist$ Blackwell dominates $\RevSigDistP$, there exists a garbling $\gamma$ from $\SigSet$ to $\SigSet'$.  
We define a memoryless acceptance policy $\ACCMAP$ in the first setting: for any review signal $\REVSIG$, we set
\begin{align*}
\AccMap{\REVSIG} & = \sum_{\REVSIGP \in \SigSet'} 
\AccMap[']{\REVSIGP} \cdot \gamma (\REVSIG, \REVSIGP). 
\end{align*}

First, because $(\gamma(\REVSIG, \REVSIGP))_{\REVSIGP \in \SigSet'}$ is a distribution on $\SigSet'$ and $\AccMap[']{\REVSIGP} \in [0,1]$ for all $\REVSIGP$, the output of $\ACCMAP$ is in $[0,1]$. 
Moreover, for any paper quality $\qual \in \QualSet$,
\begin{align*}
\AccP{\ACCMAP}{\qual} 
& = \sum_{\dkreplace{\RevSigV}{\REVSIG} \in \SigSet} 
\RevSigProb[\qual]{\REVSIG}
\cdot \AccMap{\REVSIG} \\
& = \sum_{\dkreplace{\RevSigV}{\REVSIG} \in \SigSet} 
 \RevSigProb[\qual]{\REVSIG}
\cdot \sum_{\REVSIGP \in \SigSet'} \AccMap[']{\REVSIGP} \cdot
\gamma(\REVSIG, \REVSIGP) \tag{Definition of $\ACCMAP$}\\
& = \sum_{\REVSIGP \in \SigSet'} \left( \sum_{\REVSIG \in \SigSet}
\RevSigProb[\qual]{\REVSIG} \cdot \gamma(\REVSIG, \REVSIGP) \right)
\cdot \AccMap[']{\REVSIGP} \tag{Changing order of summation} \\
& = \sum_{\REVSIGP \in \SigSet'}  \RevSigProbP[\qual]{\REVSIGP} \cdot \AccMap[']{\REVSIGP} \tag{$\gamma$ is a garbling} \\
& = \AccP{\ACCMAP[']}{\qual}.
\end{align*}

Thus, the acceptance policies $\ACCMAP$ and $\ACCMAP[']$ have identical acceptance probabilities for each paper quality; this makes them indistinguishable to authors. In particular, both acceptance policies have the same expected conference quality and review burden. \Halmos

\subsection{Proof of \cref{prop:blackwell-threshold}}\label{app:proof-blackwell-threshold}

We first prove \cref{claim:blackwell-RB-better}, which is the key to the proof of \cref{prop:blackwell-threshold}.
We note that our proof is written for the categorical model. However, the proof for the continuous model is analogous, simply by replacing the summations with integrals.
\yzcomment{Note for future: better to check if this is precisely true.}

\proof{Proof of \cref{claim:blackwell-RB-better}.}
    Let $(\tau, r)$ and $(\tau', r')$ be the parameters corresponding to the threshold policies $\ACCMAP$ and $\ACCMAP'$, respectively.
    Given that the acceptance probability of $\bar{q}$ is the same under $\ACCMAP[\tau,r]$ and $\ACCMAP[\tau', r']$, we want to show that for every $q > \bar{q}$, the acceptance probability is weakly higher under $\ACCMAP[\tau,r]$ than under $\ACCMAP[\tau', r']$.
    By decomposing the acceptance probability into the individual signals, we need to show that
    \begin{equation} \label{eq:blackwell_ineq}
        r \cdot \RevSigProb[q]{\tau} + \sum_{s>\tau} \RevSigProb[q]{\REVSIG} 
        - \left(r'\cdot \RevSigProbP[q]{\tau'} + \sum_{\REVSIGP>\tau'} \RevSigProbP[q]{\REVSIGP} \right) 
        \ge 0,
    \end{equation}
    for any $q > \bar{q}$.

    Let $\gamma$ be the garbling from $\RevSigDist$ to $\RevSigDistP$ (see \cref{def:blackwell}), so that $\RevSigProbP[q']{\REVSIGP} = \sum_{\REVSIG} \RevSigProb[q']{\REVSIG} \cdot \gamma(\REVSIG, \REVSIGP)$ for any $q'$ and $\REVSIGP$. 
    Substituting the garbling-based characterization and changing the order of summation, the left-hand side of \cref{eq:blackwell_ineq} can be rewritten as
    \begin{align*}
        \lefteqn{r \cdot \RevSigProb[q]{\tau} + \sum_{s>\tau} \RevSigProb[q]{\REVSIG}
        - \left( 
        r' \cdot \sum_{\REVSIG} \RevSigProb[q]{\REVSIG} \cdot \gamma(\REVSIG, \tau')
        + \sum_{\REVSIGP>\tau'} \sum_{\REVSIG} \RevSigProb[q]{\REVSIG} \cdot \gamma(\REVSIG, \REVSIGP)  \right)}
        \\ & =
        r \cdot \RevSigProb[q]{\tau} + \sum_{s>\tau} \RevSigProb[q]{\REVSIG} 
        - \sum_{\REVSIG} \left( r'\cdot\gamma(\REVSIG, \tau') + \sum_{\REVSIGP>\tau'} \gamma(\REVSIG, \REVSIGP) \right) \cdot \RevSigProb[q]{\REVSIG}.
    \end{align*}
    
    For notational simplicity, let $h(\REVSIG)=  r'\cdot\gamma(\REVSIG, \tau') + \sum_{\REVSIGP>\tau'} \gamma(\REVSIG, \REVSIGP)$. 
    Because $h(s)$ is a probability (the probability of observing an instantiation of the Blackwell dominated signal that is accepted by $\ACCMAP[\tau', r']$ when the instance of the Blackwell dominating signal is $s$), $0\le h(\REVSIG)\le 1$. 
    We now break the summation over $s$ in the above equation into three summations, namely, the summation over $s<\tau$, $s=\tau$, and $s>\tau$. Combining the summations over the same subset of signals allows us to rewrite the preceding equation as
    \begin{align}
        \lefteqn{r \cdot \RevSigProb[q]{\tau} + \sum_{s>\tau}\RevSigProb[q]{\REVSIG} 
        - \sum_{\REVSIG} h(\REVSIG) \cdot \RevSigProb[q]{\REVSIG}} \nonumber 
       \\ & =
          (r-h(\tau)) \cdot \RevSigProb[q]{\tau} 
        + \sum_{s>\tau} (1-h(\REVSIG)) \cdot \RevSigProb[q]{\REVSIG} 
        - \sum_{s<\tau} h(\REVSIG) \cdot \RevSigProb[q]{\REVSIG}.\label{eq:blackwell_acc_prob_diff}
    \end{align}

    Next, we use the monotone likelihood ratio property.
    Let $\eta_s := \frac{\RevSigProb[q]{s}}{\RevSigProb[\bar{q}]{s}}$ denote the likelihood ratio for signal $s$. 
    Because signals have MLR, we have $\eta_{s'} > \eta_s$ for any $s'>s$.
    Using this monotonicity, we can bound \eqref{eq:blackwell_acc_prob_diff} as
    \begin{align*}
        \eqref{eq:blackwell_acc_prob_diff} 
        & = (r-h(\tau)) \cdot \eta_\tau \RevSigProb[\bar{q}]{\tau} 
        + \sum_{s>\tau} (1-h(\REVSIG)) \cdot \eta_s \RevSigProb[\bar{q}]{\REVSIG} 
        - \sum_{s<\tau} h(\REVSIG) \cdot \eta_s \RevSigProb[\bar{q}]{\REVSIG}
        \\ & \ge
        (r-h(\tau)) \cdot \eta_\tau \RevSigProb[\bar{q}]{\tau} 
        + \sum_{s>\tau} (1-h(\REVSIG)) \cdot \eta_\tau \RevSigProb[\bar{q}]{\REVSIG} 
        - \sum_{s<\tau} h(\REVSIG) \cdot \eta_\tau \RevSigProb[\bar{q}]{\REVSIG} 
        \\ & =
        \eta_\tau \cdot \left(
        r\cdot \RevSigProb[\bar{q}]{\tau} 
        + \sum_{s>\tau} \RevSigProb[\bar{q}]{\REVSIG} 
        - \sum_{\REVSIG}h(\REVSIG)\RevSigProb[\bar{q}]{\REVSIG}
        \right)
        \\ & =
        \eta_\tau \cdot 
        \left( \AccP{\ACCMAP[\tau,r]}{\bar{q}} - \AccP{\ACCMAP[\tau',r']}{\bar{q}} \right)
        \\ & = 0.
    \end{align*}
    Here, the inequality uses the monotone likelihood property separately for $s > \tau$ and $s < \tau$ (observing the signs of the multipliers of $\eta_s$), and the final step follows from the assumption that the acceptance probabilities of papers of quality $\bar{q}$ in both settings are equal. \Halmos
\endproof

With this lemma, we are able to prove \cref{prop:blackwell-threshold}.
\proof{Proof of \cref{prop:blackwell-threshold}.}  
For any responsive threshold acceptance policy $\ACCMAP[']$ in the second setting (the one that has weaker review signals), let $\vartheta = (\theta, r)$ be any of the author's equilibrium strategies to $\ACCMAP[']$.
We will show the existence of a threshold acceptance policy $\ACCMAP$ in the first setting such that: 1) $\vartheta$ is also an equilibrium strategy to $\ACCMAP$; 2) the conference quality is the same in both settings; 3) the review burden induced by the policy-response pair $(\ACCMAP, \vartheta)$  in the first setting is at most the review burden induced by the policy-response pair  $(\ACCMAP', \vartheta)$ in the second setting; and 4) the author welfare induced by the policy-response pair $(\ACCMAP, \vartheta)$ in the first setting is at least the author welfare induced by the policy-response pair  $(\ACCMAP', \vartheta)$ in the second setting. If the above is true, then every point on the QB-tradeoff curve of the second setting is weakly dominated by a point on the QB-tradeoff curve of the first setting, which completes the proof. The same arguments apply to QA-tradeoff curves.
Note that it is sufficient to consider only responsive threshold policies because any policy that is not responsive either attracts no submissions (where the claim trivially holds) or accepts all submissions with a probability of 1 (which leads to a conference value smaller than 1 and thus is not a candidate threshold).

By \cref{prop:de_facto}, the author's equilibrium strategy under a responsive threshold acceptance policy is a threshold strategy with some threshold $\theta$ (and probability $r$): the author will always submit a paper with quality $\Qual > \theta$, not submit if $\Qual < \theta$, and submit with probability $r$ if $\Qual = \theta$.

Now, we consider two cases based on whether there are authors who are indifferent between submitting and not submitting.

\begin{enumerate}
\item The first case is that there exists a quality $\bar{q}$ such that an author with threshold strategy $\vartheta$ submits papers of quality $\bar{q}$ with probability $r$ in equilibrium. Because $\vartheta$ is a threshold strategy, there can be at most one such quality. Let $\rho = \rho(\theta, r)$ be the conference attractiveness under this setting. By \cref{prop:de_facto}, the acceptance probability at $\bar{q}$ must be exactly $1/\rho$.\fangcomment{Minor issue: this cannot be directly derived from \cref{prop:de_facto} }

Now, consider a threshold acceptance policy $\ACCMAP$ in the first setting that also induces $\vartheta$ as the author's best response. By construction, $\bar{q}$ is a candidate threshold. Therefore, the existence of $\ACCMAP$ is guaranteed by \cref{prop:threshold-policy}.
Note that the conference quality and the attractiveness $\rho$ remain unchanged: papers with qualities strictly greater than $\bar{q}$ are all accepted, no paper with quality strictly less than $\bar{q}$ is accepted; and papers with quality $\bar{q}$ are submitted with probability $r$, and these papers are accepted i.i.d.~with probability $1/\rho$ in each setting. 
Therefore, by \cref{claim:blackwell-RB-better}, all submitted papers have a weakly larger acceptance probability under $\ACCMAP$ in the first setting, resulting in a weakly smaller review burden and a weakly larger author welfare.

\item In the second case, for any possible paper quality $q$, an author with threshold strategy $\vartheta$ either always or never submits papers of quality $q$.  
Let $\bar{q}$ be the highest paper quality which is never submitted and let $\rho = \rho(\bar{q}, 0)$ be the conference attractiveness.

Let $\hat{\phi}'$ denote the threshold policy which accepts papers of quality $\bar{q}$ with probability $1/\rho$ in the second setting. 
Note that both $\ACCMAP'$ and $\hat{\phi}'$ induce $\vartheta$ as the author's best response, because we can assume that authors with papers of quality $\bar{q}$ never submit under $\hat{\phi}'$, given that $\AccP{\hat{\phi}'}{\bar{q}} = 1/\rho$, i.e., the authors are indifferent between submitting papers of quality $\bar{q}$ or not. 
This implies that $\hat{\phi}'$ induces the same conference quality as $\ACCMAP'$. Fixing the author's strategy $\vartheta$, the review burden of $\hat{\phi}'$ is no larger than that of $\ACCMAP'$ and the author welfare of $\hat{\phi}'$ is no less than that of $\ACCMAP'$. This is because by definition, $\ACCMAP'$ accepts papers of quality $\bar{q}$ with probability at most $1/\rho$, which is the acceptance probability of papers of quality $\bar{q}$ under $\hat{\phi}'$. Then, by \cref{lem:stricter_policy}, the acceptance probability of papers of any quality is weakly larger under $\hat{\phi}'$ than $\ACCMAP'$, resulting in a weakly smaller review burden and a weakly larger author welfare under $\hat{\phi}'$.


Finally, in the first setting, let $\ACCMAP$ be a threshold policy that induces a threshold best response with $(\bar{q}, 0)$ for the author such that authors with paper quality $\bar{q}$ prefer not to submit and $\AccP{\ACCMAP}{\bar{q}} = 1/\rho(\bar{q}, 0)$. This implies that the conference attractiveness is $\rho = \rho(\bar{q}, 0)$, the same as in the second setting. Because $\ACCMAP$ is a responsive threshold policy, by \cref{prop:de_facto}, $\bar{q}$ is a de facto threshold. Therefore, the existence of $\ACCMAP$ is guaranteed by \cref{prop:threshold-policy}. By this construction, first note that $\vartheta$ is also a best response to $\ACCMAP$.
Second, the conference quality under $\ACCMAP$ is identical to the conference quality under $\ACCMAP'$ because the same set of papers is submitted, and all are eventually accepted. Finally, by \cref{claim:blackwell-RB-better}, because papers of quality $\bar{q}$ are accepted with probability $1/\rho$ under $\ACCMAP$ in the first setting and under $\hat{\phi}'$ in the second setting, all papers with quality larger than $\bar{q}$ have a larger acceptance probability in the first setting. Therefore, the review burden of $\ACCMAP$ in the first setting is no larger than the review burden of $\hat{\phi}'$ which is no larger than the review burden of $\ACCMAP'$ in the second setting. The same argument holds analogously for author welfare. This completes the proof. \Halmos
\end{enumerate}
\endproof

\subsection{Proof of \cref{lemma:query burden value}}\label{app:proof-QB-tradeoff}


We first consider the continuous model. Fixing a de facto threshold $\theta$, whose existence as a candidate threshold is ensured by \cref{prop:de_facto}), the conference value $\ConfValue(\theta)$ is fixed. Therefore, the conference attractiveness factor in the setting with a larger $\TD$, which is $\rho = \frac{\ConfValue(\theta) - \TD}{1-\TD}$, is larger than that in the setting with a smaller $\rho'$.

Next, we show that the setting with a larger attractiveness factor $\rho$ has a QB-tradeoff curve dominated by the setting with a smaller factor $\rho' < \rho$.

By \cref{prop:threshold-policy}, both settings have respective unique corresponding non-trivial threshold acceptance policies with thresholds $\tau$ and $\tau'$. By \cref{prop:de_facto}, a paper with quality $\theta$ is accepted with probabilities $1/\rho$ and $1/\rho'$, respectively, in the two settings.  
This means that papers with quality $\theta$ are accepted with strictly smaller probability by $\ACCMAP[\tau]$ than $\ACCMAP[\tau']$. 
By \cref{lem:stricter_policy}, $\ACCMAP[\tau]$ is stricter than $\ACCMAP[\tau']$.
Thus, for papers of every quality $q$, the expected number of rounds that a paper of quality $q$ has to be resubmitted is strictly larger in the setting with attractiveness factor $\rho$, which implies a strictly larger review burden.
Therefore, the QB-tradeoff curve in the setting with a larger attractiveness factor, is dominated by the QB-tradeoff curve in the other setting.

The argument for categorical models is essentially the same. We fix the authors' threshold strategy $(\bar{q}, r)$ such that authors with papers of quality $\Qual>\bar{q}$ submit, authors with $\Qual<\bar{q}$ take the outside option, and authors with $\Qual = \bar{q}$ submit with probability $r$. This fixes the conference quality and the conference value $\ConfValue(\bar{q}, r)$. 

We want to show that the acceptance policy which induces such an author's best response has a smaller review burden in the setting with $\rho'$ than in the setting with $\rho>\rho'$. Without loss of generality, suppose $r<1$. If $r\in (0,1)$, by \cref{lem:author_response}, the acceptance probability $\AccP{\ACCMAP}{\bar{q}} = 1/\rho < \AccP{\ACCMAP'}{\bar{q}} = 1/\rho'$. This means that the acceptance policy $\ACCMAP$ is stricter than $\ACCMAP'$. Based on the same argument as in the continuous model, this means that $\ACCMAP$ leads to a larger review burden. 
If $r = 0$, there might be multiple acceptance policies that induce the same threshold equilibrium. Let $\ACCMAP$ be any one of them. Because papers of quality $\bar{q}$ are not submitted, we must have $\AccP{\ACCMAP}{\bar{q}} \leq 1/\rho$. Now, let $\ACCMAP[']$ be a threshold acceptance policy in the setting with attractiveness $\rho'$ which accepts papers of quality $\bar{q}$ with probability $1/\rho'$. Such a policy $\ACCMAP[']$ exists by Proposition~\ref{prop:threshold-policy}, i.e., $\ACCMAP[']$ induces $\bar{q}$ as a de facto threshold. Because $1/\rho' > 1/\rho$, by \cref{lem:stricter_policy}, $\ACCMAP$ is stricter than $\ACCMAP[']$. Again, based on the same argument, this completes the proof.
\Halmos

\subsection{Proof of \cref{lem:QA_eta}}\label{app:proof-QA-eta}


  Define $g(\TD) = \frac{1-\TD}{\AccP{\tau(\TD)}{q}} + \TD$. 
  Because $u^{(a)}(q,\TD) = \frac{\ConfValue}{g(\TD)}$, we know that $u^{(a)}$ is increasing in $\TD$ if and only if $g$ is decreasing in $\TD$.
  Note that $\AccP{\tau(\TD)}{q} = 1 - \REVNOISEDIST(\tau(\TD) - q)$ so that $\frac{d \AccP{\tau(\TD)}{q}}{d \TD} = - f^{(r)}(\tau(\TD) - q) \cdot \frac{d \tau(\TD)}{d \TD}$.
    
    Recall that $\tau(\TD)$ is chosen so that authors with a paper of quality $\theta$ are indifferent between submitting and taking the outside option, i.e., $\AccP{\tau(\TD)}{\theta} = \frac{1-\TD}{\ConfValue-\TD}$. Taking the derivative of both sides of the equation w.r.t.~$\TD$, and combining with the previous derivative calculation, we have $\frac{d \tau(\TD)}{d \TD} = \frac{\ConfValue-1}{(\ConfValue-\TD)^2 \cdot f^{(r)}(\tau(\TD) - \theta)}$.

    Then, taking the derivative of $g$ w.r.t.~$\TD$ yields
    \begin{align*}
        g'(\TD) &= 1 + \frac{-(1-\REVNOISEDIST(\tau(\TD) - q)) + (1-\TD) \cdot \frac{(\ConfValue-1)f^{(r)}(\tau(\TD) - q)}{(\ConfValue-\TD)^2 \cdot f^{(r)}(\tau(\TD) - \theta)}}{(1-\REVNOISEDIST(\tau(\TD) - q))^2}.
    \end{align*}
    $g'(\TD)$ is positive if and only if 
    \begin{align*}
        (1-\REVNOISEDIST(\tau(\TD) - q))^2 -(1-\REVNOISEDIST(\tau(\TD) - q)) + (1-\TD) \cdot \frac{(\ConfValue-1) \cdot f^{(r)}(\tau(\TD) - q)}{(\ConfValue-\TD)^2 \cdot f^{(r)}(\tau(\TD) - \theta)} 
        & > 0,
    \end{align*}
    which can be rearranged to
    \begin{align*}
    \frac{(1-\TD) \cdot (\ConfValue-1) \cdot f^{(r)}(\tau(\TD) - q)}{(\ConfValue-\TD)^2 \cdot f^{(r)}(\tau(\TD) - \theta)} 
    & > \REVNOISEDIST(\tau(\TD) - q) \cdot (1-\REVNOISEDIST(\tau(\TD) - q)).
    \end{align*}
    Because $\frac{1-\TD}{\ConfValue-\TD} = 1 - \REVNOISEDIST(\tau(\TD) - \theta)$ and $\frac{\ConfValue-1}{\ConfValue-\TD} = \REVNOISEDIST(\tau(\TD) - \theta)$, the preceding inequality can be further rearranged to
    \begin{align*}
       \frac{\REVNOISEDIST(\tau(\TD) - \theta) \cdot (1-\REVNOISEDIST(\tau(\TD) - \theta))}{f^{(r)}(\tau(\TD) - \theta)} 
       & > \frac{\REVNOISEDIST(\tau(\TD) - q) \cdot (1-\REVNOISEDIST(\tau(\TD) - q))}{f^{(r)}(\tau(\TD) - q)},
    \end{align*}
    which holds if and only if $h(q) > h(\theta)$.

Therefore, the author's utility is increasing in $\TD$ if and only if $h(q) < h(\theta)$. By reversing all inequalities in the preceding calculations (i.e., comparison of the derivative with 0), we can show that the author's utility is decreasing in $\TD$ if and only if $h(q) > h(\theta)$, and remains unchanged in $\TD$ if $h(q) = h(\theta)$.\Halmos

\section{When Are Threshold Acceptance Policies Optimal?}
\label{sec:testing_tradeoff}
Our analysis of QB-tradeoffs in \cref{subsec:gap-tradeoffs} is largely based on threshold acceptance policies. To contextualize this focus, we first investigate when using such policies is optimal in terms of achieving Pareto frontiers of the QB-tradeoff.
While analyzing monotone acceptance policies, we showed that they always induce a threshold strategy for authors as one of the best responses (see \cref{prop:threshold-policy}). Furthermore, under threshold acceptance policies, non-threshold best responses do not exist; and even for monotone policies, they exist only in knife-edge cases, where authors are indifferent between submitting or not submitting for multiple paper qualities. 
We therefore restrict our attention to threshold best responses for the authors in the following discussions. 

In \cref{prop:threshold-policy}, we showed that every candidate threshold can be implemented with a threshold acceptance policy. In particular, this applies to the optimal de facto threshold of $\theta=0$; see \cref{prop:max-general}. Therefore, every conference quality that is achievable by a monotone policy can be achieved by a threshold policy. 
It remains to understand when a threshold acceptance policy minimizes the corresponding review burden (among all monotone policies) for every conference quality. Our results suggest that when only $\NumReviews = 1$ review is obtained, there always is an optimal threshold acceptance policy. 
However, even for $\NumReviews = 2$, there are in general instances for which monotone non-threshold policies strictly outperform threshold policies in terms of the review burden, while achieving the same (optimal) conference quality.

\begin{proposition}\label{prop.threshold_opt}
When $\NumReviews = 1$ and authors always respond with threshold strategies if possible, the QB-tradeoff for threshold policies weakly dominates that of all monotone acceptance policies.
\end{proposition}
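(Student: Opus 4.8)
The plan is to take an arbitrary non-trivial monotone acceptance policy $\hat\phi$ (with $\NumReviews=1$), together with the threshold best response the authors play against it, and construct explicitly a threshold acceptance policy $\phi'$ that induces the \emph{same} de facto threshold (hence the same conference quality) but accepts every paper quality with at least the same per-round probability (hence incurs at most the same review burden). Since $\NumReviews=1$, a policy is just a map $\phi$ from the single review $s$ to an acceptance probability; by \cref{lem:monotone_expected_quality} the posterior mean $U(s)$ is strictly increasing in $s$, so monotone policies are exactly the nondecreasing functions of $s$, and threshold policies are exactly the (possibly randomized) unit step functions --- ``accept iff $s>s_0$,'' with acceptance probability $r$ at $s=s_0$. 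Let $\theta$ be the de facto threshold of the threshold best response against $\hat\phi$ (which exists by \cref{prop:de_facto}). Against a threshold-strategy author, a paper of quality $q>\theta$ is resubmitted until accepted, so its number of submissions is geometric with success probability $\AccP{\hat\phi}{q}>0$; hence $\CONFUTILH=\int_\theta^\infty \qual\,d\QualDens{\qual}$ and $\PaperReviewsH=\int_\theta^\infty \frac{1}{\AccP{\hat\phi}{\qual}}\,d\QualDens{\qual}$ (reading the integral as a sum in the categorical model, with the obvious convention if $\QualDist$ has an atom at $\theta$). It therefore suffices to build a threshold policy inducing the same $\theta$ with $\AccP{\phi'}{q}\ge\AccP{\hat\phi}{q}$ for every $q>\theta$.

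If $\hat\phi$ induces a de facto threshold of $-\infty$ (every quality submitted), it is weakly dominated by the all-accept threshold policy (threshold $\tau=-\infty$), whose burden is exactly $1\le\PaperReviewsH$ at the same unchanged quality; if it induces $+\infty$ (no quality submitted), the corresponding outcome is the excluded reject-all point. So assume $\theta$ is finite. Choose the threshold $s_0$ (and tie-breaking probability $r$) of $\phi'$ so that $\AccP{\phi'}{\theta}=\AccP{\hat\phi}{\theta}$ --- or, in the categorical model when $\theta\notin\QualSet$, so that $\AccP{\phi'}{q^+}=\AccP{\hat\phi}{q^+}$ for $q^+:=\min\{q\in\QualSet:q>\theta\}$; such a choice exists because the acceptance probability of a threshold-on-$s$ policy at a fixed quality sweeps continuously over $(0,1)$ as $s_0$ decreases and $r$ increases, and $\AccP{\hat\phi}{\cdot}$ takes values in $(0,1)$ away from the excluded trivial policies. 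By \cref{prop:de_facto}, since $\hat\phi$ is monotone and admits $\theta$ as a threshold best response, $\AccP{\hat\phi}{q}\le 1/\rho$ for $q<\theta$ and $\AccP{\hat\phi}{q}\ge 1/\rho$ for $q>\theta$.

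For the comparison (the crux), set $g:=\phi'-\hat\phi$. Since $\phi'$ equals $0$ below $s_0$ and $1$ above $s_0$ while $\hat\phi\in[0,1]$ is nondecreasing, $g$ is single-crossing: $g\le 0$ on $\{s<s_0\}$ and $g\ge 0$ on $\{s>s_0\}$. By construction $\mathbb E_{\RevSigDist[\theta]}[g]=\AccP{\phi'}{\theta}-\AccP{\hat\phi}{\theta}=0$ (with $\RevSigDist[q^+]$ in place of $\RevSigDist[\theta]$ in the categorical gap case). For $q>\theta$, informativeness (the MLR property) makes the likelihood ratio $\ell(s):=\frac{d\RevSigDist[q]}{d\RevSigDist[\theta]}(s)$ nondecreasing in $s$, so
\begin{align*}
\AccP{\phi'}{q}-\AccP{\hat\phi}{q}
=\mathbb E_{\RevSigDist[\theta]}\!\bigl[g\,\ell\bigr]
=\mathbb E_{\RevSigDist[\theta]}\!\bigl[g\,(\ell-\ell(s_0))\bigr]
\ \ge\ 0,
\end{align*}
where the second equality uses $\mathbb E_{\RevSigDist[\theta]}[g]=0$ and the inequality holds because $g(s)(\ell(s)-\ell(s_0))\ge 0$ pointwise (both factors are $\le 0$ for $s<s_0$ and $\ge 0$ for $s>s_0$). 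The same computation with the (then nonincreasing) likelihood ratio gives $\AccP{\phi'}{q}\le\AccP{\hat\phi}{q}\le 1/\rho$ for $q<\theta$. Together with $\AccP{\phi'}{q}\ge\AccP{\hat\phi}{q}\ge 1/\rho$ for $q>\theta$, this shows (via \cref{prop:de_facto}) that $\phi'$ induces exactly the de facto threshold $\theta$, so its conference quality equals $\CONFUTILH$, while the per-quality inequality makes its review burden at most $\PaperReviewsH$. Ranging over all non-trivial monotone $\hat\phi$ (and matching the accept-all/reject-all outcomes by the corresponding trivial threshold policies) yields the claimed domination of the threshold-policy QB-tradeoff curve over that of all monotone policies.

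The step I expect to be the main obstacle is the last one: making the single-crossing/MLR comparison rigorous for \emph{both} models at once, and --- more delicately --- verifying that $\phi'$ reproduces the de facto threshold $\theta$ \emph{exactly} (not merely approximately) once one accounts for the categorical complications of atoms of $\QualDist$ at $\theta$, of a de facto threshold $\theta\notin\QualSet$, and of matching the prescribed tie-breaking on papers of quality exactly $\theta$, so that the conference quality is genuinely unchanged rather than merely close.
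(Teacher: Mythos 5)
Your proof is correct, and it follows the same skeleton as the paper's: fix the de facto threshold $\theta$ induced by the monotone policy, observe that conference quality depends only on the set of submitted qualities while the burden is $\sum_q \QualDens{q}/\AccP{\cdot}{q}$ over submitted qualities, and then show that a threshold policy can only raise the per-round acceptance probability of every submitted quality while keeping the same self-selection. The difference is in how that key step is established. The paper treats a monotone policy as a statistical test of $\{\Qual\le\theta\}$ vs.\ $\{\Qual>\theta\}$ with size at most $1/\rho$, and simply cites the Karlin--Rubin theorem (\cref{thm.ump}) to conclude that the threshold policy from \cref{prop:threshold-policy} is uniformly most powerful, hence pointwise maximizes $\AccP{\cdot}{q}$ for $q>\theta$. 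You instead build, for each monotone $\hat{\ACCMAP}$, a tailored threshold policy matched to $\hat{\ACCMAP}$'s acceptance probability at the threshold quality (or at $q^+$ when $\theta\notin\QualSet$), and prove the pointwise dominance from scratch via the zero-mean single-crossing argument against the likelihood ratio --- which is exactly a self-contained proof of the Neyman--Pearson/Karlin--Rubin step in this special case, and which simultaneously certifies that the acceptance probabilities stay on the correct side of $1/\rho$ on both sides of $\theta$, so the de facto threshold and tie-breaking (and hence the quality) are exactly reproduced. What your route buys is a more elementary, fully self-contained argument that makes the categorical edge cases (atoms at $\theta$, $\theta\notin\QualSet$, indifference regions) explicit, at the cost of somewhat more bookkeeping; the paper's route is shorter by delegating the crux to the cited UMP theorem. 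The worry you flag at the end is already handled adequately by your matching-at-$\theta$ construction together with the paper's standing convention that ties can be prescribed, so I see no genuine gap.
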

Our proof will use the following theorem about uniformly most powerful tests in statistical testing.

\begin{theorem}[Karlin–Rubin Theorem (Theorem~12.9 of \cite{keener2010theoretical})] \label{thm.ump}  Consider two sets $\QualSet$ and $\SigSet\subseteq \R$ and a family of pdfs or pmfs $\{g(\REVSIG |\qual):\qual\in \QualSet\}$ on $\SigSet$ possessing the monotone likelihood property (\cref{def:informative}).  
For any $\theta\in \QualSet$, $\tau\in \R$, $r\in [0,1]$, and two (randomized) tests $h, h^*: \SigSet\to [0,1]$, if $h^*$ is a threshold function, i.e., of the form
$$h^*(\REVSIG) = \begin{cases} 1\text{ if }\REVSIG>\tau\\
r \text{ if }\REVSIG = \tau\\
0 \text{ if }\REVSIG<\tau\end{cases},$$ and 
$\sup_{\qual\in \Theta_0}\E_{\REVSIG\sim g(\cdot|\qual)}[h^*(\REVSIG)]\ge \sup_{\qual\in \Theta_0}\E_{\REVSIG\sim g(\cdot|\qual)}[h(\REVSIG)]$ with $\Theta_0:=\{\qual\in \QualSet: \qual\le \theta\}$, then 
$$\E_{\REVSIG\sim g(\cdot|\qual)}[h^*(\REVSIG)]\ge \E_{\REVSIG\sim g(\cdot|\qual)}[h(\REVSIG)]\text{, for all } \qual\in \QualSet\setminus \Theta_0.$$

We call $h^*$ \emph{uniformly most powerful} among all tests $h$ with 
$$\sup_{\qual\in \Theta_0}\E_{\REVSIG\sim g(\cdot|\qual)}[h^*(\REVSIG)]\ge \sup_{\qual\in \Theta_0}\E_{\REVSIG\sim g(\cdot|\qual)}[h(\REVSIG)].$$
\end{theorem}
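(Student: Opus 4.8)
The plan is to prove the statement as the classical Karlin–Rubin result, whose engine is a Neyman–Pearson exchange argument that the monotone likelihood ratio hypothesis (\cref{def:informative}) makes self-contained. First I would record that the power function $\qual \mapsto \E_{\REVSIG\sim g(\cdot|\qual)}[h^*(\REVSIG)]$ is non-decreasing in $\qual$. This is immediate from \cref{lem:FOSD}: $h^*$ is a non-decreasing function of $\REVSIG$ (it rises from $0$ to $r$ to $1$ as $\REVSIG$ crosses $\tau$), and MLR implies that the signal distribution for a larger parameter first-order stochastically dominates that for a smaller one, so the expectation of the non-decreasing function $h^*$ can only increase with $\qual$. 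Because $\theta$ is the largest element of $\Theta_0=\{\qual\in\QualSet:\qual\le\theta\}$, this monotonicity gives $\sup_{\qual\in\Theta_0}\E_{\REVSIG\sim g(\cdot|\qual)}[h^*]=\E_{\REVSIG\sim g(\cdot|\theta)}[h^*]$. Chaining this with the hypothesis $\sup_{\Theta_0}\E[h^*]\ge\sup_{\Theta_0}\E[h]\ge\E_{\REVSIG\sim g(\cdot|\theta)}[h]$ yields the single key inequality at $\theta$:
\[
\E_{\REVSIG\sim g(\cdot|\theta)}[h^*]\ \ge\ \E_{\REVSIG\sim g(\cdot|\theta)}[h].
\]

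Next I would fix an arbitrary $\qual'\in\QualSet\setminus\Theta_0$, so $\qual'>\theta$, and run the exchange argument for the pair $(\theta,\qual')$. By MLR the likelihood ratio $L(\REVSIG):=g(\REVSIG\mid\qual')/g(\REVSIG\mid\theta)$ is non-decreasing in $\REVSIG$, so there is a constant $k\ge 0$ with $L(\REVSIG)\ge k$ for every $\REVSIG>\tau$ and $L(\REVSIG)\le k$ for every $\REVSIG<\tau$ (take $k=L(\tau)$ whenever $g(\tau\mid\theta)>0$, and a suitable limiting/large value otherwise). The purpose of this choice is the pointwise sign identity
\[
\big(h^*(\REVSIG)-h(\REVSIG)\big)\,\big(g(\REVSIG\mid\qual')-k\,g(\REVSIG\mid\theta)\big)\ \ge\ 0\qquad\text{for all }\REVSIG,
\]
which I would verify casewise: for $\REVSIG>\tau$ we have $h^*=1\ge h$ while $g(\cdot\mid\qual')-k\,g(\cdot\mid\theta)\ge 0$; for $\REVSIG<\tau$ we have $h^*=0\le h$ while the second factor is $\le 0$; and at $\REVSIG=\tau$ the choice $k=L(\tau)$ makes the second factor vanish, which simultaneously disposes of the point-mass case in the pmf setting and of any value of the randomization level $r$.

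Finally, integrating (or summing) this pointwise inequality against the base measure gives
\[
\Big(\E_{\REVSIG\sim g(\cdot|\qual')}[h^*]-\E_{\REVSIG\sim g(\cdot|\qual')}[h]\Big)\ \ge\ k\,\Big(\E_{\REVSIG\sim g(\cdot|\theta)}[h^*]-\E_{\REVSIG\sim g(\cdot|\theta)}[h]\Big).
\]
Since $k\ge 0$ and the right-hand bracket is non-negative by the key inequality established at $\theta$, the left side is non-negative, i.e.\ $\E_{\REVSIG\sim g(\cdot|\qual')}[h^*]\ge\E_{\REVSIG\sim g(\cdot|\qual')}[h]$; as $\qual'$ was an arbitrary element of $\QualSet\setminus\Theta_0$, this is exactly the claim. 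The main obstacle I anticipate is the boundary bookkeeping in the middle step: justifying the existence of a single separating constant $k$ when $L$ has flat stretches, when $\tau$ carries positive mass (pmf case) or falls outside $\SigSet$, and confirming that the randomization at $\REVSIG=\tau$ never breaks the pointwise sign identity. Everything else—the monotonicity of the power function and the final integration—is routine once \cref{lem:FOSD} and the separating-$k$ construction are in hand.
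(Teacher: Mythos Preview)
Your argument is the standard Neyman--Pearson exchange proof of the Karlin--Rubin theorem and is correct. Note, however, that the paper does not actually prove this statement: it is quoted verbatim as Theorem~12.9 of \cite{keener2010theoretical} and used as a black box in the proof of \cref{prop.threshold_opt}. So there is no ``paper's own proof'' to compare against; you have supplied a self-contained proof where the paper simply cites the literature.
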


\proof{Proof of Proposition~\ref{prop.threshold_opt}.}

As discussed after \cref{prop:threshold-policy}, by assuming that authors break ties in favor of using threshold strategies, any conference quality that can be achieved with a monotone policy is 
determined by some de facto threshold $\theta$ (or a threshold-probability pair $(\theta, r)$ in the categorical model), and the quality can be achieved with a threshold acceptance policy. 
Thus, it is sufficient to show that for any de facto threshold $\theta$ (or $(\theta, r)$ in the categorical model) there exists a threshold policy that minimizes the review burden over all monotone policies.  

Given an attractiveness $\rho$ induced by the threshold best response, by \cref{prop:de_facto}, a monotone policy with de facto threshold $\theta$ can be considered as
a statistic test on $\{q \in \QualSet: q \le \theta\}$ vs.~$\{q \in \QualSet: q > \theta\}$. 
To ensure that authors do not submit papers of quality less than $\theta$, the false positive rate (accepting any paper of quality $q < \theta$) must be at most $1/\rho$. Subject to those constraints, the false negative rate (rejecting any paper of quality $q > \theta$) should be minimized.  By \cref{prop:threshold-policy}, there exists a threshold policy $\ACCMAP^*$ with de facto threshold $\theta$.
The Karlin–Rubin Theorem (as given in \cref{thm.ump}) states that when the observed signal (in our case: the review) has the monotone likelihood ratio property, 
the above threshold test (policy) $\ACCMAP^*$ is uniformly most powerful so that $\AccP{\ACCMAP^*}{q}\ge \AccP{\ACCMAP}{q}$ for all $q > \theta$.  
Thus, $\ACCMAP^*$ has the desired de facto threshold $\theta$ and minimizes the probability of false negatives (rejections) for all paper qualities that should be eventually accepted.
Thus, it minimizes the review burden. \Halmos
\endproof

The optimality of threshold policies ceases to hold when $\NumReviews > 1$, as we illustrate next with a counter-example.

\begin{table}[ht]
\caption{Likelihood function in \cref{ex:threshold-counterexample}. \label{tab:threshold-counterexample}}
    \centering
     \renewcommand{\arraystretch}{1.2}
    \begin{tabular}{|l|c|c|c|c|c|c|}
    \hline
 $\RevSigV$ & $(L,L)$ &     
 \begin{tabular}{@{}c@{}}$(L,M)$ or\\ $(M,L)$ \end{tabular}
 & \begin{tabular}{@{}c@{}}$(L,H)$ or\\ $(H,L)$ \end{tabular} 
 & $(M,M)$ 
 &  \begin{tabular}{@{}c@{}}$(M,H)$ or\\ $(H,M)$ \end{tabular} 
 & $(H,H)$ \\
         \hline
        $\ProbC{\RevSigV}{q = -2}$ & $\frac{4}{9}$ & $\frac{2}{9}$ & $\frac{2}{9}$ & $\frac{1}{36}$ & $\frac{1}{18}$ & $\frac{1}{36}$\\
        $\ProbC{\RevSigV}{q = 1}$ & $\frac{1}{9}$ & $\frac{1}{9}$ & $\frac{1}{3}$ & $\frac{1}{36}$ & $\frac{1}{6}$ & $\frac{1}{4}$\\
        $\ProbC{\RevSigV}{q = 5}$ & $\frac{1}{36}$ & $\frac{1}{18}$ & $\frac{2}{9}$ & $\frac{1}{36}$ & $\frac{2}{9}$ & $\frac{4}{9}$\\
        \hline
    \end{tabular}
\end{table}


\begin{example} \label{ex:threshold-counterexample}
The set of paper qualities is $\{-2,1,5\}$, with uniform prior $\QualDist = (1/3,1/3,1/3)$.  The review signal set is $\SigSet = \{L,M,H\}$, and the number of reviews is $\NumReviews = 2$.  The conditional distributions of review signals are $\RevSigDist[-2] = (2/3,1/6,1/6), \RevSigDist[1] = (1/3,1/6,1/2)$, and $\RevSigDist[5] = (1/6,1/6,2/3)$; it can be verified that this information structure satisfies the monotone likelihood ratio (MLR) property. 
\Cref{tab:threshold-counterexample} shows the resulting likelihood of each vector of review signals, with $\NumReviews = 2$. The columns of the table are ordered based on expected conference quality (from low to high) conditioned on the review vector.

When only positive-quality papers ($q = 1, 5$) are submitted, the conference value is $V=3$. Under an author discount factor of $\eta = 9/19$, the conference's attractiveness is then $\rho = 24/5$.

Consider the following policy $\ACCMAP[']$. 
(1) When the review vectors are $(H,H), (M,H)$ or $(H,M)$,
accept the paper with probability 1;
(2) when the review vectors are $(M,M), (L,H)$ or $(H,L)$,
accept the paper with probability $1/2$; and
(3) otherwise, reject the paper with probability 1.
Notice that this policy is monotone (because the acceptance probability is non-decreasing in the conditional expected quality), but it is not a threshold policy. 
This is because the conditional expected paper qualities of the signal vectors $(M,M)$ and $(L,H)$ (or $(H,L)$) are
$U(M,M) = \frac{4}{3} > \frac{9}{7} = U(H,L) = U(L,H)$,
yet $(H,L)$ and $(L,H)$ lead to acceptance with positive probability, while $(M,M)$ does not lead to acceptance with probability 1.
Next, we compute the acceptance probabilities of papers with different qualities:
\begin{align*}
\AccP{\ACCMAP[']}{5} & = 19/24 > 1/\rho
& \AccP{\ACCMAP[']}{1} & = 43/72 > 1/\rho
& \AccP{\ACCMAP[']}{-2} & = 5/24 = 1/\rho.
\end{align*}

Thus, all papers of positive quality are submitted, while the papers of negative quality are not submitted. As a result, $\ACCMAP[']$ is a monotone policy with de facto threshold 0, maximizing the conference quality.  

Next, consider any threshold policy $\ACCMAP$ implementing the de facto threshold of $\theta=0$. It cannot accept papers with review vectors $(L,H), (H,L)$ with probability exceeding $7/16$. 
Otherwise, by virtue of being a threshold policy, $\ACCMAP$ would have to accept all papers with higher review vectors with probability 1; as a result, the acceptance probability of a paper with quality $-2$ would exceed $7/16 \cdot 2/9 + 1 \cdot (1/36 + 1/18 + 1/36) = 5/24$.

Again by virtue of being a threshold policy, $\ACCMAP$ must reject all papers with review vectors $(L,L), (L,M), (M,L)$. 
Thus, a paper of quality $5$ is accepted with probability at most $1 \cdot (4/9 + 2/9 + 1/36) + 7/16 \cdot 2/9 = 19/24$, while a paper of quality $1$ is accepted with probability at most $1 \cdot (1/4+1/6+1/36) + 7/16 \cdot 1/3 = 85/144$.
Thus, papers of quality $5$ are accepted with the same probability as under the policy $\ACCMAP[']$, while papers of quality $1$ are accepted with strictly smaller probability. As a result, a strictly higher review load is required.
\end{example}

We further note that \cref{ex:threshold-counterexample} also shows that the combination of independent review signals with MLR do not necessarily have MLR. To see this, in \cref{tab:threshold-counterexample}, $(M, M)$ is a better review signal than $(L, H)$ or $(H, L)$ in the sense that the expected paper quality is higher conditioned on the former. However, it is not hard to observe that these two signals violate the definition of MLR. In particular, $\frac{\RevSigDist[1]((M,M))}{\RevSigDist[-2]((M,M))}= 1 < 1.5 = \frac{\RevSigDist[1]((L,H)\text{ or }(H,L))}{\RevSigDist[-2]((L,H)\text{ or }(H,L))}$.

\section{Authors with Noisy Signals: ABM Experiments in Real-Data Estimated Models}
\label{sec:noisy_abm}
Thus far, we have examined how parameters and acceptance policies shape the conference’s QB tradeoff under the assumptions that authors perfectly observe their papers’ quality and may resubmit indefinitely until accepted. Under these conditions, we show that authors adopt a straightforward threshold best response, which enables a rigorous theoretical analysis of the problem.

When authors receive noisy signals themselves, they \emph{will} update their beliefs about their papers' quality based on the reviews they receive. For example, an author who initially received a signal indicating that her paper was of high quality may revise this estimate downward after receiving several negative reviews.
As a result, authors may not make the same decision in each iteration; while it may initially be utility-maximizing to submit the paper, after several negative reviews, the author may instead take the side option.

In this section, we focus on a more realistic setting where authors have noisy signals and the model is a real-data estimated categorical model.
After all, understanding the behavior of more realistic authors is also an important robustness check on our results.
Unfortunately, computing the author's posterior belief of the paper quality, conditioned on the reviews in each round, is analytically intractable. 
Therefore, we use agent-based simulations (agent-based modeling (ABM)) to evaluate the impact of conference policies on outcomes.

\subsection{Agent-Based Model Setup}
\label{subsec:noisy_setup}

\subsubsection{Categorical Models Studied}

The \emph{$(\lambda_A, \lambda_R, \NumReviews, \TD)-\text{ICLR}^{y, L}$ model} is a categorical model learned from data.
Specifically, the prior of paper quality $\QualDist^*$ and the review signal distribution $\RevSigDist^*$ are learned from the ICLR OpenReview datasets of year $y$ [\cite{iclr2020review,iclr2021review}]; for each dataset, a model is learned with paper quality sets of sizes $L=6$. 
We use the expectation-maximization (EM) algorithm to estimate and update the model parameters. 
Details of the method used for learning are described in \cref{sec:learning-parameters}, where the learned parameters are shown in \cref{tab:learned_para}. 

When varying the signal quality in models based on ICLR data, we use accuracy parameters $\lambda_A \in [0, 1]$ for the author's signal and $\lambda_R \in [0, 1]$ for the reviewer's signal; the parameters control the probability with which the signal is drawn from the learned parameters $\RevSigDist^*$ vs.~a uniform distribution. $\lambda_A=1$ ($\lambda_R=1$) implies that the signal is drawn from $\RevSigDist^*$ (the same for both authors and reviewers) while $\lambda_A=0$ ($\lambda_R=0$) implies that the signal is uniformly random.

We remark that the learned distributions $\RevSigDist^*$ of reviewer signals do not strictly satisfy the monotone likelihood ratio property as defined in \cref{def:informative}. However, as we will see, the properties that we are interested in still (approximately) follow.

\subsubsection{Myopic Authors and Agent-Based Simulations}

In our ABM experiments, we simulate the submission-review process for $T = 15$ rounds with $\NumNewPapers = 1000$ new papers per round. We give a brief recap of our model from \cref{sec:model}; recall that the dynamics has two major phases: submission and reviewing.

In the first phase, each author updates her belief about her paper based on her private signal that is generated from the distribution $\AuthSigDist[q]$, and the reviews from the previous rounds (if any). At each round, given the belief, the author makes the decision of either submitting to the conference or taking the outside option, depending on the expected utility of each option.
We assume that authors are myopic, meaning that they compute the expected utility of submitting to the conference while assuming that they will take the outside option in the next round immediately if the paper is rejected this round. That is, a myopic author in round $t$ does not foresee the future after round $t+1$. The myopic strategy asks: is it better to submit one more round before giving up or to give up now? While it can happen that submitting two more rounds before giving up is better than giving up now which in turn is better than giving up after one round of submission, such cases are quite rare when the same threshold acceptance policy is used in each round.

The author's utility depends on the conference value $\ConfValue$ which in turn depends on the authors' response. In our ABM, we initialize the conference value $\ConfValue_1 = 2$ in the first round and iteratively update $\ConfValue_t = (1-\lambda_V) \cdot \ConfValue_{t-1} + \lambda_V \cdot\left(1 + \ExpectC{Q_i}{\text{paper } i \text{ was accepted in round } t}\right)$. We set the updating rate $\lambda_V = 0.5$ and compute the expected quality directly using the average quality of the papers accepted in the previous round.

In the second phase, the conference obtains $\NumReviews$ reviews for each submission. These reviews are drawn i.i.d.~from the review signal distribution $\RevSigDist[\dkedit{q}]$, conditioned on the ground truth quality \dkedit{$q$} of the submission. 
Then, for each submission, the conference makes a decision of acceptance or rejection based on a threshold policy with threshold $\tau$. Given the $\NumReviews$ i.i.d.~reviews, denoted as $\RevSigV$, the conference can infer the expected quality of the submission $U(\RevSigV) = \ExpectC[\Qual]{\Qual}{\RevSigV, \QualDist, \RevSigDist}$ (conditioned only on the reviews from the current round) and accept or reject the paper based on the simplified threshold policy for the categorical model described in \cref{app:simplified_threshold}.


We repeat these two phases for $N = 15$ rounds, allowing the conference value and the author submission strategy to dynamically update based on each other. At the end of round $N$, we compute:
\begin{itemize}
    \item \textbf{Conference quality}, as the sum of the quality of papers accepted in round $N$, normalized by $\NumNewPapers$;
    \item \textbf{Review burden}, as the total number of reviews assigned to papers submitted or resubmitted in round $N$, normalized by $\NumNewPapers$;
    \item \textbf{Author welfare}, as the sum of author utilities who submitted in round $N$, normalized by $\NumNewPapers$. Note that this counts the utilities of authors whose papers are accepted by the prestigious conference, excluding the utilities of authors who decide to take the outside option.
\end{itemize}
In all metrics, we exclude papers accepted or withdrawn (via the outside option) before round $T$.

\subsection{QB-tradeoffs With Noisy Authors}

Fixing the model parameters, we vary the acceptance threshold $\tau\in [-2,2]$ with a step size of $0.1$. Each acceptance threshold leads to a corresponding conference quality and review burden, allowing us to trace the QB-tradeoff curve.

As shown in \cref{fig:QB_tradeoff_noisy}(b), when $\tau$ is close to $-2$, the conference accepts every paper with almost a single round of review, leading to a review burden of approximately $\NumReviews$ (slightly higher than $\NumReviews$ because $\tau = -2$ is still larger than the minimum quality, meaning that clearly bad papers still have a positive probability to be rejected).
These thresholds correspond to the dots close to $(\NumReviews, 0.22)$, where $0.22$ is the conference quality when all papers are submitted and accepted.
Similar to \cref{fig:Pareto_frontier_continuous}, as we raise the acceptance threshold, both the conference quality and the review burden increase until the conference quality is maximized. 
Increasing $\tau$ even further will reduce the conference quality and increase the review burden, resulting in a dominated point on the QB-tradeoff curve.
Interestingly, this pattern is similar to panel (b) of \cref{fig:Pareto_frontier_continuous} with a large review noise, where the Pareto optimal policies are achieved by adopting low acceptance thresholds.

Note that\fangcomment{The reader cannot notice this because we do not display $\tau$.  It would be good if we can also show the location of $\tau = -2, -1, 0, 1, 2$} the quality-maximizing $\tau$ does not necessarily induce a de facto threshold of 0 when authors are noisy. This is because the authors' strategy is more complicated when they observe noisy signals, and thus, the original definition of de facto threshold is not suitable for the new setting.

\begin{figure}
     \FIGURE
     {\begin{subfigure}[b]{0.46\textwidth}
         \centering
         \includegraphics[width=\textwidth]{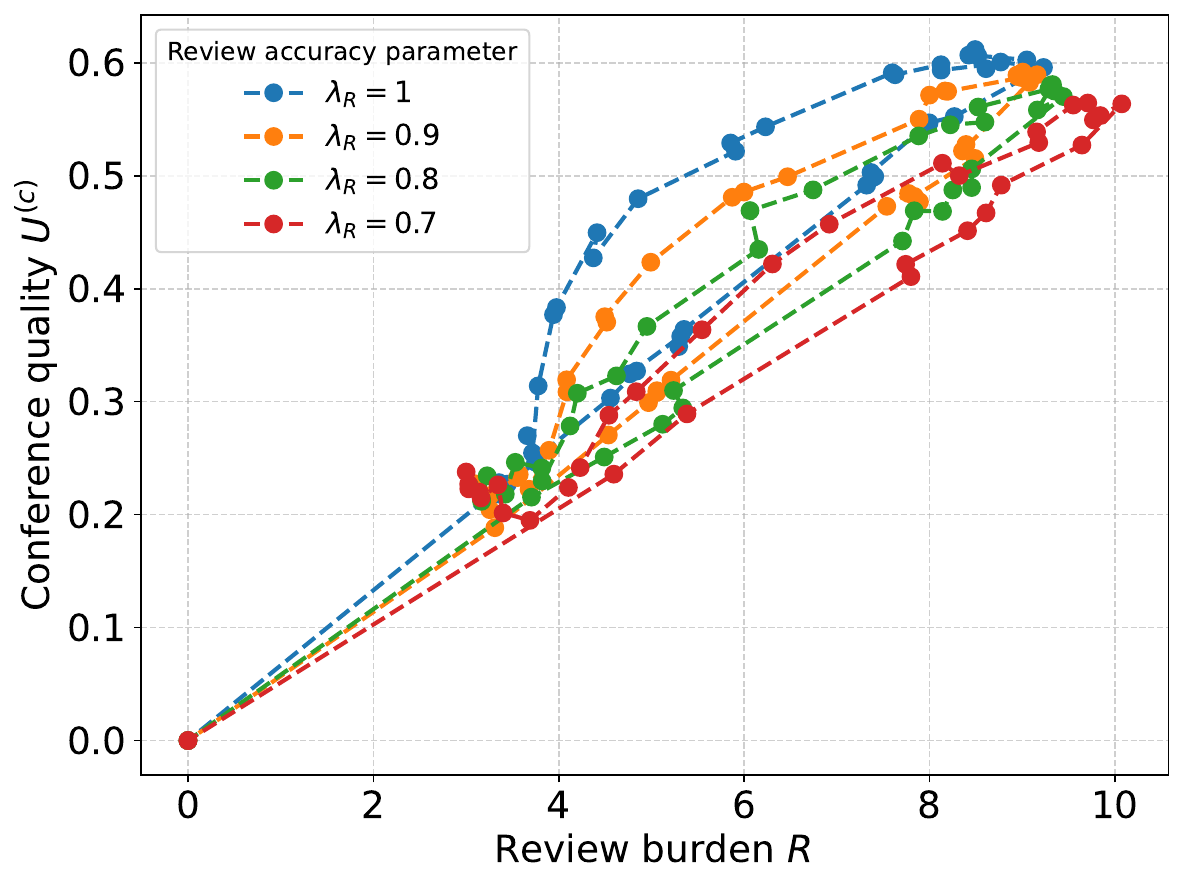}
         \captionsetup{size=}
         \caption{QB-tradeoff varying $\lambda_R$.}
     \end{subfigure}
     \begin{subfigure}[b]{0.46\textwidth}
         \centering
         \includegraphics[width=\textwidth]{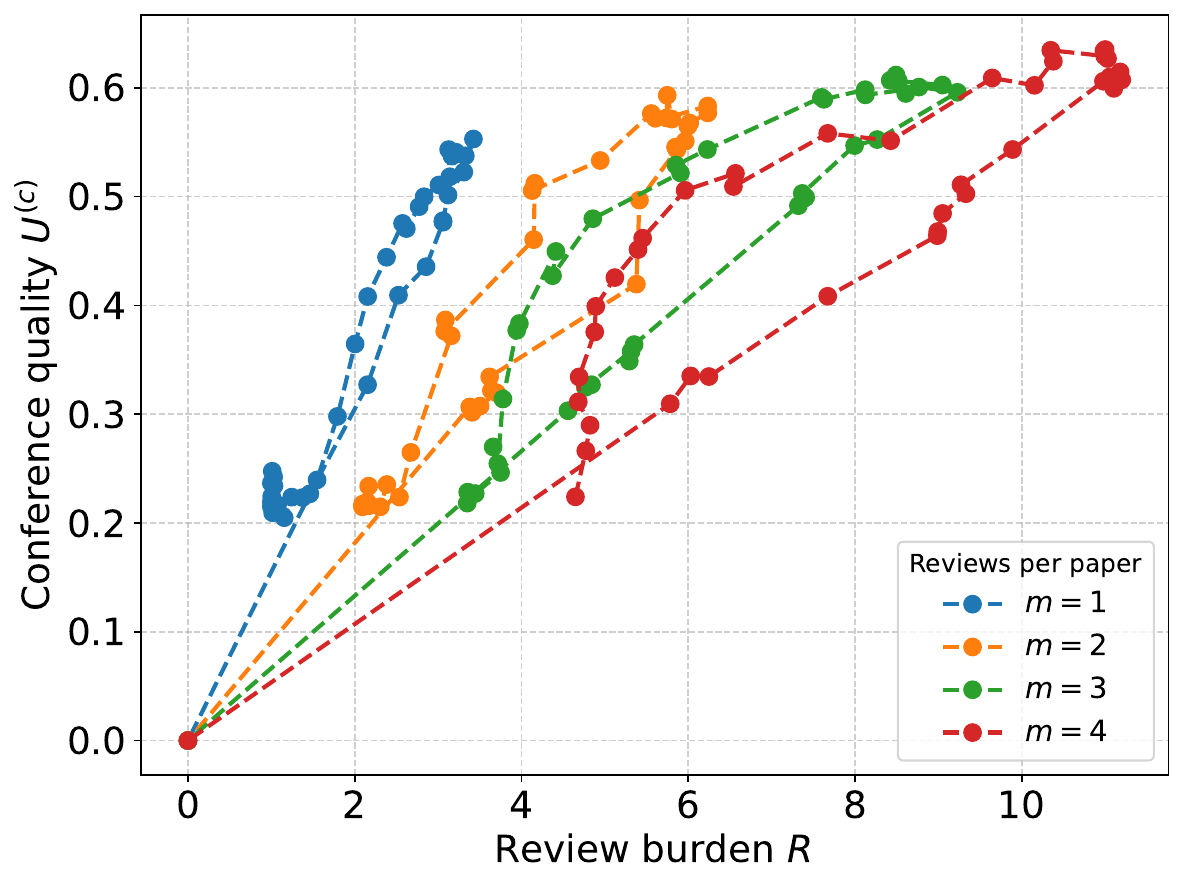}
         \captionsetup{size=}
         \caption{QB-tradeoff varying $m$.}
     \end{subfigure}
     }
     {QB-tradeoff Curves under Reviewer Accuracy Parameter $\lambda_R$ and Number of Reviews Per Paper $m$. \label{fig:QB_tradeoff_noisy}}
     {The figures show the QB-tradeoff curves under different review qualities and numbers of reviews per paper, resulting from different acceptance thresholds in (a) the \emph{${(\lambda_A = 1, \lambda_R, \NumReviews=3, \TD=0.7)}$-ICLR$^{2020, 6}$ model} and (b) the \emph{$(\lambda_A = 1, \lambda_R = 1, \NumReviews, \TD=0.7)$-ICLR$^{2020, 6}$ model}.}
\end{figure}


We next examine how review noise affects the QB–tradeoff curves.
\cref{fig:QB_tradeoff_noisy} (a) plots these curves for a fixed author‐signal accuracy $\lambda_A = 1$ varying review-signal accuracy levels $\lambda_B\in \{0.7,0.8,0.9,1\}$.
Consistent with our observations in \cref{sec:QB-trade-noise}, improving review quality yields a strictly dominating tradeoff curve. Furthermore, unlike \cref{fig:Pareto_frontier_continuous} --- where the maximum conference quality remains unchanged across noise levels --- we find that when authors themselves are noisy, better reviews actually raise the attainable maximum conference quality. Intuitively, noisy authors tend not to persist in resubmitting after some negative reviews; under high review‐noise, there are more high–quality papers that receive unlucky low scores, which prompts their authors to take the outside option rather than resubmit, and thereby reducing overall conference quality.


We further present how the number of reviews per paper affects the QB-tradeoff curves in \cref{fig:QB_tradeoff_noisy} (b).
When we fix the review quality at $\lambda_R = 1$ and increase $m$, the QB-tradeoff curve shifts upward and to the right: for $\NumReviews \in \{1,2,3,4\}$, the maximum conference quality is $\CONFUTIL = 0.56, 0.6, 0.61, 0.63$, which is reached at $\PaperReviews = 3.4, 5.8, 8.5, 11$, respectively. This means that a larger $\NumReviews$ greatly increases the review burden but has the benefit of reaching a larger maximum conference quality. However, the marginal benefit of having an additional review is minimal relative to its extra review burden when $\NumReviews \ge 3$. This implies that a small number of high-quality reviews paired with a carefully chosen acceptance threshold can reach a desirable QB-tradeoff. 

\subsection{QA-tradeoffs With Noisy Authors}

We next investigate author welfare. As shown in \cref{fig:QA_tradeoff_noisy}, the QA-tradeoff curves in the real-data estimated categorical model look similar to those in the continuous model, where both the conference quality and author welfare first increase and then decrease as the acceptance threshold increases from $-2$ to $2$; eventually, both the quality and welfare reach 0. Intuitively, a high acceptance threshold rejects too many (both high-quality and low-quality) papers, while a low acceptance threshold accepts too many negative-quality papers and results in a low conference value; both lead to a decrease in the author welfare.
\begin{figure}
     \FIGURE
     {\begin{subfigure}[b]{0.46\textwidth}
         \centering
         \includegraphics[width=\textwidth]{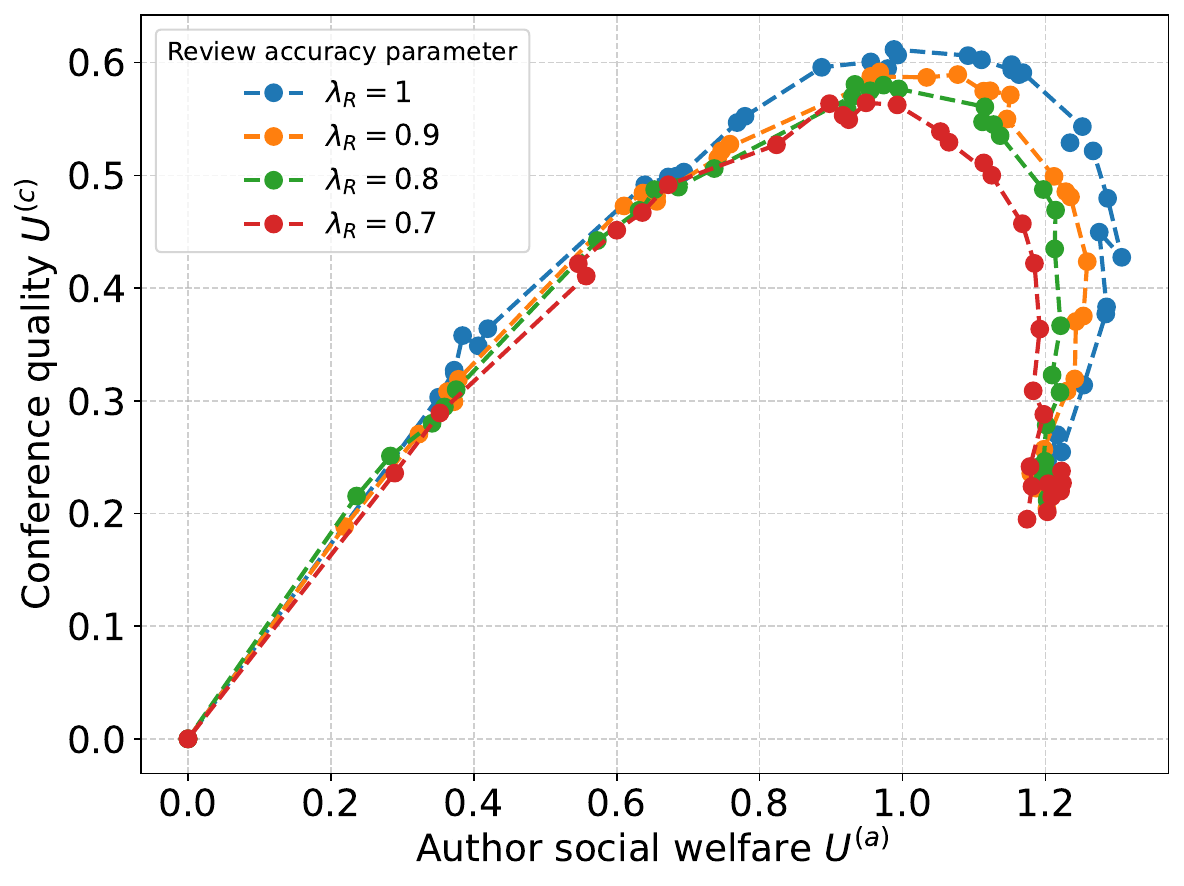}
         \captionsetup{size=}
         \caption{QB-tradeoff varying $\lambda_R$.}\label{fig:QA_tradeoff_noisy_a}
     \end{subfigure}
     \begin{subfigure}[b]{0.46\textwidth}
         \centering
         \includegraphics[width=\textwidth]{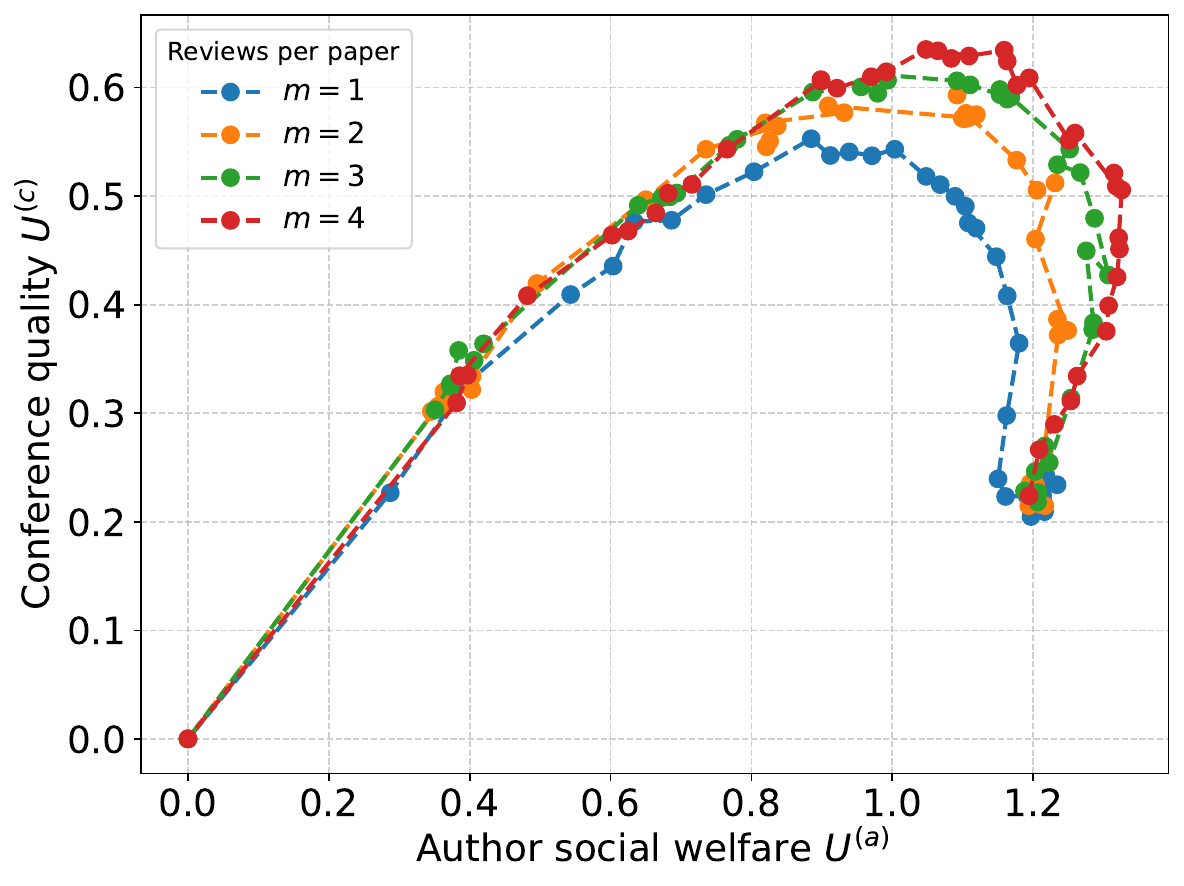}
         \captionsetup{size=}
         \caption{QA-tradeoff varying $m$.}\label{fig:QA_tradeoff_noisy_b}
     \end{subfigure}
     }
     {QA-tradeoff Curves under Reviewer Accuracy Parameter $\lambda_R$ and Number of Reviews Per Paper $m$. \label{fig:QA_tradeoff_noisy}}
     {The figures show the QA-tradeoff curves under different review qualities and numbers of reviews per paper, resulting from different acceptance thresholds in (a) the \emph{${(\lambda_A = 1, \lambda_R, \NumReviews=3, \TD=0.7)}$-ICLR$^{2020, 6}$ model} and (b) the \emph{$(\lambda_A = 1, \lambda_R = 1, \NumReviews, \TD=0.7)$-ICLR$^{2020, 6}$ model}.}
\end{figure}

In \cref{fig:QA_tradeoff_noisy}(a), we confirm our theoretical results in \cref{sec:QB-trade-noise} that a better review quality leads to a dominating QA-tradeoff. With a better review quality, the conference can not only achieve the same quality with a larger author welfare, but can also achieve a larger maximum conference quality.

We next investigate the effect of the number of reviews per paper on the QA-tradeoffs.
Although increasing $\NumReviews$ typically increases the review burden conditioned on the same conference quality, it improves \author welfare. 
This can be observed in \cref{fig:QA_tradeoff_noisy}(b), where the QA-tradeoff curves corresponding to larger $\NumReviews$ dominate those corresponding to smaller $\NumReviews$.
Intuitively, this is because a larger number of i.i.d.~peer reviews work similarly to a decrease in review noise. Therefore, the desired batch of papers can be accepted with fewer rounds of resubmissions, resulting in larger author welfare.

\subsection{Acceptance Rate With Noisy Authors}

In \cref{sec:acc_rate}, we show that whether the acceptance rate is increasing or decreasing in the acceptance threshold can be roughly determined by a quantity resembling the hazard rate of the paper quality prior. The acceptance rate tends to be monotone decreasing in $\tau$ if the hazard rate of the paper quality distribution is monotone. Our experiments on real data echo this observation. The quality prior we estimated from the real ICLR review data has a monotone increasing hazard rate. As expected, in \cref{fig:Acc_rate_noisy}, we observe that the acceptance rates are monotone decreasing in $\tau$.

We also observe that, compared to a weaker review system (i.e., one with a smaller $\lambda_R$ or fewer reviews per paper $\NumReviews$), a stronger review system tends to have a lower acceptance rate when the threshold $\tau$ is small, and a higher acceptance rate when $\tau$ is large. This pattern arises because a small $\tau$ admits many low-quality submissions. A stronger system, being more accurate, is better at filtering out these low-quality papers, leading to fewer acceptances. Conversely, when $\tau$ is large and most submissions are high-quality, a better review system can more efficiently identify good papers\textemdash often with fewer rounds of resubmission\textemdash resulting in a higher acceptance rate.

In practice, top conferences usually face a large number of low-quality submissions and thus have acceptance rates lower than 35\%. In this case, our results suggest that the acceptance rate is typically increasing with review quality and decreases with the acceptance threshold.

\begin{figure}
     \FIGURE
     {\begin{subfigure}[b]{0.47\textwidth}
         \centering
         \includegraphics[width=\textwidth]{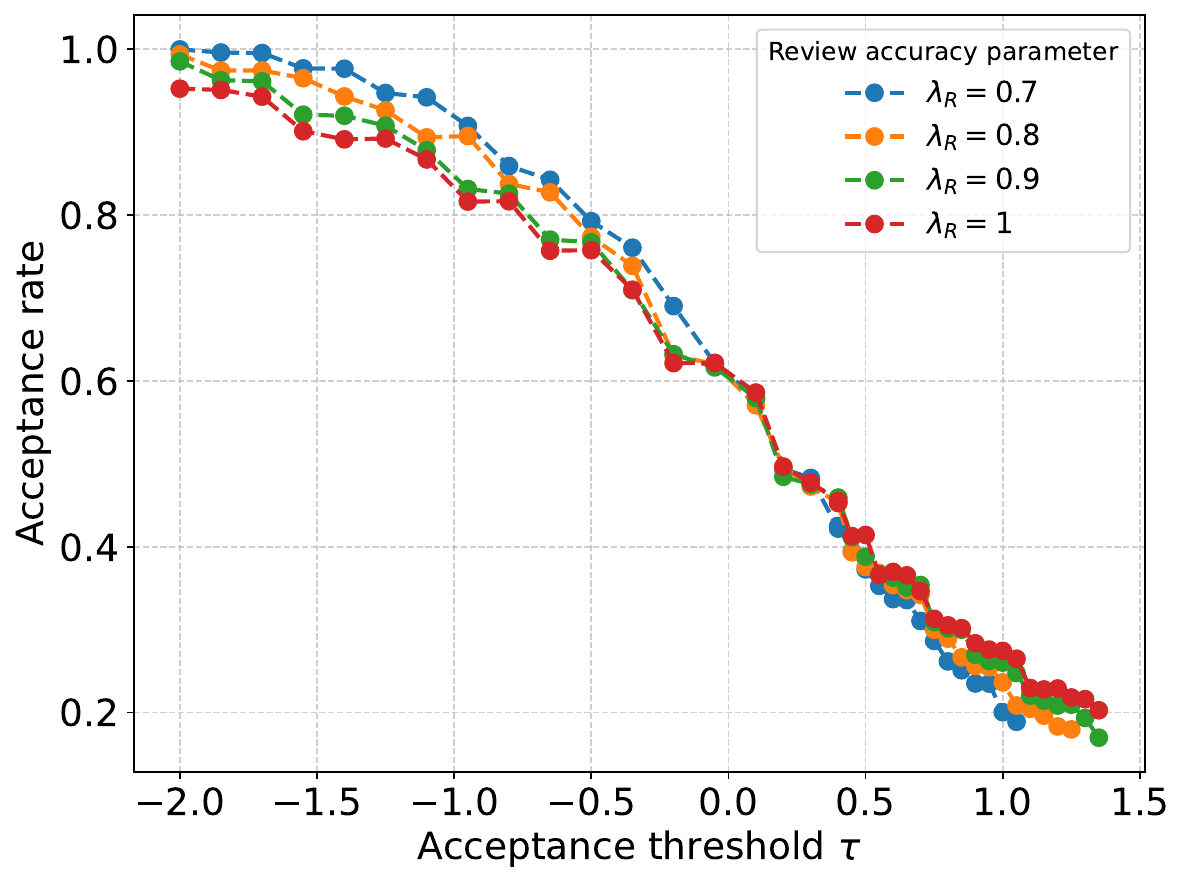}
         \captionsetup{size=}
         \caption{Acceptance rate varying $\lambda_R$.}
     \end{subfigure}
     \begin{subfigure}[b]{0.48\textwidth}
         \centering
         \includegraphics[width=\textwidth]{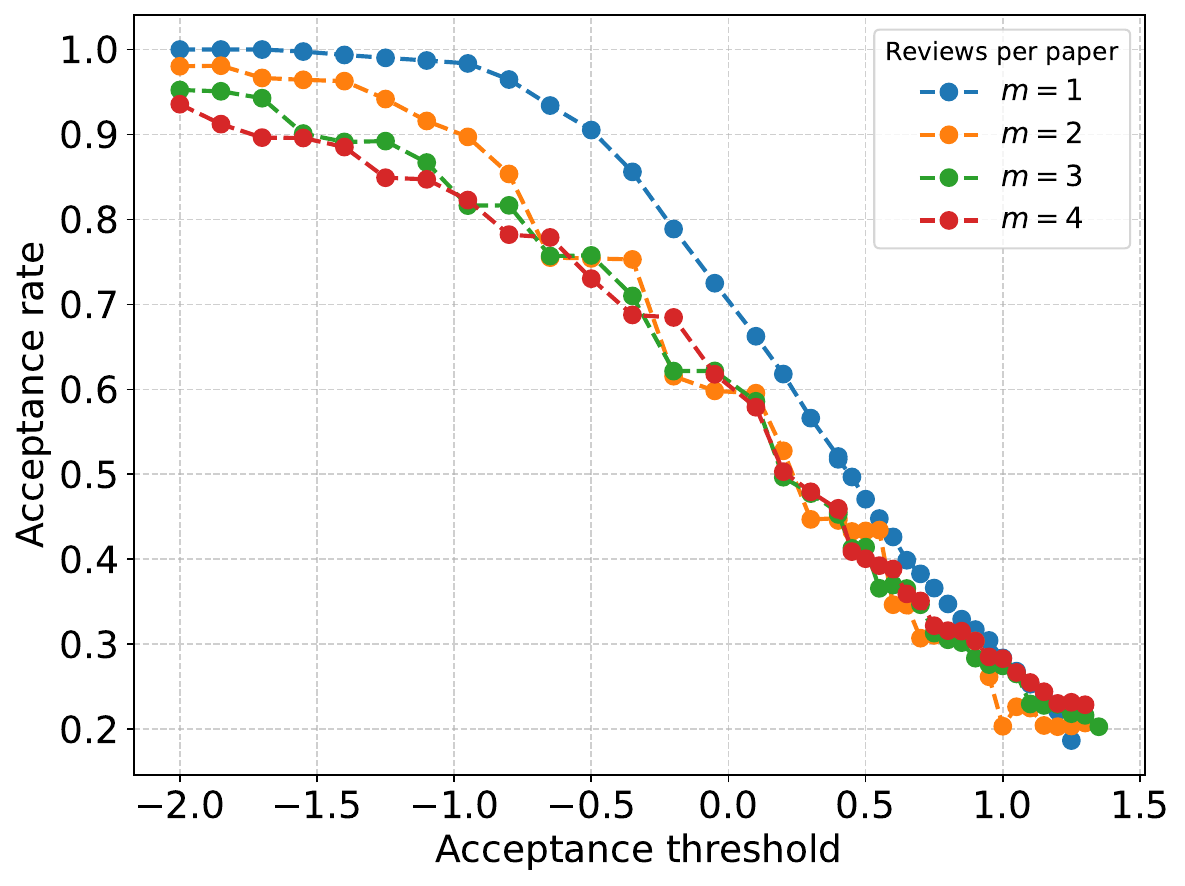}
         \captionsetup{size=}
         \caption{Acceptance rate varying $m$.}
     \end{subfigure}}
     {Acceptance Rates under Reviewer Accuracy Parameter $\lambda_R$ and number of reviews per paper $m$. \label{fig:Acc_rate_noisy}}
     {The figures show the relationship between the acceptance rate and the acceptance threshold under different review qualities and numbers of reviews per paper, resulting from different acceptance thresholds in (a) the \emph{${(\lambda_A = 1, \lambda_R, \NumReviews=3, \TD=0.7)}$-ICLR$^{2020, 6}$ model} and (b) the \emph{$(\lambda_A = 1, \lambda_R = 1, \NumReviews, \TD=0.7)$-ICLR$^{2020, 6}$ model}.}
\end{figure}

\begin{remark}[Discussions and limitations of ABM experiments and results]
First, we note that in our noiseless model, the quality distribution of accepted papers equals the prior paper quality distribution conditional on being greater than the de facto threshold. However, our agent-based model presented in this section was instead learned from the \emph{submitted} papers, so it may overcount borderline papers while under-counting low-quality papers. 

Second, we have seen that in our experiments, a small number of reviews per paper can achieve appealing tradeoffs. However, in practice, a larger number of reviews may nonetheless be desirable due to several other considerations:
first, more reviews may decrease the average number of times a paper needs to be resubmitted, helping authors; second, our reviewer error model does not capture the situation where there may be a ``fatal flaw'' that only an astute reviewer observes; and finally, more reviewers can provide more feedback.
Moreover, our model assumes that reviews are i.i.d., which is not true in reality when the reviewers can communicate (after the rebuttal). The integrated review signals after communication may become much more informative than aggregating each of them as an i.i.d.~review; this may significantly benefit the strategy of soliciting a large number of reviews.
\end{remark}
\dkcomment{Moving this up.}
\subsection{Additional Details of ABM Experiments}

In this section, we provide additional details about our agent-based model experiments under the categorical model. This includes how we simplify the threshold acceptance policy in the categorical model and how we learn the model parameters from real data.

\subsubsection{Simplified Threshold Acceptance Policy}

\label{app:simplified_threshold}

In the categorical model, a threshold $\tau$ may not uniquely determine a policy, namely, when there is a combination of reviewer signals which occurs with positive probability and induces posterior expected quality exactly $\tau$. 
For convenience of notation and visualization, instead of specifying the additional parameter $r$, we use the following convention to associate a unique policy $\ACCMAP[\tau]$ with each $\tau \in \R$.
Let $\min(\QualSet) \leq U(\RevSigV[1]) < \cdots <  U(\RevSigV[M]) \leq \max(\QualSet)$ be the expected posterior qualities of all possible combinations of review signals that occur with positive probability.
Note that the assumption that all inequalities between $U(\RevSigV[1])$ and $U(\RevSigV[M])$ are strict is basically without loss of generality. If there are multiple review vectors with the same expected posterior quality, our proofs can be adjusted by replacing one vector \RevSigV[i] with the set of all vectors giving rise to the same $U(\RevSigV[i])$. This change is merely syntactic.
\begin{enumerate}
\item If $\tau <  U(\RevSigV[1])$, accept everything.
\item If $\tau \geq \max(\QualSet)$, reject everything.
\item If $ U(\RevSigV[M]) \leq \tau < \max(\QualSet)$, accept papers of posterior expected quality $U(\RevSigV[M])$ with probability $\frac{\max(\QualSet) - \tau}{\max(\QualSet) -  U(\RevSigV[M])}$ and reject all other papers.
\item For the intermediate cases $\tau \in [ U(\RevSigV[1]),  U(\RevSigV[M]))$,
let $i$ be such that $\tau \in [ U(\RevSigV[i]),  U(\RevSigV[i+1]))$. Then, we interpret a threshold of $\tau$ as the policy which accepts all papers of expected posterior quality at least $U(\RevSigV[i+1])$, rejects all papers of expected posterior quality strictly less than $U(\RevSigV[i])$, and accepts papers of expected posterior quality exactly $U(\RevSigV[i])$ with probability $\frac{U(\RevSigV[i+1]) - \tau}{ U(\RevSigV[i+1]) -  U(\RevSigV[i])}$.
\end{enumerate}

\subsubsection{Learning Parameters of the Categorical Model from Data}
\label{sec:learning-parameters}

We set the parameters of our model based on the OpenReview datasets of submissions and reviews for ICLR 2020 \citep{iclr2020review} and ICLR 2021 \citep{iclr2021review}. The datasets contain about $1500$ and $2500$ submissions, respectively; typically, each submission is reviewed thrice, with scores in $\SigSet = \SET{0,1,\dots,9}$.
We apply the same learning algorithm to each of the two datasets separately, yielding two plausible parameter settings for evaluation.

\emph{The Number of Paper Quality Scores and Signals.}
While the set $\SigSet = \SET{0,1,\dots,9}$ of available review scores is known, the number (or set) of different paper qualities is not. Thus, our goal is to simultaneously learn the number of paper qualities, the paper quality distribution, and the distribution of reviewer signals conditioned on the paper's quality.

To do so, we exhaustively try all numbers $\QualSetSize$ of paper quality scores in $\SET{2, \ldots, 10}$ for each dataset; for each, we apply a variant of the EM algorithm described below. Once the EM algorithm has converged, we evaluate the likelihood of the learned model for the held-out test data, and retain the model(s) with the highest likelihood scores.

Given a choice of $\QualSetSize$, to learn the paper quality distribution $\QualDist$ and conditional review distribution $\RevSigDist$, we apply the EM algorithm (adding some noise in each iteration for smoothness) and cross-validation to avoid overfitting [\cite{dawid1979maximum}].

Specifically, by cross-validation, we first randomly divide the dataset into five subsets of approximately equal size. We choose one of the subsets as the test set while the remaining $80\%$ of data form the training set. This step is repeated five times: each time, a different one of the five subsets is used as the test set.
Given the training and test dataset, for each $\QualSetSize \in \SET{2,3,\ldots,10}$, we run the EM algorithm for $100$ iterations on the training set to estimate the quality of each paper and the confusion matrix for reviewers (i.e., the matrix of review score probabilities conditional on ground truth quality); the EM algorithm alternates between updating the quality distribution with fixed confusion matrix, and updating the confusion matrix with fixed quality distribution.
To avoid overfitting, we repeat the following steps in every iteration: given an estimated confusion matrix $\RevSigDist^{(k)}$ after the $k$-th iteration, we perturb $\RevSigDist^{(k)}$ with a small amount of noise so that each row $i$ becomes the convex combination $0.99 \cdot \RevSigDist[i]^{(k)}+0.01 \cdot \frac{1}{|\SigSet|}\mathbf{1}$;
here, $\frac{1}{|\SigSet|}\mathbf{1}\in \R^{|\SigSet|}$ is the uniform distribution on signals. After each iteration, we evaluate the likelihood of the test data given the trained model, i.e., $\QualDist$ and $\RevSigDist$. For every $\QualSetSize$, this gives us a sequence of models for each iteration. Finally, the model that corresponds to the greatest likelihood on the test data is selected for use.

To choose the value of $\QualSetSize$, we judge the learned model based on the likelihood averaged over five times of cross-validation. 
The paper quality space size $\QualSetSize$ that has the maximum averaged likelihood is selected. Finally, we output the paper quality distribution $\QualDist$ as well as the confusion matrix $\RevSigDist$ as the average of the learned parameters for each of the five runs with different test sets. 
We find that $\QualSetSize \in \SET{4,5,6}$ tends to fit the data well. In our experiments, we choose $\QualSetSize = 6$ to enable a richer set of submissions and thus smoother curves.

The resulting parameters are shown in \cref{tab:learned_para}.
For experiments in which the authors receive \emph{noisy} signals (instead of the ground truth quality), we set the confusion matrix for the authors $\AuthSigDist$ to be the same as the one for reviewers, $\RevSigDist$; this is because unfortunately, no data are available that show how authors evaluate their own papers.

We note that although only the results for ICLR 2020 and $\QualSetSize = 6$ are presented in the paper, all of our qualitative results hold for all of the learned models.


In some of our experiments, we want to explicitly evaluate the impact of increasing the noise in reviews. To do so, we consider reviewer signal matrices which are a convex combination of the learned signal matrix $\RevSigDist$ with the uniform signal distribution $\frac{1}{|\SigSet|}\mathbf{1}$; this corresponds to a reviewer who assigns a uniformly random score with probability $1-\lambda_R$. The weight $\lambda_R \in [0,1]$ placed on the learned distribution $\RevSigDist$ then captures the quality of the signal. Similarly, $\lambda_A$ controls the weight of the confusion matrix of the authors' signal.

\emph{The Numerical Values of Paper Quality.} 
Next, we infer the values of the paper qualities, given the estimated parameters $\QualDistTilde, \RevSigDistTilde$ and the review scores $\boldsymbol{s}$. That is, given $\QualSetSize$, we want to learn a vector of values 
$(\qual_1,\ldots,\qual_\QualSetSize)$ of paper qualities. 

First, for each paper $i$ in the dataset, we take the average over the review scores, denoted by $\bar{s}_i$. Then, we set the value of the quality of paper $i$ as $\psi(\bar{s}_i)$, where $\psi: [0,9] \to \mathbb{R}$ is an increasing function that maps the average score to the quality of a paper. In our experiments, we set $\psi$ as a reversed (and shifted and scaled) sigmoid function, i.e., $\psi(x) = 3\cdot \log{\frac{x+0.01}{9.01-x}}$ for $x\in [0,9]$. We choose the reversed sigmoid function because it can assign ``convex'' weights on both very positive reviews and very negative reviews. Furthermore, it is parameterized by a small number of interpretable parameters so that it is not too complex to empirically set the parameters.

Next, given the ``artificial'' quality assigned to each paper in the dataset, we want to compute the average paper quality of each category. This requires us to learn a distribution of the category label for each paper. With the learned model $\QualDistTilde, \RevSigDistTilde$ and paper $i$'s review signals $\boldsymbol{s}_i$, we can infer this distribution based on Bayes' rule. Let $l_{i,k}=\ProbC{\text{paper } i\text{ belongs to the $k$-th category}}{\boldsymbol{s}_i, \QualDistTilde, \RevSigDistTilde}$ for $k\in \{1, 2, \ldots, \QualSetSize\}$.
We can then set the paper quality as a weighted average
\begin{equation*} 
    \qual_k = \frac{\sum_i^n l_{i,k}\cdot \psi\left(\bar{s}_i\right)}{\sum_i^n l_{i,k}}.
\end{equation*}

\subsubsection{The Learned Parameters}

\cref{tab:learned_para} summarizes our parameters for ICLR 2020 and $L=6$. The rows of the confusion matrix are ranked based on the expected scores (from low to high). That is, the $k$th row of the confusion matrix $\RevSigDist$ has a lower average score than the ($k+1$)-st row. Given this ranking, we observe that the value of paper qualities learned from our method is monotone increasing in $k$.

\begin{table}[htb]
\centering
\scriptsize
\caption{Learned Prior ($\QualDist$), Confusion Matrix ($\RevSigDist$), and Quality Levels ($\QualSet$) from the ICLR Dataset.}
\label{tab:learned_para}
\begin{tabular}{|c|>{\centering\arraybackslash}p{1.8cm}|>{\centering\arraybackslash}p{10.2cm}|>{\centering\arraybackslash}p{1.8cm}|}
\hline
 & $\QualDist$ & $\RevSigDist$ (Confusion Matrix) & $\QualSet$ \\
\hline
ICLR (Updated) $\QualSetSize = 6$ 
& $\begin{bmatrix}
0.111\\ 0.1736\\ 0.1958\\ 0.2081\\ 0.1856\\ 0.1259
\end{bmatrix}$ 
& $\begin{bmatrix}
0.0059 & 0.0867 & 0.2653 & 0.3958 & 0.1817 & 0.0541 & 0.0087 & 0.0006 & 0.0009 & 0.0003\\
0.0004 & 0.0109 & 0.1204 & 0.3564 & 0.2940 & 0.1441 & 0.0557 & 0.0152 & 0.0027 & 0.0003\\
0.0006 & 0.0074 & 0.0891 & 0.2546 & 0.3131 & 0.2206 & 0.0901 & 0.0168 & 0.0073 & 0.0004\\
0.0016 & 0.0084 & 0.0691 & 0.1935 & 0.2803 & 0.2557 & 0.1451 & 0.0356 & 0.0092 & 0.0015\\
0.0007 & 0.0069 & 0.0387 & 0.1485 & 0.2612 & 0.2811 & 0.1949 & 0.0614 & 0.0060 & 0.0007\\
0.0009 & 0.0007 & 0.0176 & 0.0930 & 0.1674 & 0.2593 & 0.3125 & 0.1038 & 0.0413 & 0.0036
\end{bmatrix}$ 
& $\begin{bmatrix}
-2.9101\\ -0.8668\\ -0.0772\\ 0.6178\\ 1.2811\\ 2.7132
\end{bmatrix}$ 
\\
\hline
\end{tabular}
\end{table}



\section{Institutional Memory}
\label{sec:memory_policy}
Most conferences' acceptance policies are memoryless, in the sense that resubmissions are treated the same as new submissions.
However, in part to deal with the large number of papers that are repeatedly resubmitted, several conferences have experimented with models that have ``institutional memory.''  We consider the following types of policies which are not memoryless and contain some institutional memory.

\begin{description}
\item[Time Limited, Fixed Threshold:] The simplest way to incorporate memory into the submission process is to limit the number of times the same paper can be submitted.
We call such a policy a \emph{$T$-round fixed-threshold policy.} 

\item[Time Limited, Variable Threshold:] A generalization of $T$-round fixed threshold policies is to allow different acceptance thresholds for different rounds. This allows a conference to set higher/lower standards for resubmissions.
However, we require the conference to solicit the same number of reviews for each round. 
We do so for two reasons. First, this reduces the policy space\textemdash this is significant in terms of computation when optimizing over policies with memory. Second, it excludes highly unrealistic policies with very specific dependency on model parameters. For example, when authors are noiseless, the following unrealistic policy can achieve maximum conference quality with minimum review burden: the conference rejects all submissions $T-1$ times without review. In round $T$, one review is solicited, and the paper is accepted if and only if the expected quality conditioned on the review is larger than $\tau$; finally, in round $T+1$, the submission is accepted without review. A careful choice of $T$ and $\tau$, taking advantage of authors' patience (or lack thereof) and knowledge of their own paper's quality, ensures that no negative-quality papers are submitted, yet all positive-quality papers are submitted and eventually accepted.

Formally, a \emph{round-dependent threshold policy} is defined by a threshold vector $\boldsymbol{\tau}=\left(\tau^{(1)},\tau^{(2)}, \ldots, \tau^{(T)}\right)$; in round $t \leq T$,  a paper with reviews $\RevSigV$ is accepted if and only if its expected quality conditioned on the most recent review vector $\RevSigV$ (not including reviews from earlier rounds) is at least $\tau^{(t)}$.

\item[Review Following:]
Under a \emph{$T$-round review-following threshold memory policy}, not only does the conference track the number of resubmissions; it also considers all past reviews as equal (additional) reviews of resubmissions. That is, reviews are treated identically regardless of which round they were provided in.
(As such, they do not serve the purpose of verifying whether authors addressed concerns about previous versions of their paper.)
Again, we have the conference obtain the same number of reviews in each round of resubmissions, i.e., $\NumReviews[t]=\NumReviews$ for all $t$.
The conference commits to a number $T$ of rounds and rejects any paper that has been submitted $T$ times. The conference also commits to a sequence of thresholds $\boldsymbol{\tau}=\left( \tau^{(1)}, \tau^{(2)}, \ldots, \tau^{(T)}\right)$, such that in round $t$, a paper with historical reviews $\left(\RevSigV[1], \ldots, \RevSigV[t]\right)$ is accepted if and only if its expected quality conditioned on $\left(\RevSigV[1], \ldots, \RevSigV[t]\right)$, $U(\RevSigV[1], \ldots, \RevSigV[t])$, is at least $\tau^{(t)}$.
\end{description}

All three policy types are time-limited, in that the number of times any particular paper can be submitted is capped.  This is similar to  certain National Science Foundation programs (e.g., CAREER), where the number of times a proposal (or sometimes author) can submit is limited. 
Time-limited policies with fixed threshold treat all submissions, whether initial or resubmitted, equally in each round, until they have reached their resubmission limit. In this section, we analytically investigate such review policies with a focus on how the limit on the number of resubmissions affects the QB-tradeoff. 
In contrast, the other two types of policies allow different thresholds in different rounds. In particular, the review-following model captures the increasingly popular policy of requiring resubmissions to be accompanied by previous reviews (e.g., at IJCAI).
These two generalizations are more complicated to analyze, and we use ABMs to investigate their QB-tradeoff.

Perhaps subtly, review-following policies do not fully subsume the other two policies, because past reviews cannot be treated differently from new reviews. For example, review-following policies cannot simulate a time-limited fixed threshold policy in which every round, a paper obtains two reviews and is accepted iff both reviews are positive. The reason is that a review-following policy cannot distinguish the case of having one positive review in each round (which should lead to rejection) from the case of having zero positive reviews in the first round and two positive reviews in the second round (which should lead to acceptance).

\subsection{Time-limited Policy With Fixed Acceptance Rule}
\label{sec:time_limited_policy}

We start by investigating policies under which in each round, the conference applies a fixed monotone acceptance rule. 
As for the case of unlimited resubmissions, we first establish theoretical results for noiseless authors, and then test their robustness to noise in the authors' signals with ABM experiments.

\subsubsection{Theoretical Results for Noiseless Authors}

We begin with a theoretical analysis for noiseless authors.
We first show that under this type of acceptance policy, a threshold strategy is still a best response for noiseless authors, i.e., there exists a de facto threshold $\theta$ for every monotone acceptance policy $\ACCMAP$. 

We begin by generalizing \cref{lem:author_response} to time-limited policies.

\begin{lemma}\label{lem:best-response-finite}
Consider a conference which allows a paper to be submitted at most $T$ times, and for each of these submissions independently decides whether to accept the paper, according to the same monotone policy $\ACCMAP$.
Suppose that the conference value $\ConfValue$ is fixed, and accordingly $\rho$ is fixed.
Then, the author's best response is to (re)submit the paper (in each round) if $\AccP{\ACCMAP}{\Qual} > 1/\rho$, and take the side option when $\AccP{\ACCMAP}{\Qual} < 1/\rho$.
The author is indifferent between submitting and not submitting if $\AccP{\ACCMAP}{\Qual} = 1/\rho$.
\end{lemma}

\proof{Proof of \cref{lem:best-response-finite}}
  The author will submit a paper with quality $\Qual = q$ if her expected utility is greater than 1, and not submit it if her expected utility is less than 1.
  We compute the expected utility for an author who submits the paper exactly $T$ times, akin to the proof of \cref{lem:author_response}:

\begin{align*}
    u^{(a)}(q,\ACCMAP, \ConfValue) & = 
    (\TD \cdot (1-\AccP{\ACCMAP}{q}))^T 
    + \ConfValue \cdot \sum_{t=1}^T \AccP{\ACCMAP}{q} \cdot  
    (\TD \cdot (1-\AccP{\ACCMAP}{q}))^{t-1}
    \\ & = 
    (\TD \cdot (1-\AccP{\ACCMAP}{q}))^T
    + \ConfValue \cdot \AccP{\ACCMAP}{q} \cdot 
    \frac{1 - (\TD \cdot (1-\AccP{\ACCMAP}{q}))^T}{1 - \TD \cdot (1-\AccP{\ACCMAP}{q})}
    \\ & = 
    \left(\frac{\ConfValue \cdot \AccP{\ACCMAP}{q}}{1 - \TD \cdot (1-\AccP{\ACCMAP}{q})} - 1 \right)
    \cdot \left( 1 - (\TD \cdot (1-\AccP{\ACCMAP}{q}))^T \right) + 1.
\end{align*}
To determine when this utility is strictly larger resp.~strictly smaller than 1, we need to determine when the product of the first two terms is positive resp.~negative. The second factor is always positive, and the first has the same sign as $\AccP{\ACCMAP}{q} - 1/\rho$, regardless of $T$.
This completes the proof. \Halmos
\endproof

\cref{lem:best-response-finite} shows that when the conference value is fixed, an author’s decision depends solely on the relationship between the acceptance probability $\AccP{\ACCMAP}{q}$ and the inverse of the conference attractiveness factor $1/\rho$, and is independent of the time limit $T$.
However, the attractiveness factor $\rho$ is itself determined by the authors' submission strategy as well as the conference acceptance policy and the time horizon $T$.

Consider a threshold strategy under which authors with quality $\Qual > \bar{q}$ submit and continue to resubmit up to $T$ times, those with $\Qual < \bar{q}$ immediately opt for the outside option, and those with $\Qual = \bar{q}$ submit and resubmit with probability $r$. We will show that, for a fixed conference policy $\ACCMAP$ and time horizon $T$, the attractiveness factor $\rho$ increases with the threshold $\bar{q}$. Consequently, $1/\rho$ decreases with $\bar{q}$, while the marginal acceptance probability $\AccP{\ACCMAP}{\bar{q}}$ increases with $\bar{q}$.
Building on this monotonicity, the results of \cref{prop:de_facto} will then extend naturally to the finite-$T$ setting using the same proof structure.

In order to prove monotonicity, we first introduce the following notation. 
If an author persistently submits a paper of quality $\qual$, the probability that it is eventually accepted under the acceptance policy $\ACCMAP$ is given by:
\begin{align*}
A(\qual, T, \ACCMAP) 
& = \sum_{t=0}^{T-1} (1-\AccP{\ACCMAP}{\qual})^t \cdot \AccP{\ACCMAP}{\qual}
= 1-(1-\AccP{\ACCMAP}{\qual})^T.
\end{align*}

As a result, $\lim_{T \to \infty} A(\qual, T, \ACCMAP) = 1$, corresponding to the fact that when an author resubmits a paper until acceptance, it will eventually be accepted.

The conference value under the threshold strategy \ACCMAP can be characterized as:
\begin{align*}
    \ConfValue(\bar{q}, r, T, \ACCMAP) 
    & = 1 + \frac{r\cdot \bar{q} \cdot \QualProb{\bar{q}} \cdot A(\bar{q}, T, \ACCMAP) 
    + \sum_{q \in \QualSet, q > \bar{q}} q \cdot \QualProb{q} \cdot A(q, T, \ACCMAP)}{r\cdot \QualProb{\bar{q}} \cdot A(\bar{q}, T, \ACCMAP) 
    + \sum_{q \in \QualSet, q > \bar{q}} \QualProb{q} \cdot A(q, T, \ACCMAP)} \tag{categorical model}\\
    \ConfValue(\bar{q}, T, \ACCMAP) 
    & = 1 + \frac{\int_{\bar{q}}^\infty q \cdot \QualProb{q} \cdot A(q, T, \ACCMAP) dq}{\int_{\bar{q}}^\infty \QualProb{q} \cdot A(q, T, \ACCMAP) dq}. \tag{continuous model}
\end{align*}

Fixing the acceptance policy $\ACCMAP$ and the time limit $T$, we see that the conference value is increasing in the submission threshold $\bar{q}$ and decreasing in the marginal submission probability $r$. Consequently, the conference attractiveness factor $\rho(\bar{q}, r, T, \ACCMAP)$ is also increasing in $\bar{q}$ and decreasing in $r$.

This implies that \cref{prop:de_facto} continues to hold for time-limited acceptance policies with a fixed acceptance rule.
The only difference is that in the characterization result, the conference's attractiveness $\rho$ now depends not only on the acceptance policy $\ACCMAP$, but also on $T$.
Furthermore, for threshold acceptance policies with threshold $\tau$ under the continuous model, the result from \cref{prop:gap-invariant} can also be directly generalized. 
That is, given a candidate threshold $\theta$, the acceptance threshold that induces it as a de facto threshold is the solution to the following equation:\fangcomment{need to fix $\tau_s$?}
\begin{align*}
  \tau 
  & = \theta + \left(\REVNOISEDIST\right)^{-1}{\left(1-\frac{1}{\rho(\theta, T, \ACCMAP[\tau])}\right)} 
  = \theta + \left(\REVNOISEDIST\right)^{-1}{\left(\frac{\ConfValue(\theta, T, \ACCMAP[\tau]) - 1}{\ConfValue(\theta, T, \ACCMAP[\tau]) - \TD}\right)},
\end{align*}
where $\ACCMAP[\tau]$ is the threshold acceptance policy with threshold $\tau$. This equation allows us to solve for either $\theta$ or $\tau$ given the other.




We further characterize the expected value of the conference quality and the review burden.
For a given acceptance threshold $\tau$, let $\mathcal{S}_{\tau} \subseteq \QualSet$ be the set of paper qualities which an author will submit at this threshold.
When $\QualSet$ is discrete,
the expected conference quality and review burden are as follows:
\begin{align*}
    \CONFUTIL(\tau) 
    & = \sum_{q\in \mathcal{S}_{\tau}} \QualProb{q} \cdot q \cdot 
    {\sum_{t=0}^{T-1} (1-\AccP{\tau}{q})^{t}\AccP{\tau}{q}}
\; = \; \sum_{q \in \mathcal{S}_{\tau}} 
    \QualProb{q} \cdot q \cdot A(q,T, \ACCMAP[\tau]).
    \end{align*}
\dkreplace{Note that $(1-\AccP{\tau}{q})^T$ is the probability of a paper with quality $q$ being rejected for all $T$ rounds; t}{T}hus, the expected conference quality is the expected value of papers in $\mathcal{S}_\tau$, weighted by the acceptance probabilities. For continuous qualities, the corresponding quantities are obtained by replacing sums by integrals and probabilities by densities.

To compute the review burden, recall that the expectation of a non-negative random variable $Z$ is $\Expect{Z} = \int_{0}^\infty \Prob{Z > z}\,dz$. For a paper of quality $q$, the number of submissions exceeds $z \geq 0$ with probability $(1-\AccP{\tau}{q})^{\lfloor z \rfloor}$ for $z < T$, and 0 otherwise. Therefore, the expected review burden is
\begin{align*}
    \PaperReviews(\tau) & = \NumReviews \cdot \sum_{q\in \mathcal{S}_{\tau}} \QualProb{q} \cdot 
    {\sum_{t=0}^{T-1} (1-\AccP{\tau}{q})^{t}}
    \; = \; \NumReviews \cdot \sum_{q\in \mathcal{S}_{\tau}} \QualProb{q} \cdot \frac{1}{\AccP{\tau}{q}} 
    \cdot A(q,T, \ACCMAP[\tau]).
\end{align*}

\subsubsection{ABM experiments for noisy authors}
We next test the robustness of our theoretical results by performing ABM experiments for authors who only receive noisy signals.

\cref{fig:tradeoff_T} (a) shows the QB-tradeoff for time-limited fixed-threshold policies in the continuous model. As expected, by lowering $T$, the conference reduces the maximum conference quality that can be reached by the review policy, but doing so can also reduce the review burden in cases where the desired conference quality is still reachable. 


\begin{figure}
     \FIGURE
     {\begin{subfigure}[b]{0.47\textwidth}
         \centering
         \includegraphics[width=\textwidth]{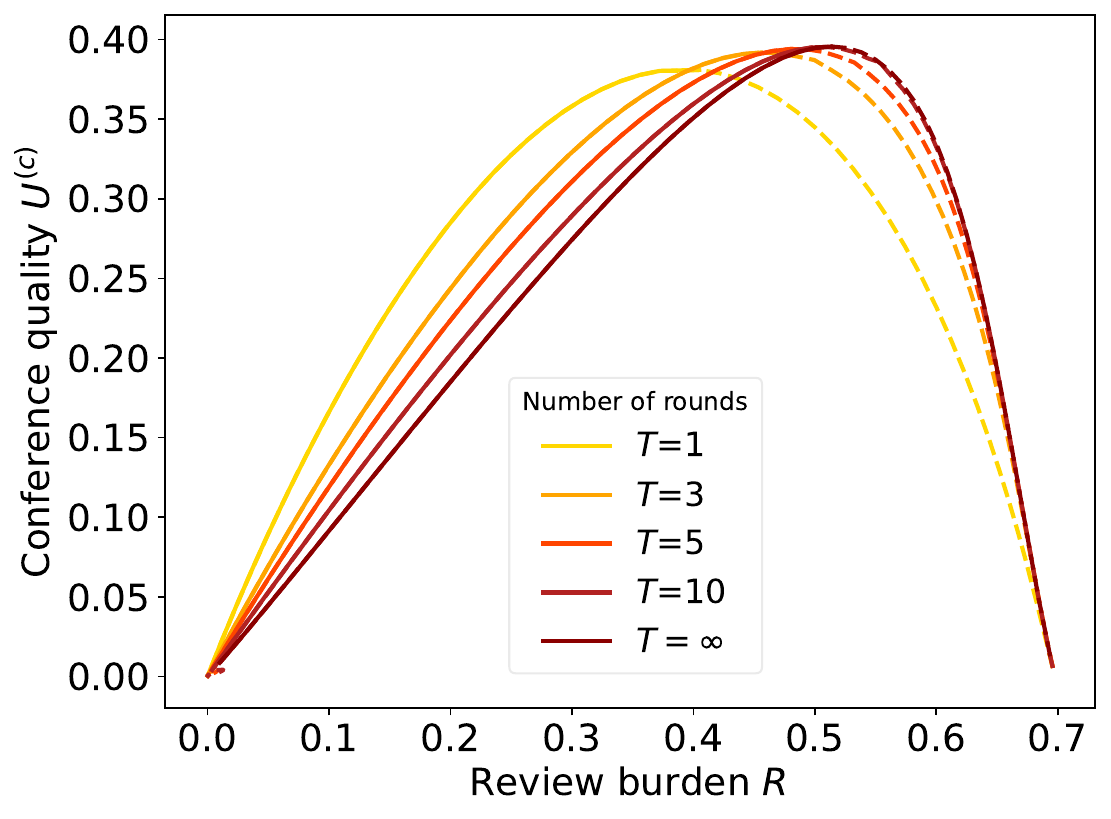}
         \captionsetup{size=}
         \caption{Continuous model with noiseless authors.}
     \end{subfigure}
     \begin{subfigure}[b]{0.46\textwidth}
         \centering
         \includegraphics[width=\textwidth]{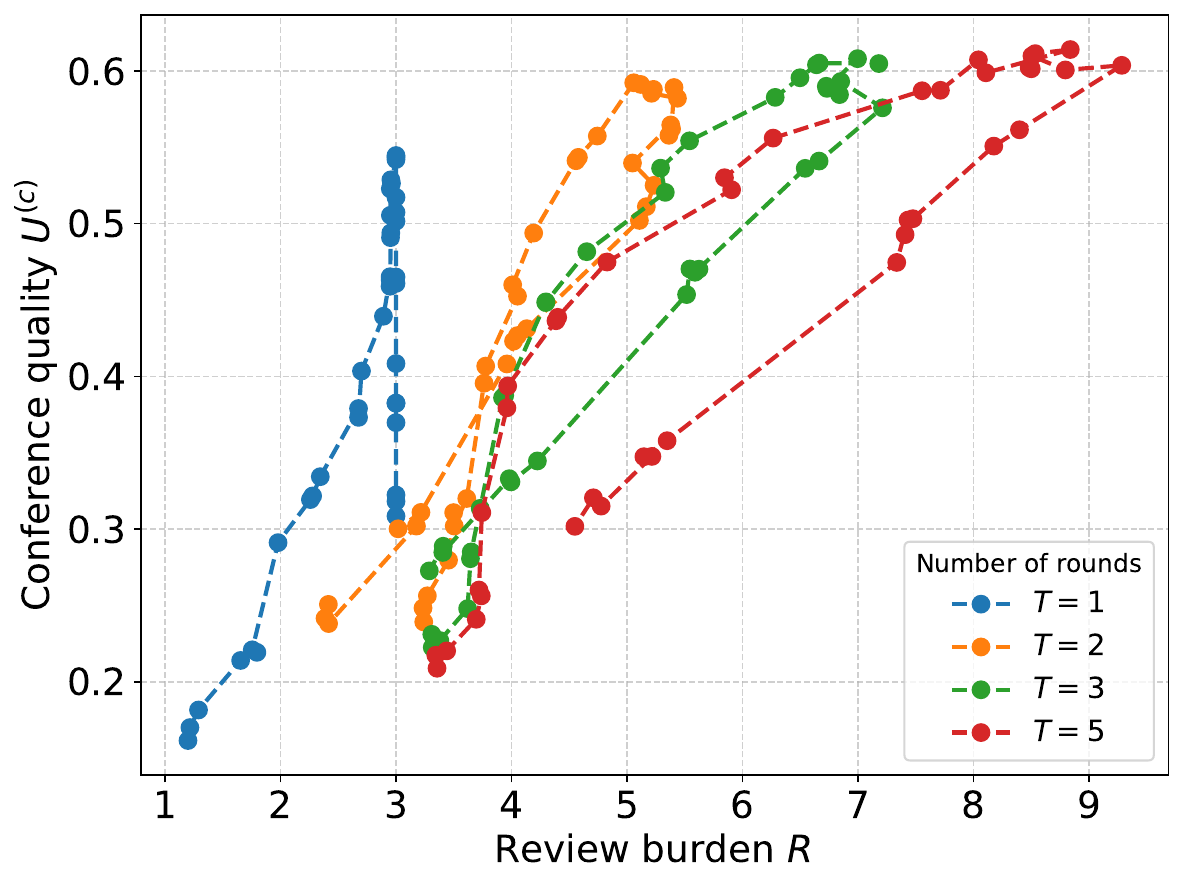}
         \captionsetup{size=}
         \caption{Categorical model with noisy authors.}
     \end{subfigure}}
     {The QB-tradeoff Curves under $\boldsymbol{T}$-round Fixed Threshold Policies in the (a) Continuous Model and (b) Categorical Model. \label{fig:tradeoff_T}}
     {The example is in (a) the \emph{$(\sigma=0.5, \mu_{\QualDist} = -1, \sigma_{\QualDist} = 2, \NumReviews = 1, \TD = 0.7)$-Double Gaussian noiseless-author model} and (b) the \emph{$(\lambda_A=1, \lambda_R=1, \NumReviews = 3,\TD=0.7)$-ICLR$^{2020, 6}$  model. \dkcomment{Left one does not say ``Number of rounds'' before the explanation of colors. Same in the next figure.}}
     }
\end{figure}


The ABM QB-tradeoff curves for the categorical model and noisy authors are presented in \cref{fig:tradeoff_T}(b). The same pattern can also be observed here: by comparing different colors of dots while fixing a threshold policy (captured by the index of the dot on its corresponding curve), we observe that the conference can reduce the review burden at the expense of quality by lowering $T$.

\begin{figure}
     \FIGURE
     {\begin{subfigure}[b]{0.47\textwidth}
         \centering
         \includegraphics[width=\textwidth]{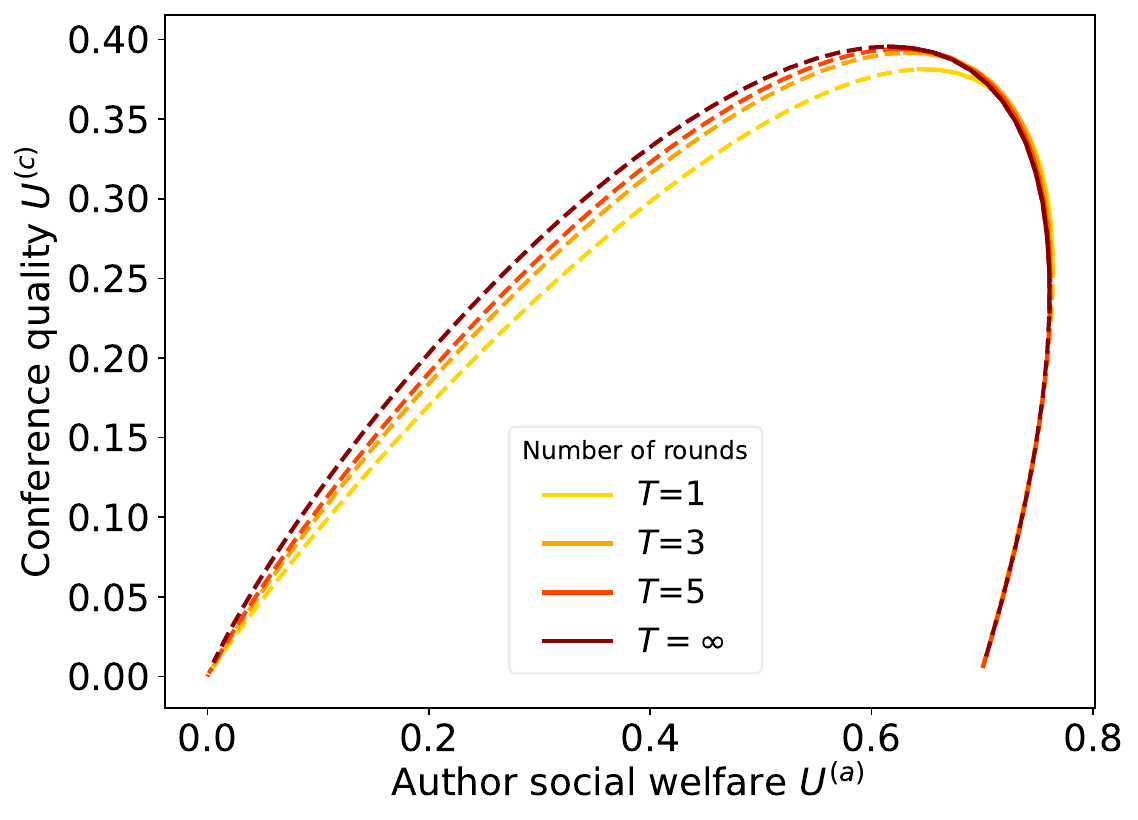}
         \captionsetup{size=}
         \caption{Continuous model with noiseless authors.}
     \end{subfigure}
     \begin{subfigure}[b]{0.46\textwidth}
         \centering
         \includegraphics[width=\textwidth]{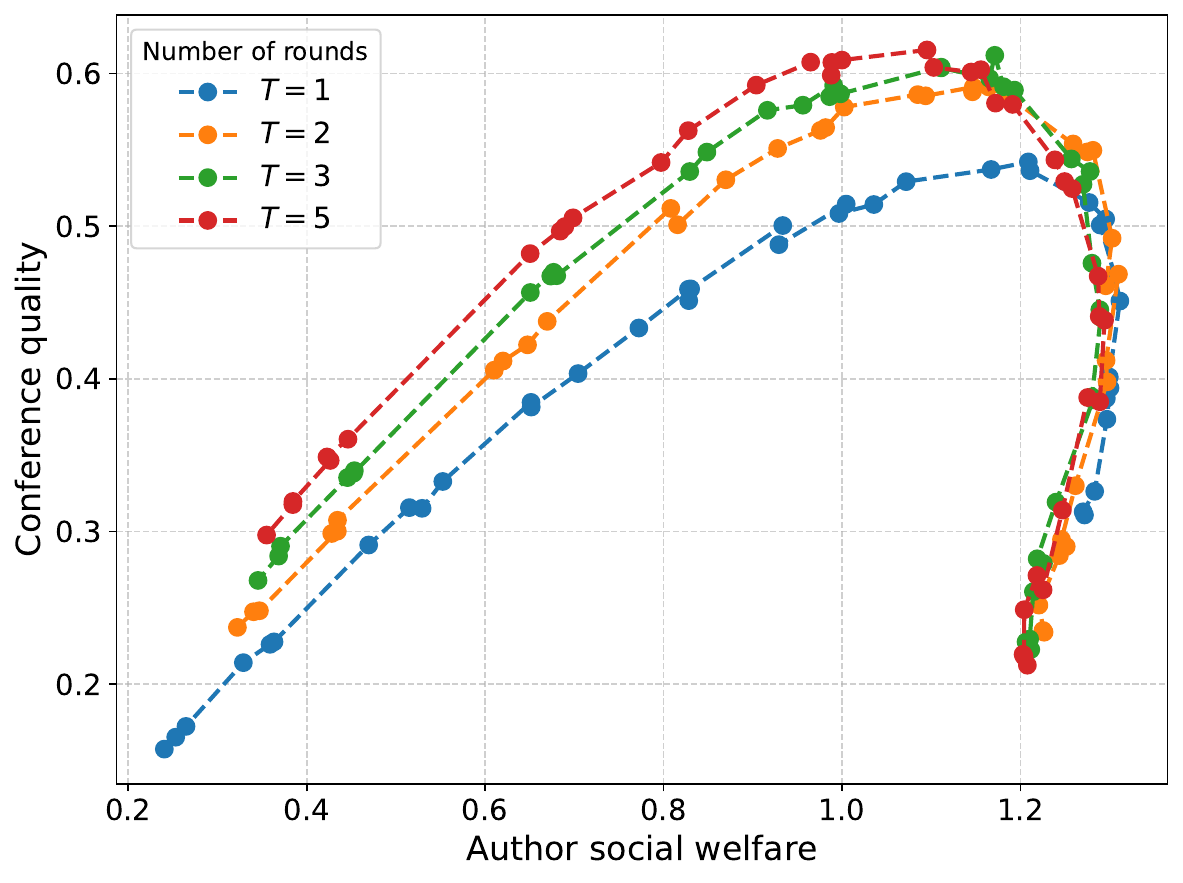}
         \captionsetup{size=}
         \caption{Categorical model with noisy authors.}
     \end{subfigure}}
     {The QA-tradeoff Curves under $\boldsymbol{T}$-round Fixed Threshold Policies in the (a) Continuous Model and (b) Categorical Model. \label{fig:QA_tradeoff_T}}
     {The example is in (a) the \emph{$(\sigma=0.5, \mu_{\QualDist} = -1, \sigma_{\QualDist} = 2, \NumReviews = 1, \TD = 0.7)$-Double Gaussian noiseless-author model} and (b) the \emph{$(\lambda_A=1, \lambda_R=1, \NumReviews = 3,\TD=0.7)$-ICLR$^{2020, 6}$  model}. \dkcomment{(b) does not have $U^{(c)}$ in the $y$-axis label. (a) is missing ``Number of rounds'' in description of colors.}}
\end{figure}

We present the analogous results for the QA-tradeoff curves in \cref{fig:QA_tradeoff_T}. Interestingly, varying $T$ does not reduce author welfare along the Pareto frontiers. While a smaller $T$ limits the number of resubmission opportunities --- potentially lowering the chance of acceptance --- it also enables the conference to adopt a more lenient acceptance threshold to maintain quality. This increases the acceptance probability in each individual round. 
That said, we also observe that smaller values of $T$ result in a lower maximum achievable conference quality.

\subsection{More Fine-Grained Institutional Memory, and Noisy Authors}
\label{sec:memory_noisy}

Next, we investigate the extent to which more fine-grained institutional memory --- different acceptance thresholds in different rounds and reuse of past reviews of a paper --- may further improve the QB-tradeoff for the conference.
We do so in the setting of noisy authors, and therefore --- as before --- use 
ABM experiments. 

\subsubsection{Agent-based Model Setup}

We use a simplified binary model to study review policies with more fine-grained institutional memory: in this model, both the paper quality and the review signal are binary.
This simplification is necessary: when authors know that the acceptance of their paper will depend on all historical reviews, their decisions (even under myopic strategies introduced in \cref{subsec:noisy_setup}), too, will depend on all historical reviews. 
This makes the simulations extremely computationally expensive for a large number of authors, especially for papers that have been rejected many times.
Considering a binary model greatly simplifies the decision-making process, allowing us to compute not only the myopic responses but also the optimal responses from authors.

The \emph{$(\AUTHSIGPROB, \REVSIGPROB, \NumReviews, \TD, T)$-binary model} is a categorical model in which there are two paper qualities $\{-1, +1\}$ and two review signals.  One paper quality is referred to as negative (``bad papers''), and the other as positive (``good papers''). The prior is such that each paper is equally likely to be bad or good. Authors receive the correct signal about their papers with probability $\AUTHSIGPROB$ and otherwise receive the opposite signal. Similarly, each reviewer receives the correct signal about their assigned paper with probability $\REVSIGPROB$ and otherwise receives the opposite signal. The parameters $\NumReviews$, $\TD$, and $T$ are the number of solicited reviews, the discount factor, and the number of times a paper can be submitted, respectively. 

As before, we incorporate the endogenous value of the conference by starting with a suitable $\ConfValue$, and iteratively updating it based on the value of the previous conference.
More specifically, we initialize the conference value $\ConfValue_1 = 2$ in the first round and iteratively update $\ConfValue_t = (1-\lambda_V) \cdot \ConfValue_{t-1} + \lambda_V \cdot \left(1+2 \cdot \ExpectC{Q_i}{\text{paper } i \text{ was accepted in round } t}\right)$, where $\lambda_V = 0.5$. The choice of multiplying by a factor of 2 is so that if all good papers are accepted, the conference value is larger than the initial value $\ConfValue_1$; alternatively, we could define the model as having paper qualities $\QualSet = \SET{-2,2}$ instead of $\SET{-1,1}$.

The \emph{optimal strategy} assumes that the authors are best-responding to the conference's policy. That is, given that the game lasts for $T$ rounds, the author uses backward induction with dynamic programming to optimize the decision. Specifically, she first reasons about the expected utility in round $T$ given all possible histories of reviews; she then similarly updates the reviews-utility mapping in rounds $T-1$, $T-2$, and so on. Eventually, she can infer the optimal action in the current round. 

Our approach involves searching over policies with $T$ submissions per paper. Unfortunately, the number of such policies grows exponentially in $T$. We therefore restrict our experiments to $T = 5$, and set $\NumReviews = 3$ for the binary model.

We generate candidate policies by searching over different thresholds.
For $T$-round fixed-threshold policies, we select the $40$ candidate thresholds  $\tau \in \{-1, -0.95,\ldots, 0.95\}$; in each run, one of these thresholds is used for all rounds. 
For the other two types of policies, to reduce the number of samples, we only enumerate over the thresholds $\tau^{(t)}$ for the first three rounds while fixing $\tau^{(t)}$ for $t = 4, 5$. 
More precisely, we select $40^3$ candidate threshold vectors $\boldsymbol{\tau}$, as follows:
for each $t \in \{1, 2, 3\}$, we select $40$ thresholds $\tau^{(t)}$ from $\{-1, -0.95,\ldots, 0.95\}$ which gives us $40^3$ vectors of length $3$. 
For each $t=4, 5$, we fix $\tau^{(t)}$ such that the paper is accepted if and only if two of the three newly sampled reviews in round $t$ are positive. 

\subsubsection{Results}\label{sec:memory_noisy_result}

\Cref{fig:memory_policies} shows the Pareto optimal points of the QB-tradeoff for each of the three review policies. 
We summarize the main takeaways as follows:
\begin{itemize}
    \item Compared with time-limited fixed-threshold policies, round-dependent threshold policies (the variable-threshold policy and the review-following policy) tend to have dominating QB-tradeoffs. Our results suggest that having historical reviews follow submissions can help improve the QB-tradeoff. However, all review policies can achieve similar maximum conference quality (the review-following policy has a slightly higher maximum quality, while the improvement is rather marginal). The main advantage of the review-following policy seems to be a reduction in the review burden instead of improving the conference quality.
    \item We further observe that for both variable-threshold policies and review-following policies, with round-dependent thresholds, the Pareto optimal thresholds that lead to large conference qualities tend to have the following pattern: review papers strictly in the first two rounds and more leniently after that.
\end{itemize}


\begin{figure}
     \FIGURE
     {\includegraphics[width=0.65\textwidth]{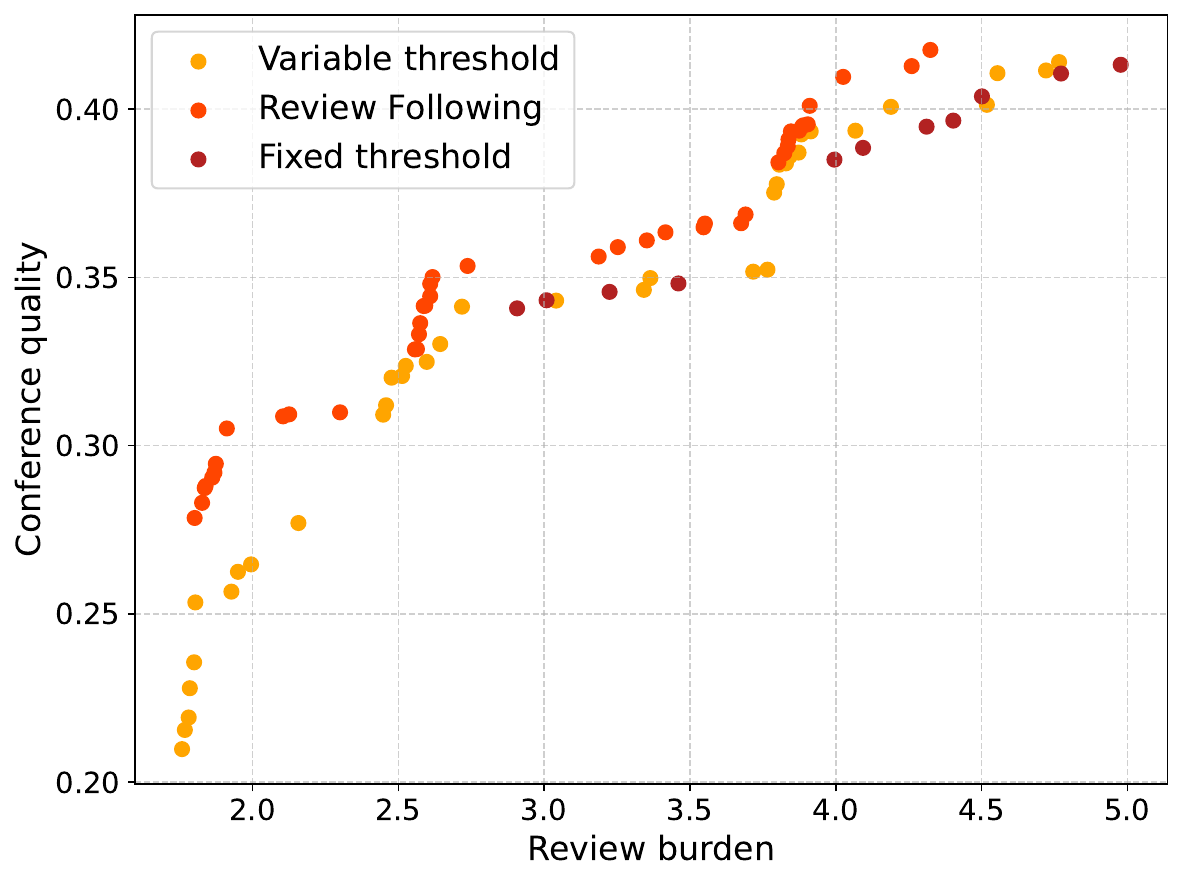}}
     {The Pareto Optimal Points of the QB-tradeoff With Noisy Authors in the Binary Model. \label{fig:memory_policies}}
     {The example is in the \emph{$(\alpha = 0.75, \beta=0.75, m=3, \TD=0.7, T=5)$-binary model}. }
\end{figure}

We provide some intuition for both results. Papers that are still being submitted after multiple rejections tend to be high-quality papers whose authors received strong positive signals. Variable-threshold policies can help reduce the review burden because the conference can induce a desired self-selection (where only the good papers are submitted) while relaxing the acceptance thresholds for papers that are resubmitted for a large number of rounds. This implies that variable-threshold policies can accept good papers within fewer rounds, which reduces the review burden.
Furthermore, in order to induce a desired self-selection, the conference should be strict for the first few rounds; this will cause more authors of low-quality papers to refrain from submitting. This leads to the pattern that the Pareto optimal threshold sequence tends to be strict at the beginning and more lenient later on.

\begin{remark}[Discussions and limitations of the binary model]
    We note that due to computational concerns, our conclusion here is largely based on the ABM experiments in the simplest binary model. Given that there seems to be general resistance to the idea of review-following (perhaps for fear that a bad, but erroneous, review may bring bias to the following reviews and doom a paper's chances for a long time), even if these modest gains generalize, this idea is unlikely to be a game changer.
\end{remark}

\end{document}